\documentclass[journal,onecolumn]{IEEEtran}
\usepackage{amsmath}
\usepackage{booktabs}
\usepackage{latexsym}
\usepackage{mathrsfs}
\usepackage{amsthm}
\usepackage{amssymb}
\usepackage{amsfonts}
\usepackage{amsbsy}

\usepackage[shortlabels]{enumitem}
\usepackage{url}
\usepackage{array}
\usepackage{pdflscape}
\usepackage{xcolor}
\usepackage{stmaryrd}
\usepackage{verbatim}
\usepackage{braket}

%%% theorem environments
%\newcommand{\Note}[1]{\marginpar{\tiny\emph{#1}}}
\newcommand{\Ff}{{\mathbb F}}

     %kernel
\newcommand\rank{\operatorname{rank}}   %rank
\newcommand\cc{{\mathcal C}}        %

\newcommand{\F}{ {{\mathbb F}} }

\DeclareMathOperator{\Res}{Res}
\DeclareMathOperator{\Tr}{Tr}
\DeclareMathOperator{\Gal}{Gal}
\DeclareMathOperator{\supp}{supp}
\DeclareMathOperator{\diag}{diag}

%%% theorem environments
\theoremstyle{plain}% default
\newtheorem{thm}{Theorem}
\newtheorem{lem}[thm]{Lemma}
\newtheorem{prop}[thm]{Proposition}

\newtheorem{cor}{Corollary}
\theoremstyle{definition}

\newtheorem{example}{Example}
\newtheorem{remark}{Remark}
%%%%%
%%% Added by Fred Ezerman
\usepackage[colorlinks=true]{hyperref}

\setcounter{MaxMatrixCols}{40}

\begin{document}
\title{On Linear Codes Whose Hermitian Hulls are MDS}

\author{Gaojun Luo, Lin Sok, Martianus Frederic Ezerman, and San Ling
\thanks{G. Luo, Lin Sok, M. F. Ezerman, and S. Ling are with the School of Physical and Mathematical Sciences, Nanyang Technological University, 21 Nanyang Link, Singapore 637371, e-mails: $\{\rm gaojun.luo, fredezerman, lin.sok, lingsan\}$@ntu.edu.sg.}
\thanks{G. Luo, L. Sok, M. F. Ezerman, and S. Ling are supported by Nanyang Technological University Research Grant No. 04INS000047C230GRT01.}

%\thanks{Copyright (c) 2023 IEEE. Personal use of this material is permitted. However, permission to use this material for any other purposes must be obtained from the IEEE by sending a request to pubs-permissions@ieee.org.}
}

%\markboth{}%
%{Shell \MakeLowercase{\textit{et al.}}: Bare Demo of IEEEtran.cls for IEEE Transactions on Magnetics Journals}

\maketitle

\begin{abstract}
Hermitian hulls of linear codes are interesting for theoretical and practical reasons alike. In terms of recent application, linear codes whose hulls meet certain conditions have been utilized as ingredients to construct entanglement-assisted quantum error correcting codes. This family of quantum codes is often seen as a generalization of quantum stabilizer codes. Theoretically, compared with the Euclidean setup, the Hermitian case is much harder to deal with. Hermitian hulls of MDS linear codes with low dimensions have been explored, mostly from generalized Reed-Solomon codes. Characterizing Hermitian hulls which themselves are MDS appears to be more involved and has not been extensively studied.

This paper introduces some tools to study linear codes whose Hermitian hulls are MDS. Using the tools, we then propose explicit constructions of such codes. We consider Hermitian hulls of both Reed-Solomon and non Reed-Solomon types of linear MDS codes. We demonstrate that, given the same Hermitian hull dimensions, the codes from our constructions have dimensions which are larger than those in the literature.
\end{abstract}

\begin{IEEEkeywords}
Algebraic geometry codes, Hermitian inner product, hull of linear codes, MDS codes, generalized Reed-Solomon codes.
\end{IEEEkeywords}

\section{Introduction}\label{sec:intro}

Let $q$ be a prime power and let $\Ff_q$ be the finite field with $q$ elements. An $[n,k,d]_{q^2}$ linear code $\cc$ is a $k$-dimensional subspace of $\Ff_{q^2}^n$ with minimum (Hamming) distance $d$. Such a code is \emph{maximum distance separable} (MDS) if $d=n-k+1$. Linear codes $\cc_1$ and $\cc_2$ over $\Ff_q$ are \emph{monomially-equivalent} if $\cc_1$ can be transformed into $\cc_2$ by a permutation of the coordinates of  each codeword in $\cc_1$ and the multiplication of every coordinate of the codeword by a nonzero element in $\Ff_q$. Let $S$ be a subset of $[n]$. The \emph{punctured code} of $\cc$ with respect to $S$ is the code obtained by deleting the entries indexed by the set $S$ in each codeword of $\cc$.

The intersection of a linear code and its dual code has become a major area of interest in coding theory. Given an inner product, typically Euclidean or Hermitian, such an intersection set may have an important role in addressing code equivalence \cite{Sendrier2000}, orthogonal direct sum masking \cite{Bringer2014,Carlet2016}, and quantum error correction \cite{Calderbank1998,Poulin2005,Brun2006}.

The \emph{Hermitian inner product} of vectors $\mathbf{u}=(u_1,\ldots,u_n), \mathbf{v}=(v_1,\ldots,v_n) \in \Ff_{q^2}^n$ is $\langle\mathbf{u},\mathbf{v}
\rangle_{{\rm H}}=\sum_{i=1}^nu_iv_i^q$. The \emph{Hermitian dual} of $\cc$ is
\[
\mathcal{C}^{\perp_{\rm H}} = \left\{\mathbf{u} \in \mathbb{F}_{q^2}^n : \langle\mathbf{u},\mathbf{v}\rangle_{{\rm H}}=0,
\mbox{ for all } \mathbf{v} \in \mathcal{C} \right\}.
\]
The intersection $\cc\cap\mathcal{C}^{\perp_{\rm H}}$, denoted by ${\rm Hull}_{\rm H}(\cc)$, is the \emph{Hermitian hull} of $\cc$. The notion of hull was introduced in \cite{Assmus1990} to classify finite projective planes. There are two extremal cases. When ${\rm Hull}_{\rm H}(\cc)=\cc$ or $\mathcal{C}^{\perp_{\rm H}}$, then $\cc$ is \emph{Hermitian self-orthogonal} or \emph{Hermitian dual-containing}, respectively. A code $\cc$ is \emph{Hermitian linear complementary dual} (LCD) if ${\rm Hull}_{\rm H}(\cc) = \{\mathbf{0}\}$. The \emph{Euclidean inner product} of vectors $\mathbf{u}=(u_1,\ldots,u_n), \mathbf{v}=(v_1,\ldots,v_n) \in \Ff_{q}^n$ is $\langle\mathbf{u},\mathbf{v}\rangle_{{\rm E}}=\sum_{i=1}^nu_iv_i$. Analogously, duality and hull are applicable in the Euclidean inner product setup, with the notation involving ${\rm H}$ changed into ${\rm E}$.

\subsection{Known Results on Hulls}

Euclidean LCD codes were first studied in \cite{Massey1992} where the existence of asymptotically good LCD codes was established. In their 2014 paper \cite{Bringer2014}, Bringer {\it et al.} utilized binary Euclidean LCD codes to protect against side-channel and fault injection attacks. This practical application triggered a still growing body of literature. The works in \cite{Chen2018,Li2017,Jin2016,Jin2018,Sok2018,Liu2021,Shi2021,Carlet2018,Wu2021} and the references therein form a partial list in an already vast literature. A breakthrough on the studies of LCD codes came in the work of Carlet {\it et al.} in \cite{Carlet2018}. The main insight is that any $q$ ary linear code is equivalent to a Euclidean LCD code when $q>3$ and any $q^2$ ary linear code is equivalent to a Hermitian LCD code when $q>2$. The remaining cases motivate researchers to construct binary or ternary Euclidean LCD codes and quaternary Hermitian LCD codes. The follow-up works in \cite{Zhou2019,Bouyuklieva2021,Harada2021,Lu2020,Araya2020} exemplify the challenges that remain to be addressed and the progress made.

Linear codes with certain duality properties are important ingredients in the construction of quantum codes. Three types of quantum codes, namely, \emph{quantum (stabilizer) error-correcting codes} (QECCs) \cite{Calderbank1998,Lisonek2014}, \emph{entanglement-assisted quantum error-correcting codes} (EAQECCs) \cite{Brun2006}, and \emph{quantum subsystem codes} (QSCs) \cite{Aly2006}, also known as \emph{operator QECCs}, have parameters that can be determined from the properties of the hulls of the relevant linear codes. It is known that any Euclidean or Hermitian self-orthogonal code corresponds to a QECC. Linear codes with large-dimensional hulls can generate QECCs with good minimum distances \cite{Lisonek2014,Ezerman2019}. In a quantum communication system, the encoder and the decoder may be able to share \emph{error-free entangled bits}, often abbreviated as \emph{ebits}, prior to information exchanges. This leads to a generalization of error control in the quantum setup from QECCs to EAQECCs. Any linear code corresponds to two EAQECCs. Guenda, Jitman, and Gulliver showed in \cite{Guenda2017} that the number of pre-shared pairs of an EAQECC is determined by the dimension of the hull of the corresponding linear code.

In the classical setup, the \emph{Singleton bound} is a measure of optimality. Codes whose parameters meet the equality in the bound are labeled \emph{maximum distance separable} (MDS) codes. The bound has natural analogues in all three types of quantum error-correcting codes mentioned above. To construct MDS EAQECCs, some classes of \emph{generalized Reed-Solomon} (GRS) codes with Euclidean or Hermitian hulls of flexible dimensions have been built in \cite{Luo2019,Fang2020,Gao2021,Li2019,Tian2020}. By calculating the dimensions of the Hermitian hulls of cyclic and constacyclic classical MDS codes, several classes of MDS EAQECCs have been devised in \cite{Chen2021,Chen2021a,Qian2019,Qian2017,Wang2019}. Gan {\it et al.} in \cite{Gan2021} went further by characterizing cyclic codes with Hermitian hulls of given dimensions. Euclidean and Hermitian hulls of algebraic geometry codes have also been studied in \cite{Pereira2021,Sok2022a,Sok2022b}. Several classes of linear codes with one-dimensional Euclidean hulls were built in \cite{Qian2021,Li2019a} by using characters over abelian groups. Very recently, Luo {\it et al.} proved in \cite{Luo2021} that every $q$ ary linear code with $\ell$-dimensional Euclidean hull is equivalent to a linear code with $j$-dimensional Euclidean hull when $q>3$, for each $j\leq\ell$. With a slight adjustment, the same proof confirms that every $q^2$ ary linear code with $\ell$-dimensional Hermitian hull is equivalent to a linear code with $j$-dimensional Hermitian hull when $q>2$, for each $j\leq\ell$.

\subsection{Hermitian Hulls which are MDS}

In addition to the computation of the dimensions of the hulls of linear codes, determining their minimum distances have also been explored in \cite{SALAHA.2009,Qian2013,Du2020}. Each linear code $\cc$ leads to a quantum subsystem code $\mathcal{Q}$. Its parameters are determined by the dimension of $\cc$ and the parameters of either ${\rm Hull}_{\rm E} (\cc)^{\perp_{\rm E}}$ or ${\rm Hull}_{\rm H} (\cc)^{\perp_{\rm H}}$, as appropriate. A linear code whose Euclidean or Hermitian hull is MDS gives rise to an optimal subsystem code. Up to now, far too little attention has been paid to such linear codes. The only known construction of $q^2$ ary linear codes that are not Hermitian self-orthogonal, but whose Hermitian hulls are MDS, was studied in \cite{Qian2013} based on cyclic codes.

The Hermitian hull of an MDS code is not necessarily MDS. Constructions of GRS codes whose Hermitian hulls vary in terms of dimensions were proposed in \cite{Fang2020,Li2023}. The Hermitian hulls are of the form
\[
\left\{(a_1f(b_1),\ldots,a_nf(b_n)) :
f(x) = h(x) \prod_{i=1}^s(x-b_i) \in \Ff_q[x] \mbox{ with } \deg(h(x)) < k-s \right\},
\]
where $a_i\in\Ff_{q^2}^*$ and $b_i\in\Ff_{q^2}$ for each $1\leq i \leq n$. One can then verify that all codewords of the Hermitian hulls have at least one zero coordinate. This clearly implies that the Hermitian hulls are not MDS. The GRS codes in \cite{Fang2020,Li2023} are monomially-equivalent to Hermitian self-orthogonal codes. Thus, the constructions in \cite{Fang2020,Li2023} yield MDS QECCs as well. By the recent propagation rule in \cite[Theorem 12]{Luo2021}, the MDS EAQECCs constructed in \cite{Fang2020,Li2023} can be derived from the resulting MDS QECCs. Their MDS EAQECCs have the same minimum distances as the corresponding MDS QECCs for some fixed length. Starting with a given QECC, the said propagation rule, which we reproduce below as Lemma \ref{lem:more} for convenience, shows that the dimensions of EAQECCs obtained from QECC can be increased by $i$ with the help of $i$ pre-shared entangled pairs. In addition, the MDS EAQECCs presented in \cite{Fang2020,Li2023} improve on the dimensions of the corresponding MDS QECCs.

In general, it is interesting to construct quantum MDS codes with large minimum distances without the help of any pre-shared entangled pair, \textit{i.e.}, for MDS QECCs. With the help of entanglement, as can be seen in the table of EAQECCs in \cite{Grassl:EAQECCtables}, given pre-shared entangled pairs, EAQECCs have larger minimum distances than the comparable QECCs for fixed length and dimension.

Comparing MDS EAQECCs and MDS QECCs in a meaningful way, given the same length and dimension, is not straightforward. The gain in the minimum distance in the entanglement-assisted setup comes with a price of creating, sharing, and maintaining the purity of the pre-shared entangled pairs. An MDS code whose Hermitian hull is also MDS simultaneously leads to both an MDS QECC and an MDS EAQECC since the Hermitian hull is a Hermitian self-orthogonal code. In this paper, we point out that MDS codes whose Hermitian hulls are also MDS enable us to compare the constructed QECCs and EAQECCs directly, in terms of their trade off regarding the gain in the minimum distance and the required number of entangled pairs. Using MDS codes whose Hermitian hulls are MDS, the minimum distances of the constructed MDS EAQECCs can be improved by $s$ with the help of $2s$ pre-shared pairs with respect to the constructed MDS QECCs of the same length and dimension.

\subsection{Our Contributions}

In this paper, we investigate linear codes whose Hermitian hulls are MDS. The techniques and results can be summarized as follows.
\begin{enumerate}[1.]
\item Focusing on the Hermitian hulls of linear codes, we develop tools to study linear codes whose Hermitian hulls are MDS. We construct new families of such linear codes and highlight the novelty by comparing the codes to those covered in prior works.

\begin{itemize}
	\item Lemma \ref{lemma8} presents necessary and sufficient conditions for an $[m,k]_{q^2}$ GRS code, for $m \le q^2$, to have a Hermitian hull that contains an $[m,\ell]_{q^2}$ GRS code, for $\ell \le k$. The key idea is to examine the punctured code $\mathcal{P}(\cc)$, as defined by Rains in \cite{Rains1999}, of a given $[q^2,k]_{q^2}$ code $\cc$. We can apply the puncturing technique whenever a suitable codeword of Hamming weight $m$ in $P(\cc)$ can be identified.
	
	\item Theorem \ref{GRSMDSHULL} simplifies the sufficient conditions of Lemma \ref{lemma8} and provides a sufficient condition for an $[m,k]_{q^2}$ GRS code, for $m \le q^2$, to have as its Hermitian hull an $[m,k-1]_{q^2}$ GRS code. We use the theorem to build new classes of Hermitian hulls in Theorems \ref{GRScon1} to \ref{GRScon4}. These classes contain $(k-1)$-dimensional MDS codes of shorter lengths which are derived from $k$-dimensional MDS codes of length $q^2$.
\end{itemize}

\item The Euclidean dual of an algebraic geometry (AG) code is well-known, {\it e.g.}, in the treatment by Stichtenoch in \cite{Stich88}. The Hermitian dual of a general $q^2$ ary AG code has been less understood. Prior works, especially in the context of quantum stabilizer codes, tend to focus on Hermitian self-orthogonality. A recent work of Pereira {\it et al.} on the Hermitian hull of AG codes in \cite{Pereira2021} has not focused on the MDS case. The hull dimensions have not been explicitly determined.

We generalize further to accomplish the following tasks.
\begin{itemize}

\item We explicitly determine the Hermitian hull dimension of a special class of two-point rational AG codes in Theorem \ref{thm:rankcomp}. To obtain such a result, we carefully select a set of evaluation points such that the residues of the Weil differential form on the points are elements in $\Ff_q$. The main tool to derive the dimension of the Hermitian hull is the celebrated Riemann-Roch Theorem. Theorem \ref{thm:arbitrary-hull} extends the result of Theorem \ref{thm:rankcomp} to get a Hermitian hull of an \emph{arbitrary} dimension.

\item Corollaries \ref{cor:1} to \ref{cor:3} provide more explicit constructions of families of linear codes whose Hermitian hulls are MDS. Most known results on the Hermitian hulls stem from Reed-Solomon-type codes, which are one-point rational AG codes. We obtain our results from two-point rational AG codes. Our codes are non Reed-Solomon-type and, on the same length and dimension, are not monomially equivalent to codes of the Reed-Solomon-type. In particular, the codes we have constructed are not monomially equivalent to known Hermitian self-orthogonal codes. We note that a more general equivalent transformation may exist based on the respective representations as AG codes, the required computation to verify equivalence in general is significantly more complex.
	
Based on Hermitian self-orthogonal codes, Theorem \ref{thm:MDSHull} paves the way to enlarge the dimensions of the Hermitian hulls of some known GRS codes in the literature. Such an improvement leads to EAQECCs with new parameters that, when we fix the length and the dimension, can be shown to improve the error-control capability of known stabilizer codes, with some trade-off.
\end{itemize}
\end{enumerate}

This paper is organized as follows. We collect notation and definitions of cyclic codes and related AG codes as well as useful known results in the preliminary Section \ref{sec:pre}. Section \ref{sec:3} contains the main results of the paper; Subsection \ref{subsec:GRS typed} deals with Hermitian hulls of linear codes from related cyclic codes and GRS codes; Subsection \ref{subsec:non-GRS typed} applies two-point rational AG codes to construct linear codes whose Hermitian hulls are MDS and have arbitrary dimensions. Section \ref{sec:4} discusses application in quantum error-correcting codes. The last section concludes the paper and proposes some possible research directions.

\section{Preliminaries}\label{sec:pre}

This section recalls some basic definitions and results about cyclic codes, GRS codes, and algebraic geometry codes. Given two integers $m$ and $n$, let $[m,n]$ denote the set $\{m,m+1,\ldots,n\}$, if $m\leq n$, and the empty set $\emptyset$, if $m>n$.

\subsection{Cyclic Codes and Generalized Reed-Solomon codes}\label{subsec:cyclicGRS}

An $[n,k,d]_q$ code $\cc$ is {\it cyclic} if $(c_{n-1},c_0,\ldots,c_{n-2})\in\cc$ for each $(c_0,c_1,\ldots,c_{n-1})\in\cc$. Each codeword $(c_0,c_1,\ldots,c_{n-1})$ of $\cc$ can be represented by a polynomial $c(x)=c_0+c_1x+\ldots+c_{n-1}x^{n-1} \in \Ff_q[x]$. This representation allows for the identification of $\cc$ as an ideal in $\Ff_q[x]/\langle x^n-1\rangle$. The monic polynomial $g(x)$ of degree $n-k$ in the ideal is a divisor of $x^n-1$ and is called the \emph{generator polynomial} of $\cc$.

We denote by ${\rm ord}_n(q)$ the smallest positive integer $r$ such that $q^r \equiv 1 \pmod{n}$. If $\alpha$ is a primitive $n^{\rm th}$ root of unity in $\Ff_{q^r}$, then the roots of $x^n-1$ are $\alpha^j$ for $j \in[0,n-1]$. The collection $D=\{i\in[0,n-1]:g(\alpha^i)=0\}$ is the \emph{defining set} of a cyclic code $\cc$. Its complement $[0,n-1]\setminus D$ is the \emph{generating set} of $\cc$. Since $D$ is the union of some $q$ cyclotomic cosets modulo $n$, we can assume that the number of such $q$ cyclotomic cosets is $t$ and write $D=\bigcup_{\ell=1}^tC_{i_\ell}$ with $C_{i_\ell}$ being $q$ cyclotomic coset containing $i_\ell$. Using the $t\times n$ matrix
\[
H=\begin{pmatrix}
1 & \alpha^{i_1} &\cdots & \alpha^{i_1(n-1)}\\
1 & \alpha^{i_2} &\cdots & \alpha^{i_2(n-1)}\\
\vdots& \vdots &  \ddots & \vdots\\
1 & \alpha^{i_{t}} &\cdots & \alpha^{i_{t}(n-1)}\\
\end{pmatrix},
\]
a codeword $\mathbf{c}$ is in the cyclic code $\cc$ if and only if $H\mathbf{c}^{\top}=\mathbf{0}$. Let $\Tr_m$ denote the \emph{trace function} from $\Ff_{q^m}$ onto $\Ff_q$. The cyclic code $\cc$ has the following representation in terms of the trace function.
\begin{lem}{\rm (\cite{Delsarte1975})}\label{trace}
Let $\cc$ be a cyclic code of length $n$ over $\Ff_q$. Let $r={\rm ord}_n(q)$ and let $\alpha$ be a primitive $n^{\rm th}$ root of unity in $\Ff_{q^r}$. Let the generating set of $\cc$ be $\bigcup_{\ell=1}^sC_{i_\ell}$, where $C_{i_\ell}$ is the $q$ cyclotomic cosets modulo $n$ that contains $i_\ell$. If $|C_{i_\ell}|= m_\ell$ for any $\ell\in[s]$, then
\[
\cc=\left\{\left(\sum_{\ell=1}^s \Tr_{m_\ell}(\theta_\ell\alpha^{-ui_\ell})\right)_{u\in[0,n-1]}: \theta_\ell\in\Ff_{q^{m_\ell}},1\leq\ell\leq s\right\}.
\]
\end{lem}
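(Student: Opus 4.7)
The plan is to prove the trace formula via the discrete Fourier transform (Mattson--Solomon polynomial) combined with the Galois invariance of codewords having entries in $\Ff_q$. Since $\gcd(n,q)=1$ (implicit in the existence of a primitive $n^{\rm th}$ root of unity $\alpha$ in characteristic $p \mid q$), the map sending a vector $\mathbf{c} \in \Ff_{q^r}^n$ to its DFT $\widehat{c}_j := c(\alpha^j) = \sum_{u=0}^{n-1} c_u \alpha^{uj}$ is an $\Ff_{q^r}$-linear isomorphism with inverse $c_u = n^{-1}\sum_{j=0}^{n-1} \widehat{c}_j\alpha^{-uj}$. By the definition of the generator polynomial and the defining set $D = [0,n-1] \setminus \bigcup_{\ell=1}^s C_{i_\ell}$, a vector $\mathbf{c} \in \Ff_q^n$ lies in $\cc$ if and only if $\widehat{c}_j = 0$ for all $j \in D$; equivalently, $\supp(\widehat{\mathbf{c}}) \subseteq \bigcup_{\ell=1}^s C_{i_\ell}$.

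Next I would exploit the $\Ff_q$-rationality of $\mathbf{c}$. Applying the Frobenius $x \mapsto x^q$ to $\widehat{c}_j = \sum_u c_u \alpha^{uj}$ and using $c_u^q = c_u$ yields $\widehat{c}_{qj \bmod n} = \widehat{c}_j^q$. Hence within each cyclotomic coset $C_{i_\ell} = \{q^k i_\ell \bmod n : 0 \leq k \leq m_\ell-1\}$, the Fourier coefficients form a Frobenius orbit, $\widehat{c}_{q^k i_\ell} = \widehat{c}_{i_\ell}^{q^k}$, and $\widehat{c}_{i_\ell}$ is fixed by the $q^{m_\ell}$-th power Frobenius so it lies in $\Ff_{q^{m_\ell}}$. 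Setting $\theta_\ell := n^{-1}\widehat{c}_{i_\ell} \in \Ff_{q^{m_\ell}}$ (the scalar $n^{-1}$ can be absorbed because $n \in \Ff_q^*$), the partial sum along one coset becomes
\[
\sum_{j \in C_{i_\ell}} n^{-1}\widehat{c}_j \alpha^{-uj} = \sum_{k=0}^{m_\ell - 1}\bigl(\theta_\ell \alpha^{-u i_\ell}\bigr)^{q^k} = \Tr_{m_\ell}\bigl(\theta_\ell \alpha^{-u i_\ell}\bigr).
\]
Summing over $\ell = 1, \ldots, s$ then rewrites the inverse DFT as the formula claimed in the lemma, establishing one containment.

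For the reverse containment I would verify that any tuple $(\theta_1, \ldots, \theta_s) \in \prod_{\ell} \Ff_{q^{m_\ell}}$ produces a vector in $\cc$: each coordinate is a sum of traces, hence lies in $\Ff_q$, and by running the orbit computation in reverse its Fourier support is contained in $\bigcup_\ell C_{i_\ell}$, so it is annihilated on $D$. A dimension count closes the argument: the right-hand side parameterization is $\Ff_q$-linear and injective (one recovers $\theta_\ell$ as $n^{-1}\widehat{c}_{i_\ell}$), giving $\Ff_q$-dimension $\sum_{\ell=1}^s m_\ell$, which matches $k = n - \deg g = \bigl|\bigcup_{\ell=1}^s C_{i_\ell}\bigr|$.

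The main technical step is the bookkeeping of the Frobenius action on Fourier coefficients and its translation into the trace of an element of $\Ff_{q^{m_\ell}}$; once the identity $\widehat{c}_{qj} = \widehat{c}_j^q$ and the orbit structure of $C_{i_\ell}$ are in hand, the trace formula essentially writes itself. Everything else -- invertibility of the DFT, absorption of the $n^{-1}$ factor, and the dimension comparison -- is routine.
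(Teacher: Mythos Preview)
Your proof is correct and follows the standard Mattson--Solomon / discrete Fourier transform argument. The paper itself does not supply a proof of this lemma; it simply quotes it as a known result from Delsarte's 1975 paper, so there is nothing to compare against. Your write-up is a clean, self-contained justification of the cited fact.
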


The Hartmann-Tzeng bound in \cite{Hartmann1972} is a \emph{lower bound} on the minimum distance of a cyclic code.

\begin{lem}{\rm (Hartmann-Tzeng bound)}\label{HTbound} Let $\cc$ be an $[n,k,d]_q$ cyclic code with defining set $D$. Let $a$, $b$, and $c$ be integers such that $\gcd(b,n)=1$ and $\gcd(c,n)=1$. If $\{a+bi_1+ci_2 : i_1 \in[0,x-2] \mbox{ and } i_2\in[0,y]\}$ is contained in $D$, then $d\geq x+y$.
\end{lem}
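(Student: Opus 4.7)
The plan is to argue by contradiction via a Vandermonde-rank obstruction on the support of a hypothetical low-weight codeword. Suppose there exists a nonzero codeword $\mathbf{c}=(c_0,\ldots,c_{n-1})\in\cc$ of Hamming weight $w\leq x+y-1$, with support $S=\{s_1,\ldots,s_w\}\subseteq[0,n-1]$; I also tacitly assume $x\geq 2$, since the indexing set in the hypothesis is empty otherwise. For every $(i_1,i_2)\in[0,x-2]\times[0,y]$, the exponent $a+bi_1+ci_2$ belongs to $D$, so the corresponding row of the parity-check matrix forces
\[
\sum_{k=1}^{w}c_{s_k}\,\alpha^{(a+bi_1+ci_2)\,s_k}=0.
\]
Setting $\beta_k:=c_{s_k}\alpha^{as_k}$, $\gamma_k:=\alpha^{bs_k}$, and $\delta_k:=\alpha^{cs_k}$, this becomes $\sum_k \beta_k\gamma_k^{i_1}\delta_k^{i_2}=0$ for all such $(i_1,i_2)$.

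Next I would package these scalar equations into the single matrix identity $MBN^{\top}=\mathbf{0}$, where $M=(\gamma_k^{i_1})$ has size $(x-1)\times w$, $N=(\delta_k^{i_2})$ has size $(y+1)\times w$, and $B=\diag(\beta_1,\ldots,\beta_w)$ is invertible because none of the $c_{s_k}$ vanishes on the support. Here the coprimality hypotheses enter decisively: $\gcd(b,n)=1$ makes the residues $bs_k\pmod n$ pairwise distinct, so the $\gamma_k$ are distinct $n^{\text{th}}$ roots of unity in $\Ff_{q^r}$, and the identical argument using $\gcd(c,n)=1$ yields pairwise distinct $\delta_k$. Consequently $M$ and $N$ are Vandermonde-type matrices with distinct nodes, giving $\rank(M)=\min(x-1,w)$ and $\rank(N^{\top})=\min(y+1,w)$.

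The final step is a case split on $w$. If $w\leq x-1$, then $M$ is injective on columns, so $BN^{\top}=\mathbf{0}$, and invertibility of $B$ forces $N^{\top}=\mathbf{0}$; but the first column of $N^{\top}$ (coming from $i_2=0$) is the all-ones vector, a contradiction. Otherwise $w\geq x$ and $\dim\Ker(M)=w-x+1$, and since every column of $BN^{\top}$ lies in $\Ker(M)$,
\[
\rank(N^{\top})=\rank(BN^{\top})\leq w-x+1.
\]
If $w\leq y$ then $\rank(N^{\top})=w$, forcing $x\leq 1$ and contradicting $x\geq 2$; if $w\geq y+1$ then $\rank(N^{\top})=y+1$, forcing $w\geq x+y$ and contradicting $w\leq x+y-1$. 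Every subcase is impossible, so no such $\mathbf{c}$ exists and $d\geq x+y$.

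The main delicate step is the passage from the coprimality assumptions to the exact Vandermonde ranks of $M$ and $N$, together with the bookkeeping of matrix dimensions in the factorization $MBN^{\top}=0$; once the identity and the distinctness of the nodes are in place, the rank comparison and the case analysis are essentially routine.
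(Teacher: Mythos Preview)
Your proof is correct. The paper does not supply its own proof of this lemma; it simply cites the result from \cite{Hartmann1972}, so there is nothing in the paper to compare your argument against. Your Vandermonde--rank factorization $MBN^{\top}=\mathbf{0}$ and the subsequent case split on $w$ versus $x-1$ and $y+1$ is a clean, self-contained route to the bound, and the use of $\gcd(b,n)=\gcd(c,n)=1$ to guarantee distinctness of the $\gamma_k$ and $\delta_k$ is exactly the place where those hypotheses are needed.
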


The \emph{extended code} of the cyclic code $\cc$ is
\[
\mathcal{E}(\cc)=\left\{(c_0,\ldots,c_{n-1},c_n):(c_0,\ldots,c_{n-1})\in\cc \mbox{ and } \sum_{i=0}^n c_i =0\right\}.
\]
From \cite[Subsection 1.5.2]{huffman2003}, the code $\mathcal{E}(\cc)$ has length $n+1$, dimension $k$, and a parity-check matrix
\begin{equation}\label{extendedcyclic}
H_{\mathcal{E}(\cc)} =
\begin{pmatrix}
1 & 1 &\cdots & 1 & 1\\
1 & \alpha^{i_1} &\cdots & \alpha^{i_1(n-1)}& 0\\
\vdots& \vdots &  \ddots & \vdots&\vdots\\
1 & \alpha^{i_{t}} &\cdots & \alpha^{i_{t}(n-1)}& 0\\
\end{pmatrix}.
\end{equation}

Reed-Solomon (RS) codes form the most famous class of cyclic codes. Using tools from finite fields, a generalization of RS codes emerges naturally. We denote the multiplicative group of $\Ff_q$ by $\Ff_q^*$. Let $b_1,\ldots,b_n$ be distinct elements of $\Ff_q$ that define a vector $\mathbf{b}=(b_1,\ldots,b_n)$ of length $n$. Let $\mathbf{a}=(a_1,\ldots,a_n) \in  (\Ff_q^*)^n$. For each $k\in[0,n]$, the generalized Reed-Solomon (GRS) code $GRS_k(\mathbf{b},\mathbf{a})$ is
\begin{equation}\label{eq:GRS}
	GRS_k(\mathbf{b},\mathbf{a}) := \left\{(a_1f(b_1),\ldots,a_nf(b_n)) : f(x) \in\Ff_q[x] \mbox{, with } \deg(f(x)) < k \right\}.
\end{equation}
From \cite[Chapter 9]{Ling2004}, we know that the code $GRS_k(\mathbf{b},\mathbf{a})$ is an $[n,k,n-k+1]_q$ MDS code with a generator matrix
\begin{equation}\label{GRSgenerator}
G_k=\begin{pmatrix}
	a_1 & a_2 &\cdots & a_n\\
	a_1b_1 & a_2b_2 &\cdots & a_nb_n\\
	\vdots& \vdots &  \ddots & \vdots\\
	a_1b_1^{k-1} & a_2b_2^{k-1} &\cdots & a_nb_n^{k-1}\\
\end{pmatrix}.
\end{equation}
As explained in the same reference, the dual code of $GRS_k(\mathbf{b},\mathbf{a})$ is also a GRS code.

\subsection{Algebraic Geometry Codes}

We recall useful notions and facts on AG codes. For brevity, we refer the readers to Stichtenoth's textbook \cite{Stich} for terms related to algebraic function fields that we have left undefined here.

Let $\cal{X}$ be a \emph{smooth projective curve}  over $\F_q$. Let $\F_q(\cal{X})$ denote the field of \emph{rational functions} of ${\cal X}$. Function fields of algebraic curves over $\F_q$ are
finite separable extensions of $\F_q(x)$, where $x$ is transcendental over $\F_q$. We identify points on ${\cal X}$ with \emph{places} of the function field $\F_q(\cal{X})$. A point $P$ on $\cal{X}$ is \emph{rational} if all of
its coordinates belong to $\F_q$. Rational points are places of degree one. The set of $\F_q$ rational points on $\cal{X}$ is denoted by  $\mathcal{X} (\F_q)$. For an element $\alpha \in \F_q$, we denote by $P_{\alpha}$ and $O$, respectively, the \emph{zero place} of $x-\alpha$ and the \emph{pole place} of $1/x$ .

A \emph{divisor} $G$ on $\cal{X}$ is the formal sum
$\sum_{P\in \cal{X}} n_P P$ with only finitely many nonzeros $n_P \in \mathbb{Z}$
and $G$ is \emph{rational} if $G^\sigma = G$ for any permutation $\sigma$ in the Galois group $\Gal(\overline{\F}_q/\F_q)$.
The \emph{support} of $G$ is $\supp(G):= \{ P \in {\cal{X}} \, : \, n_P \neq 0\}$. For $G=\sum_{P\in \cal{X}} n_P P$, the \emph{degree} of $G$ is $\deg(G):=\sum_{P\in \cal{X}} n_P \deg(P)$, with $\deg (P)$ being the cardinality of the orbit of $P$ under the action of $\sigma$.
Given two divisors $G=\sum_{P\in \cal{X}} n_P P$ and
$H =\sum_{P\in \cal{X}} m_P P$, we can define the operators $\wedge$ and $\vee$ on the divisors as follow
\[
G \wedge H:=\sum_{P\in \cal{X}} \min (m_P,n_P) \, P \mbox{ and }
G \vee H:= \sum_{P\in \cal{X}} \max (m_P,n_P) \, P.
\]
We say that $G \ge H$ if $n_P \ge m_P$ for \emph{all} places $P \in \cal{X}$.

Any nonzero rational function $f$ on the curve $\cal X$ can be expressed uniquely in terms of its irreducible factors as
\[
f(x) = \alpha \prod_{i=1}^s p_i(x)^{e_i} \mbox{, with } \alpha \in \F_q^{*}.
\]
Each irreducible polynomial $p_i(x)$ corresponds to a place called $P_i$. The \emph{valuation} of $f$ at $P_i$ is $v_{P_i}(f) := t$ if $p_i(x)^t$ divides $f(x)$ but $p_i(x)^{t+1}$ does not divide $f(x)$. The \emph{principal divisor} of $f$ is $(f):= \sum_{P\in \cal{X}} v_P(f) P$, where $v_P (f)$ is the \emph{normalized discrete valuation} corresponding to the place $P$. Let $Z(f)$ and $N(f)$ denote, respectively, the set of zeros and poles of $f$. The \emph{zero divisor} and \emph{pole divisor} of $f$ are given, respectively, by
\[
(f)_0:=\sum_{P\in Z(f)} v_{P}(f) \, P \mbox{ and }
(f)_\infty:=\sum_{P\in N(f)}-v_{P} (f) \, P.
\]
We can then write $(f)=(f)_0 - (f)_\infty$. The \emph{degree} of any principal divisor $(f)$ is $0$. Divisors $G$ and $H$ are \emph{equivalent} if there exists a rational function $f$ such that $G-H=(f)$.

Let $\Omega:=\{f \, dx \, : \, f\in \F_q(\cal{X})\}$ be the set of \emph{differential forms} on $\cal X$. For an $\alpha\in \F_q$ and $f \in \F_q(\cal{X})$ with $v_{P_\alpha}(f) \geq -1$, we can expand $f(x)$ in the neighborhood of $\alpha$ as
\[
f(x) = \ldots + \frac{a_{-1}}{x-\alpha} + a_0 + a_1(x-\alpha)+\ldots.
\]
Thus, at $P_\alpha$, the \emph{residue} ${\rm Res}_{P_\alpha}(f \,dx)$ is $a_{-1}$. We know from \cite[Proposition 1.2.1]{Stich} that, over $\F_{q^2}(x)$, the places consist of $q^2+1$ places of degree one, the pole place $O$, and the $q^2$ zero places corresponding to the elements of $\F_{q^2}$.

For a divisor $G$ on $\cal X$, we define
\[
{\cal L}(G):=\{f\in \F_q({\cal{X}}) \setminus \{0\} \, : \,
(f) + G \ge 0\} \cup \{0\} \mbox{ and }
{\Omega}(G):=\{\omega\in {\Omega \setminus \{0\}} \, : \,
({\omega})- G \ge 0\} \cup \{0\}.
\]
Both ${\cal L}(G)$ and ${\Omega}(G)$ are finite dimensional vector spaces over $\mathbb{F}_q$. For a differential form $\omega$ on $\cal X$, there exists a unique rational function $f$ on $\cal X$ such that $\omega = f \, dt$, with $t$ being a \emph{local uniformizing parameter}. The divisor class of a nonzero differential form is called the \emph{canonical divisor}. If $W$ is the canonical divisor and $\cal{X}$ is a curve of \emph{genus} $g$, then $\deg(W) = 2 \, g-2$.

The dimension of ${\cal L}(G)$ is determined by the Riemann-Roch theorem.

\begin{prop}\textnormal{\cite[Theorem 1.5.15 (Riemann-Roch)]{Stich}}\label{thm:Riemann-Roch} Let $g$ be the genus of a given smooth algebraic curve $\cal{X}$. If $W$ is a canonical divisor, then, for each divisor $G$,
\begin{equation}\label{thm:RR}
\dim( {\cal L}(G)) = \deg(G) + 1 -g + \dim ({\cal L}(W-G)).
\end{equation}
\end{prop}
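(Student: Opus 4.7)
The plan is to follow the classical adele-theoretic route developed in Stichtenoth's text. First I would introduce the adele ring $\mathbb{A}$ of the function field $\F_q(\cal{X})$, namely the restricted direct product of the completions $\F_q(\cal{X})_P$ with respect to the valuation rings $O_P$. For each divisor $G = \sum_P n_P P$, set $\mathbb{A}(G) := \{\alpha \in \mathbb{A} : v_P(\alpha_P) \geq -n_P \text{ for every place } P\}$. Diagonally embedding $\F_q(\cal{X})$ into $\mathbb{A}$ yields the core identification $\cal{L}(G) = \mathbb{A}(G) \cap \F_q(\cal{X})$, which converts the problem into a dimension count on adele quotients.

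Second, I would establish Riemann's inequality $\dim \cal{L}(G) \geq \deg(G) + 1 - g$ by tracking how $\dim \cal{L}(G)$ and $\dim_{\F_q}\bigl(\mathbb{A}(G_1)/\mathbb{A}(G_2)\bigr)$ behave as $G$ is modified by a single place, comparing dimensions through the natural short exact sequences. The resulting bounded quantity defines the genus as $g := \sup_G \bigl(\deg(G) + 1 - \dim \cal{L}(G)\bigr)$, which is finite and agrees with the geometric genus. Setting the index of speciality $i(G) := \dim \cal{L}(G) - \deg(G) - 1 + g$, a straightforward diagram chase identifies $i(G)$ with $\dim_{\F_q}\bigl(\mathbb{A}/(\mathbb{A}(G) + \F_q(\cal{X}))\bigr)$, a purely adelic invariant of the class of $G$.

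Third, I would translate this adele-quotient description into $\dim \cal{L}(W-G)$ via the theory of Weil differentials, defined as $\F_q$-linear maps $\omega : \mathbb{A} \to \F_q$ vanishing on $\F_q(\cal{X})$ and on some $\mathbb{A}(B)$, with $(\omega)$ defined as the maximal such $B$. A dual-space argument gives that the Weil differentials vanishing on $\mathbb{A}(G) + \F_q(\cal{X})$ form a space of dimension $i(G)$. The hard part will be proving that the $\F_q(\cal{X})$-module of all Weil differentials is one-dimensional; the standard route is a strong-approximation argument on $\mathbb{A}$ that rules out any putative second generator by exhibiting adeles on which it must agree (up to scaling) with the first. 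Once uniqueness is in hand, fix any nonzero Weil differential $\eta$ with canonical divisor $W$. Every Weil differential has the form $f\eta$ for a unique $f \in \F_q(\cal{X})$, and the vanishing condition on $\mathbb{A}(G)$ translates through the valuation rule $(f\eta) = (f) + W$ into $(f) + W - G \geq 0$, i.e., $f \in \cal{L}(W-G)$. Substituting $i(G) = \dim \cal{L}(W-G)$ into the definition of $i(G)$ yields the Riemann-Roch identity.
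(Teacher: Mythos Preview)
The paper does not prove this proposition at all: it is stated with a citation to \cite[Theorem 1.5.15]{Stich} and used as a black box thereafter. So there is no ``paper's own proof'' to compare against.

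Your outline is a faithful sketch of the adele/Weil-differential proof that Stichtenoth gives in the cited reference, so in that sense you are reproducing exactly what the citation points to. The steps you list---identifying $\mathcal{L}(G)$ with $\mathbb{A}(G)\cap\F_q(\mathcal{X})$, obtaining Riemann's inequality, defining the index of speciality adelically, proving the space of Weil differentials is one-dimensional over the function field, and translating the vanishing condition into membership in $\mathcal{L}(W-G)$---are the standard route and are correctly ordered. The one place where your sketch is thin is the ``hard part'' you flag yourself: the proof that the module of Weil differentials is one-dimensional over $\F_q(\mathcal{X})$ is not really a strong-approximation argument but rather a degree comparison (if $\omega_1,\omega_2$ were independent one would get $\dim\mathcal{L}(B)$ growing faster with $\deg B$ than Riemann's inequality allows). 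That said, for the purposes of this paper none of this detail is needed; simply citing Stichtenoth, as the authors do, is entirely appropriate.
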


Let $D:=P_{1}+\ldots+P_{n}$, with $P_{i}$ being a place of degree $1$ for each $i \in [1,n]$. Let $G$ be a divisor such that $\supp(D) \cap \supp(G) = \emptyset$. The \emph{evaluation AG code} on divisors $D$ and $G$ is
\[
\cc_{\cal L}(D,G):=\{(f(P_{1}),\ldots,f(P_{n})) \, : \,
f \in {\cal L}(G)\}.
\]
Let $\Res_{P}(\omega)$ denote the \emph{residue} of $\omega$ at point $P$. The \emph{differential AG code} on divisors $D$ and $G$ is
\[
\cc_{\Omega}(D,G):= \{(\Res_{P_{1}}(\omega),\ldots,\Res_{P_{n}}(\omega)) \, : \, \omega \in {\Omega}(G-D)\}.
\]

The parameters of $\cc_{\cal L}(D,G)$ can be easily computed.
\begin{prop}\textnormal{\cite[Theorem 2.2.2, Corollary 2.2.3]{Stich}}\label{thm:distance}
The dimension $k$ and minimum distance $d$ of $\cc_{\cal L}(D,G)$ are
\begin{equation}
k= \dim({\cal L}(G)) + \dim({\cal L}(G-D)) \mbox{ and } d \ge n-\deg (G).
\end{equation}
\item If $2g -2 < \deg(G) < n$, then $\cc_{\cal L}(D,G)$ has
\begin{equation}
k=\deg (G)-g+1\textnormal{ and } d\ge n-\deg (G).
\label{eq:distance}
\end{equation}
\end{prop}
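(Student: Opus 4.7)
The natural approach is to study the evaluation map $\mathrm{ev}_D \colon \mathcal{L}(G) \to \F_q^n$ defined by $f \mapsto (f(P_1),\ldots,f(P_n))$, whose image is $\cc_{\cal L}(D,G)$ by definition. For the dimension claim, I would first identify the kernel of $\mathrm{ev}_D$. Using the hypothesis $\supp(D) \cap \supp(G) = \emptyset$, every $f \in \mathcal{L}(G)$ satisfies $v_{P_i}(f) \geq 0$ at each $P_i$, so vanishing at $P_i$ is equivalent to $v_{P_i}(f) \geq 1$. Combined with $(f) + G \geq 0$, simultaneous vanishing at $P_1,\ldots,P_n$ is equivalent to $(f) + G - D \geq 0$, that is, $f \in \mathcal{L}(G-D)$. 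Rank-nullity applied to $\mathrm{ev}_D$ then yields the stated dimension formula (reading the displayed $+$ as the standard $-$, which is the form that matches the Riemann-Roch specialization below).

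For the minimum distance, let $\mathrm{ev}_D(f)$ be a nonzero codeword of Hamming weight $w$, and let $T \subseteq \{P_1,\ldots,P_n\}$ be the set of $n-w$ places at which $f$ vanishes. Writing $D_T := \sum_{P \in T} P$, of degree $n-w$, the simultaneous vanishing forces $f \in \mathcal{L}(G - D_T)$; since $f$ is nonzero, $\mathcal{L}(G - D_T) \neq \{0\}$ forces $\deg(G - D_T) \geq 0$ because any principal divisor has degree zero. This rearranges to $w \geq n - \deg(G)$, and minimizing over nonzero codewords yields $d \geq n - \deg(G)$.

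For the refined second part I would specialize via Proposition \ref{thm:Riemann-Roch}. When $\deg(G) > 2g - 2$ one has $\deg(W - G) < 0$, so $\mathcal{L}(W-G) = \{0\}$, and Riemann-Roch collapses to $\dim \mathcal{L}(G) = \deg(G) - g + 1$. The hypothesis $\deg(G) < n$ forces $\deg(G - D) < 0$ and hence $\mathcal{L}(G - D) = \{0\}$. Plugging these into the dimension formula gives $k = \deg(G) - g + 1$, while the distance bound $d \geq n - \deg(G)$ is inherited unchanged. The only nontrivial ingredient here is Riemann-Roch itself; the remaining work, and the main place to be careful, is simply the kernel identification and the degree bookkeeping needed to rule out $\mathcal{L}(G-D)$ once $\deg(G) < n$.
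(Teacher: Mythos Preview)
The paper does not supply its own proof of this proposition: it is stated as a citation of \cite[Theorem 2.2.2, Corollary 2.2.3]{Stich} and used as a black box. Your argument is exactly the standard textbook proof (evaluation map, kernel identification $\ker(\mathrm{ev}_D)=\mathcal{L}(G-D)$, rank--nullity, then the degree argument for the distance bound and Riemann--Roch for the specialization), and it is correct. You also rightly flag that the displayed ``$+$'' in the dimension formula is a typo for ``$-$''; this is confirmed by the Riemann--Roch specialization in the second part and by how the paper itself applies the result later (e.g., in the proof of Theorem~\ref{thm:rankcomp}).
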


An evaluation code and its corresponding differential code are closely connected by the Euclidean inner product.

\begin{lem}\textnormal{\cite[Theorem 2.2.8]{Stich}}\label{lem:dual1} $\cc_{\cal L}(D,G)^{\perp_{\rm E}} = \cc_{\Omega}(D,G)$.
\end{lem}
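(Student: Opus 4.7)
The plan is to prove the equality by establishing the inclusion $\cc_{\Omega}(D,G) \subseteq \cc_{\cal L}(D,G)^{\perp_{\rm E}}$ and then showing the two codes have the same dimension. For the inclusion, I would pick arbitrary $f \in {\cal L}(G)$ and $\omega \in \Omega(G-D)$, and consider the differential $f\omega$. The assumptions $(f)+G \ge 0$ and $(\omega) - (G-D) \ge 0$ give $(f\omega) + D \ge 0$, so $f\omega$ has at most simple poles and these can only occur at the places $P_{1},\ldots,P_{n}$. Because $\supp(G) \cap \supp(D) = \emptyset$, the function $f$ is regular at each $P_{i}$, so a local expansion of $\omega$ in a uniformizer at $P_{i}$ yields $\Res_{P_{i}}(f\omega) = f(P_{i}) \, \Res_{P_{i}}(\omega)$. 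The residue theorem on the smooth projective curve $\cal{X}$ then gives
\[
0 \;=\; \sum_{P} \Res_{P}(f\omega) \;=\; \sum_{i=1}^{n} f(P_{i})\,\Res_{P_{i}}(\omega),
\]
which is precisely the Euclidean inner product of the two associated codewords.

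For the dimension, I would use that the evaluation map ${\cal L}(G) \to \Ff_{q}^{n}$ defined by $f \mapsto (f(P_{i}))_{i}$ has kernel ${\cal L}(G-D)$, so $\dim \cc_{\cal L}(D,G) = \dim {\cal L}(G) - \dim {\cal L}(G-D)$. Analogously, the residue map $\Omega(G-D) \to \Ff_{q}^{n}$ sending $\omega \mapsto (\Res_{P_{i}}(\omega))_{i}$ has kernel $\Omega(G)$, because any $\omega \in \Omega(G-D)$ with all $P_{i}$-residues equal to zero loses its possible simple poles at the $P_{i}$ and hence satisfies $(\omega) \ge G$. Thus $\dim \cc_{\Omega}(D,G) = \dim \Omega(G-D) - \dim \Omega(G)$. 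Using the standard identification $\dim \Omega(A) = \dim {\cal L}(W-A)$ for a canonical divisor $W$, this becomes $\dim {\cal L}(W-G+D) - \dim {\cal L}(W-G)$. Applying Proposition \ref{thm:Riemann-Roch} to both $G$ and $G-D$ and subtracting makes the $\dim {\cal L}(W - \cdot)$ terms combine with the correct signs, leaving $\dim \cc_{\cal L}(D,G) + \dim \cc_{\Omega}(D,G) = n$.

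Combining the orthogonality from the first step with the general fact $\dim \cc_{\cal L}(D,G)^{\perp_{\rm E}} = n - \dim \cc_{\cal L}(D,G)$ then forces $\cc_{\Omega}(D,G) = \cc_{\cal L}(D,G)^{\perp_{\rm E}}$. The main nontrivial external ingredient is the residue theorem for differentials on a smooth projective curve, which I would simply cite from \cite{Stich} rather than reprove. A small bookkeeping point that I would verify carefully is that the kernel of the residue map is exactly $\Omega(G)$ and not something strictly larger; this relies on the hypothesis $\supp(G) \cap \supp(D) = \emptyset$, so that at each $P_{i}$ the valuation condition coming from $\Omega(G-D)$ is simply $v_{P_{i}}(\omega) \ge -1$ and the residue is the unambiguous coefficient of $t_{i}^{-1}$ in the local expansion of $\omega$.
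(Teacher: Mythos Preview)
Your argument is correct and is essentially the standard proof of this result: orthogonality via the residue theorem, followed by a Riemann--Roch dimension count. Note, however, that the paper does not give its own proof of this lemma at all; it simply cites \cite[Theorem 2.2.8]{Stich} as a known fact, so there is nothing in the paper to compare your approach against beyond observing that your proof is precisely the one in the cited reference.
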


The differential code $\cc_{\Omega}(D,G)$ can be expressed as an evaluation code.
\begin{lem}\textnormal{\cite[Proposition 2.2.10] {Stich}}\label{lem:dual2} If $\omega $ is a differential form with the property that, for each $i \in [1,n]$,
\[
v_{P_i}(\omega)=-1 \mbox{ and } \Res_{P_i}(\omega) \neq 0,
\]
then there exists an ${\bf a}=(a_1,\ldots,a_n) \in ({\F^*_q}){^n}$ such that
\[
\cc_{\Omega}(D,G) = {\bf a} \cdot \cc_{\cal L}(D, D-G + (\omega)),
\]
where ${\bf a} \cdot \cc:=\{(a_1 c_1,\ldots,a_n c_n) : (c_1,\ldots,c_n) \in \cc\}$.
\end{lem}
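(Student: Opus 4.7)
The plan is to construct an explicit $\Ff_q$-linear isomorphism $\Omega(G-D) \cong {\cal L}(D-G+(\omega))$ via the map $\omega' \mapsto \omega'/\omega$, and to show that under this identification the residue evaluation defining $\cc_{\Omega}(D,G)$ becomes a weighted coordinate evaluation on ${\cal L}(D-G+(\omega))$ whose weights are precisely $a_i := \Res_{P_i}(\omega) \in \Ff_q^*$.

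First, I would verify that $f \mapsto f\omega$ is a bijection ${\cal L}(D-G+(\omega)) \to \Omega(G-D)$. If $f \in {\cal L}(D-G+(\omega))$, then $(f) \geq G - D - (\omega)$, so $(f\omega) = (f) + (\omega) \geq G - D$, which places $f\omega \in \Omega(G-D)$. The inverse sends $\omega' \in \Omega(G-D)$ to $\omega'/\omega$; running the same divisor calculation in reverse shows $\omega'/\omega \in {\cal L}(D-G+(\omega))$, and the map is clearly $\Ff_q$-linear.

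Second, I would perform the crucial local computation of $\Res_{P_i}(f\omega)$. Because $P_i \notin \supp(G)$ and $v_{P_i}(\omega) = -1$, the divisor $D - G + (\omega)$ has valuation $1 - 0 + (-1) = 0$ at $P_i$, so every $f \in {\cal L}(D-G+(\omega))$ is regular at $P_i$. Writing $\omega$ locally in a uniformizer $t_i$ as a Laurent form with leading term $a_i \, t_i^{-1} \, dt_i$ and expanding $f = f(P_i) + O(t_i)$, the product $f\omega$ has coefficient $a_i \, f(P_i)$ in front of $t_i^{-1} \, dt_i$, so $\Res_{P_i}(f\omega) = a_i \, f(P_i)$, with $a_i \neq 0$ by hypothesis; thus $\mathbf{a} := (a_1,\ldots,a_n) \in (\Ff_q^*)^n$ as required.

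Combining the two steps, every element of $\cc_{\Omega}(D,G) = \{(\Res_{P_i}(\omega'))_{i=1}^n : \omega' \in \Omega(G-D)\}$ equals $(a_i \, f(P_i))_{i=1}^n$ for a unique $f \in {\cal L}(D-G+(\omega))$, and vice versa, yielding $\cc_{\Omega}(D,G) = \mathbf{a} \cdot \cc_{\cal L}(D, D-G+(\omega))$. The main subtlety I anticipate lies in the residue computation: one must carefully argue that $f$ is regular at each $P_i$, so that no extra polar contributions appear, and that the residue passes linearly through the product, which is a direct consequence of the standard Laurent expansion at a rational place together with $\Ff_q$-rationality of $\omega$ at the degree-one places $P_i$.
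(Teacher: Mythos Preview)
Your proof is correct and is essentially the standard argument for this result. Note, however, that the paper does not supply its own proof of this lemma: it is quoted as \cite[Proposition 2.2.10]{Stich} and used as a black box. Your argument is precisely the textbook proof from that reference, so there is nothing to compare against within the paper itself.
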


%\todo{to be removed:Let $D:=P_1+P_2 + \ldots + P_n$, where the $P_i$'s are distinct places of degree $1$.}
Any GRS code $GRS_k(\mathbf{b},\mathbf{a})$ as defined in \eqref{eq:GRS} is monomially-equivalent to a rational AG code $\cc_{\cal L}(D,G)$ with $\deg(G)=k-1$ as stated in \cite[Proposition 2.3.3]{Stich}. The parameters are related as follows. For any $1 \leq i \leq n$, let $b_i :=x(P_i)$ and $a_i :=u(P_i)$ for some $u(x) \in \Ff_{q^2}(x)$ that satisfies $(u)=(k-1) O-G$. For $0 \leq j \leq k-1$, the vectors
\[
(u x^j(P_1),\ldots,u x^j(P_{n})) = (a_1b_1^j,\ldots,a_{n}b_{n}^j)
\]
constitute a basis of $\cc_{\cal L}(D,G)$ and, hence, a generator matrix of $\cc_{\cal L}(D,G)$ is in fact the matrix $G_k$ in \eqref{GRSgenerator}.

\section{Main results}\label{sec:3}

This section investigates the Hermitian hulls of GRS codes and AG codes. Several constructions of linear codes whose Hermitian hulls are MDS are then proposed.

\subsection{MDS Hermitian Hulls from GRS Codes}\label{subsec:GRS typed}
This subsection is devoted to the constructions of linear codes whose Hermitian hulls are MDS, based on GRS codes and their punctured codes.
Rains in \cite{Rains1999} defined the code $\mathcal{P}(\cc)$ based on a given $[n,k,d]_{q^2}$ code $\cc$, to serve as a powerful ingredient in a construction of Hermitian self-orthogonal codes, as
\begin{equation}\label{eq:RainP}
\mathcal{P}(\cc) = \left\{(a_1,\ldots,a_n) \in \Ff_q^n \, : \, \sum_{i=1}^n a_i \, u_i \, v_i^q=0 \mbox{ for all } \mathbf{u}=(u_1,\ldots,u_n),\mathbf{v}=(v_1,\ldots,v_n) \in \cc \right\}.
\end{equation}
The following result was established by Ball and Vilar.
\begin{lem}{\rm \cite{Ball2022}}\label{lem31}
Given an $[n,k,d]_{q^2}$ code $\cc$, there exists a punctured code of $\cc$ of length $m \leq n$ which is monomially-equivalent to a Hermitian self-orthogonal code if and only if there exists a codeword $\mathbf{c}$ of weight $m$ in $\mathcal{P}(\cc)$.
\end{lem}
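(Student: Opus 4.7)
The plan is to exploit the fact that a monomial transformation by a diagonal matrix $\diag(b_1,\ldots,b_m)$ with $b_i \in \Ff_{q^2}^*$ replaces the Hermitian form $\sum u_i v_i^q$ with $\sum b_i^{q+1} u_i v_i^q$, whose coefficients $b_i^{q+1}$ automatically lie in $\Ff_q^*$. Combined with the surjectivity of the norm map $N\colon \Ff_{q^2}^* \to \Ff_q^*$, $b \mapsto b^{q+1}$, this sets up a bijection between nonzero coefficients in $\Ff_q$ and diagonal scalars in $\Ff_{q^2}^*$, with the value $0$ on the $\Ff_q$ side encoding a punctured coordinate. Permutation of coordinates plays no essential role because it preserves Hermitian self-orthogonality, so only the diagonal part of a monomial transformation matters.

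For the ``only if'' direction, suppose there exist $S\subseteq[n]$ with $|S|=n-m$ and scalars $(b_i)_{i\notin S}$ in $\Ff_{q^2}^*$ such that the punctured-and-scaled code is Hermitian self-orthogonal, that is,
\[
\sum_{i \notin S} b_i^{q+1}\, u_i v_i^q = 0 \mbox{ for all } \mathbf{u},\mathbf{v} \in \cc.
\]
I would then define $\mathbf{c}=(a_1,\ldots,a_n) \in \Ff_q^n$ by $a_i := b_i^{q+1}$ for $i \notin S$ and $a_i := 0$ for $i \in S$. The display above becomes $\sum_{i=1}^n a_i u_i v_i^q = 0$ for all $\mathbf{u},\mathbf{v} \in \cc$, so $\mathbf{c} \in \mathcal{P}(\cc)$, and $\mathbf{c}$ has Hamming weight exactly $m$ because each $b_i^{q+1}$ is nonzero.

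For the converse, I would reverse this construction. Given $\mathbf{c}=(a_1,\ldots,a_n)\in \mathcal{P}(\cc)$ of weight $m$, set $S:=\{i\in[n]:a_i=0\}$, so $|S|=n-m$. For each $i\notin S$, the surjectivity of $N$ produces a $b_i \in \Ff_{q^2}^*$ with $b_i^{q+1}=a_i$. Rewriting the defining equation of $\mathcal{P}(\cc)$ then yields
\[
\sum_{i \notin S} (b_i u_i)(b_i v_i)^q = 0 \mbox{ for all } \mathbf{u},\mathbf{v} \in \cc,
\]
which says precisely that the length-$m$ code obtained by deleting the coordinates in $S$ and multiplying each surviving coordinate $i$ by $b_i$ is Hermitian self-orthogonal. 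Hence the punctured code is monomially-equivalent to a Hermitian self-orthogonal code.

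The only technical ingredient is the surjectivity of the norm map, which is standard. The remainder is a direct dictionary between ``scaling coordinate $i$ by $b_i$'' and ``weighting the Hermitian form in coordinate $i$ by $b_i^{q+1}$'', so I do not foresee a serious obstacle. The mild subtlety worth making explicit is that monomial equivalence in general includes a permutation, but since Hermitian self-orthogonality is invariant under permutation of coordinates, it is harmless to work only with diagonal scalings in both directions.
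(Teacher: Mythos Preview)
Your proposal is correct. The paper itself does not supply a proof of this lemma; it is stated with a citation to Ball and Vilar \cite{Ball2022} and used as input for the subsequent results. The argument you give is the standard one, and it is precisely the mechanism the paper employs in its proof of the very next result (Lemma~\ref{lemma8}): lift each nonzero entry $x_{i_j}\in\Ff_q^*$ of the chosen codeword to an $a_{i_j}\in\Ff_{q^2}^*$ with $a_{i_j}^{q+1}=x_{i_j}$ via surjectivity of the norm, and then rewrite $\sum x_{i_j} u_{i_j} v_{i_j}^q$ as $\sum (a_{i_j} u_{i_j})(a_{i_j} v_{i_j})^q$. Your handling of the permutation part of monomial equivalence is also appropriate.
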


Lemma \ref{lem31} motivates us to give a similar characterization of a GRS code whose Hermitian hull contains another GRS code. Let $\alpha$ be a primitive element of $\Ff_{q^2}$. Choosing $\mathbf{1}=(1,\ldots,1)$ and $\mathbf{b} = (\alpha^0,\ldots,\alpha^{q^2-2},0)$, we build
$GRS_k(\mathbf{b},\mathbf{1})$ over $\Ff_{q^2}$. This is a $[q^2,k,q^2-k+1]_{q^2}$ code whose Hermitian hull contains $GRS_\ell(\mathbf{b},\mathbf{1})$ as a subcode for $\ell \leq k$. Let
\begin{equation}\label{GRS1}
\mathcal{P}(GRS_{k,\ell}) : = \left\{(a_1,\ldots,a_{q^2})\in\Ff_{q}^n \, : \, \sum_{i=1}^{q^2} a_i \, u_i \, v_i^q=0 \mbox{ for all } \mathbf{u}\in GRS_\ell(\mathbf{b},\mathbf{1}) \mbox{ and for all } \mathbf{v}\in GRS_k(\mathbf{b},\mathbf{1})\right\}.
\end{equation}
By definition, $\mathcal{P}(GRS_{k,\ell})$ has a parity-check matrix
\[
H_{\mathcal{P}(GRS_{k,\ell})} =
\begin{pmatrix}
1 & 1 &\cdots & 1 & 1\\
1 & \alpha^{s_1} &\cdots & \alpha^{s_1(q^2-2)}& 0\\
\vdots& \vdots &  \ddots & \vdots&\vdots\\
1 & \alpha^{s_{t}} &\cdots & \alpha^{s_{t}(q^2-2)}& 0\\
\end{pmatrix},
\]
where
\[
\{s_1,\ldots,s_t\} = \{i+qj:i\in[0,\ell-1],j\in[0,k-1]\} \setminus\{0\}.
\]
If $k\leq q$, then we infer from (\ref{extendedcyclic}) that $\mathcal{P}(GRS_{k,\ell})$ is the extended code $\mathcal{E}(\mathcal{D}_{k,\ell})$ of a $q$ ary cyclic code $\mathcal{D}_{k,\ell}$ of length $q^2-1$ with defining set
\begin{equation}\label{eq:defset}
\{i+qj: i\in[0,\ell-1],j\in[\ell,k-1]\} \cup \{i+qj:i\in[\ell,k-1],j\in[0,\ell-1]\} \cup \left(\{i+qj:i\in[0,\ell-1],j\in[0,\ell-1]\} \setminus\{0\}\right).
\end{equation}
If $\ell=k$, then the defining set in \eqref{eq:defset} becomes $\{i+qj:i\in[0,\ell-1],j\in[0,\ell-1]\} \setminus\{0\}$ since $[k,k-1]=\emptyset$. By Lemma \ref{HTbound}, $\mathcal{E}(\mathcal{D}_{k,\ell})$ is a
$[q^2,q^2-2\ell k+\ell^2,d\geq k+\ell-1]_q$ code.

\begin{lem}\label{lemma8}
Let $\alpha$ be a primitive element of $\Ff_{q^2}$ and let $\mathbf{b}=(\alpha^0,\ldots,\alpha^{q^2-2},0)$. Let $\mathcal{P}(GRS_{k,\ell})$ in (\ref{GRS1}) be the extended code $\mathcal{E}(\mathcal{D}_{k,\ell})$ of $\mathcal{D}_{k,\ell}$ whose defining set is as in (\ref{eq:defset}). Let $\ell \leq k\leq q$. Let $\widehat{\mathbf{a}}$ and $\widehat{\mathbf{b}}$ be vectors in $\Ff_{q^2}^m$ with $m \leq q^2$. There exists a $q^2$ ary GRS code $GRS_k(\widehat{\mathbf{b}},\widehat{\mathbf{a}})$ of length $m$ whose Hermitian hull contains $GRS_\ell(\widehat{\mathbf{b}},\widehat{\mathbf{a}})$ as a subcode if and only if the following three conditions are met.
\begin{enumerate}
    \item There exists a codeword $\mathbf{x}:=(x_1,\ldots,x_{q^2})$ of weight $m$ with nonzero entries $x_{i_1},\ldots,x_{i_m}$ in $\mathcal{E}(\mathcal{D}_{k,\ell})$.
    \item The vector $\widehat{\mathbf{a}}$ is of the form $(a_{i_1},\ldots,a_{i_m})$ with $a_{i_j}^{q+1} = x_{i_j}$ for each $j\in[1,m]$.
\item The vector $\widehat{\mathbf{b}}$ is obtained by deleting the $q^2-m$ entries in $\mathbf{b}$ whose coordinates are indexed by $[1,q^2]\setminus\{i_1,\ldots,i_m\}$.
\end{enumerate}
\end{lem}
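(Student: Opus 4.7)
The plan is to reduce the Hermitian hull containment to a linear system in the norms $N(\widehat{a}_l) = \widehat{a}_l^{q+1}$, and then identify this system with the parity-check equations of $\mathcal{P}(GRS_{k,\ell}) = \mathcal{E}(\mathcal{D}_{k,\ell})$. Since $\ell \le k$ forces $GRS_\ell(\widehat{\mathbf{b}}, \widehat{\mathbf{a}}) \subseteq GRS_k(\widehat{\mathbf{b}}, \widehat{\mathbf{a}})$ automatically, the hull-containment reduces to the one-sided orthogonality $GRS_\ell(\widehat{\mathbf{b}}, \widehat{\mathbf{a}}) \subseteq GRS_k(\widehat{\mathbf{b}}, \widehat{\mathbf{a}})^{\perp_{\rm H}}$. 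Setting $u_l = \widehat{a}_l f(\widehat{b}_l)$ and $v_l = \widehat{a}_l g(\widehat{b}_l)$, expanding $v_l^q$ via the Frobenius, and using that $c \mapsto c^q$ is a bijection on $\Ff_{q^2}$, the orthogonality $\langle \mathbf{u}, \mathbf{v} \rangle_{\rm H} = 0$ for all $\deg f < \ell$ and $\deg g < k$ becomes equivalent, by a monomial-basis argument, to
\[
\sum_{l=1}^m \widehat{a}_l^{q+1} \widehat{b}_l^{\,r+qs} = 0 \qquad \text{for every } (r,s) \in [0,\ell-1] \times [0,k-1].
\]

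For the forward direction, I would observe that $\widehat{a}_l^{q+1} = N(\widehat{a}_l) \in \Ff_q^*$ and define $\mathbf{x} = (x_1,\ldots,x_{q^2}) \in \Ff_q^{q^2}$ by $x_{i_l} = \widehat{a}_l^{q+1}$ at the unique indices $i_l$ with $\widehat{b}_l = b_{i_l}$, and $x_i = 0$ elsewhere. By construction $\mathbf{x}$ has Hamming weight exactly $m$, and the displayed equations rewrite as $\sum_{i=1}^{q^2} x_i b_i^{r+qs} = 0$. A row-by-row check then matches these to the parity-check equations encoded by $H_{\mathcal{P}(GRS_{k,\ell})}$: the $(r,s)=(0,0)$ equation is $\sum_i x_i = 0$, reproducing the all-ones first row (using the convention $b_{q^2}^0 = 1$), while every $(r,s) \ne (0,0)$ satisfies $r+qs \ge 1$, so $b_{q^2}^{r+qs} = 0$ accounts for the trailing zero in each remaining row and $b_i^{r+qs} = \alpha^{(r+qs)(i-1)}$ for $i \le q^2-1$ supplies the Vandermonde entries. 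Hence $\mathbf{x} \in \mathcal{P}(GRS_{k,\ell}) = \mathcal{E}(\mathcal{D}_{k,\ell})$, and conditions (2) and (3) hold by construction.

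For the converse, given a weight-$m$ codeword $\mathbf{x}$ of $\mathcal{E}(\mathcal{D}_{k,\ell})$, the nonzero entries $x_{i_l}$ lie in $\Ff_q^*$; by surjectivity of the norm map $N : \Ff_{q^2}^* \to \Ff_q^*$, I can pick $\widehat{a}_l \in \Ff_{q^2}^*$ with $\widehat{a}_l^{q+1} = x_{i_l}$, and set $\widehat{b}_l = b_{i_l}$, whose entries are automatically distinct. Reversing the equivalences above then delivers $GRS_\ell(\widehat{\mathbf{b}}, \widehat{\mathbf{a}}) \subseteq \mathrm{Hull}_{\rm H}(GRS_k(\widehat{\mathbf{b}}, \widehat{\mathbf{a}}))$. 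The main obstacle is organizational: one must carefully match the $\Ff_{q^2}$-indexed orthogonality relations to the rows of $H_{\mathcal{P}(GRS_{k,\ell})}$, in particular separating the $(0,0)$ equation from the rest (which relies on $b_{q^2}=0$), and noting that restricting to $\mathbf{x} \in \Ff_q^{q^2}$ is precisely what collapses those equations onto the cyclotomically-closed defining set of $\mathcal{D}_{k,\ell}$ displayed in (\ref{eq:defset}).
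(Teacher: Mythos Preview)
Your proposal is correct and takes essentially the same approach as the paper: both reduce the hull containment $GRS_\ell(\widehat{\mathbf{b}},\widehat{\mathbf{a}}) \subseteq GRS_k(\widehat{\mathbf{b}},\widehat{\mathbf{a}})^{\perp_{\rm H}}$ to membership of the zero-padded norm vector $\mathbf{x}$ with $x_{i_j}=\widehat{a}_j^{\,q+1}$ in $\mathcal{P}(GRS_{k,\ell})=\mathcal{E}(\mathcal{D}_{k,\ell})$, using surjectivity of the norm $\Ff_{q^2}^*\to\Ff_q^*$ for the converse. The only cosmetic difference is that the paper invokes the defining property (\ref{GRS1}) of $\mathcal{P}(GRS_{k,\ell})$ directly on arbitrary codewords $\mathbf{u},\mathbf{v}$, whereas you expand to the monomial basis and verify the equations against the rows of $H_{\mathcal{P}(GRS_{k,\ell})}$ one by one.
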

\begin{proof}
Let there be a codeword $(x_1,\ldots,x_n)$ of weight $m$ with nonzero entries $x_{i_1},\ldots,x_{i_m}$ in $\mathcal{E}(\mathcal{D}_{k,\ell})$. Since $x_{i_j}\in\Ff_q^*$, there exists a nonzero element $a_{i_j} \in\Ff_{q^2}^*$ such that $x_{i_j}=a_{i_j}^{q+1}$ for each $j\in[1,m]$. We note that $m\geq k+\ell-1\geq k$. Using the vector $\mathbf{1}=(1,\ldots,1)$ of length $q^2$, we obtain the respective codes $GRS_k(\widehat{\mathbf{b}},\widehat{\mathbf{1}})$ and $GRS_\ell(\widehat{\mathbf{b}},\widehat{\mathbf{1}})$ of length $m$ by puncturing $GRS_k(\mathbf{b},\mathbf{1})$ and $GRS_\ell(\mathbf{b},\mathbf{1})$ on $S=[1,q^2]\setminus\{i_1,\ldots,i_m\}$. By (\ref{GRS1}), for any $(u_{i_1},\ldots,u_{i_m})\in GRS_\ell(\widehat{\mathbf{b}},\widehat{\mathbf{1}})$ and any $(v_{i_1},\ldots,v_{i_m})\in GRS_k(\widehat{\mathbf{b}},\widehat{\mathbf{1}})$, we have
\[
\sum_{j=1}^mx_{i_j}u_{i_j}v_{i_j}^q = \sum_{j=1}^ma_{i_j}u_{i_j}(a_{i_j}v_{i_j})^q=0.
\]
Writing $\widehat{\mathbf{a}}=(a_{i_1},\ldots,a_{i_m})$, we easily confirm that $GRS_\ell(\widehat{\mathbf{b}},\widehat{\mathbf{a}}) \subseteq GRS_k(\widehat{\mathbf{b}},\widehat{\mathbf{a}})^{\perp_{\rm H}}$, which implies that $GRS_\ell(\widehat{\mathbf{b}},\widehat{\mathbf{a}})$ is a subcode of the Hermitian hull of $GRS_k(\widehat{\mathbf{b}},\widehat{\mathbf{a}})$.

Conversely, the desired conclusion follows by reversing the direction of the argument above.
\end{proof}

Lemma \ref{lemma8} directly leads to the following technique to generate GRS codes whose Hermitian hulls are MDS.

\begin{thm}\label{GRSMDSHULL}
Let the notation be as in Lemma \ref{lemma8}. If there exists a codeword $\mathbf{x}=(x_1,\ldots,x_{q^2})$ of weight $m$ with nonzero entries $x_{i_1},\ldots,x_{i_m}$ in $\mathcal{E}(\mathcal{D}_{k,k-1}) \setminus \mathcal{E}(\mathcal{D}_{k,k})$, then the Hermitian hull of $GRS_k(\widehat{\mathbf{b}},\widehat{\mathbf{a}})$ is $GRS_{k-1}(\widehat{\mathbf{b}},\widehat{\mathbf{a}})$.
\end{thm}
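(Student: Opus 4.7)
The plan is to establish the theorem by a dimension squeeze that pins the Hermitian hull between $GRS_{k-1}(\widehat{\mathbf{b}},\widehat{\mathbf{a}})$ and $GRS_{k}(\widehat{\mathbf{b}},\widehat{\mathbf{a}})$. The codeword $\mathbf{x}$ of weight $m$ determines, via its support $\{i_1,\ldots,i_m\}$ and the relations $a_{i_j}^{q+1}=x_{i_j}$, the pair $(\widehat{\mathbf{b}},\widehat{\mathbf{a}})$ and hence a specific $GRS_k(\widehat{\mathbf{b}},\widehat{\mathbf{a}})$. I will establish both a lower and an upper bound of $k-1$ on the hull dimension.

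For the lower bound, I apply the ``if'' direction of Lemma~\ref{lemma8} with $\ell=k-1$. Since $\mathbf{x}\in\mathcal{E}(\mathcal{D}_{k,k-1})$ has weight $m$ and the chosen $\widehat{\mathbf{a}},\widehat{\mathbf{b}}$ satisfy conditions (2) and (3) of that lemma, I obtain
\[
GRS_{k-1}(\widehat{\mathbf{b}},\widehat{\mathbf{a}}) \subseteq \mathrm{Hull}_{\mathrm{H}}(GRS_k(\widehat{\mathbf{b}},\widehat{\mathbf{a}})),
\]
so the hull has dimension at least $k-1$. Since the hull is always a subspace of $GRS_k(\widehat{\mathbf{b}},\widehat{\mathbf{a}})$, its dimension is at most $k$, leaving only the two possibilities $k-1$ and $k$.

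To exclude the dimension-$k$ case, I rule out Hermitian self-orthogonality of $GRS_k(\widehat{\mathbf{b}},\widehat{\mathbf{a}})$. If the code were self-orthogonal, its Hermitian hull would coincide with $GRS_k(\widehat{\mathbf{b}},\widehat{\mathbf{a}})=GRS_k(\widehat{\mathbf{b}},\widehat{\mathbf{a}})$, which is the containment case $\ell=k$ of Lemma~\ref{lemma8}. The ``only if'' direction of Lemma~\ref{lemma8} with $\ell=k$ would then produce a codeword $\mathbf{x}'\in\mathcal{E}(\mathcal{D}_{k,k})$ of weight $m$ supported exactly on $\{i_1,\ldots,i_m\}$ with $x'_{i_j}=a_{i_j}^{q+1}$ for every $j\in[1,m]$. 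But $a_{i_j}^{q+1}=x_{i_j}$ by construction, so $\mathbf{x}'=\mathbf{x}$ coordinatewise, contradicting the hypothesis $\mathbf{x}\notin\mathcal{E}(\mathcal{D}_{k,k})$. Hence the hull has dimension exactly $k-1$, and the inclusion above becomes equality.

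The only delicate step is the last paragraph: I must note that the codeword $\mathbf{x}'$ produced by the ``only if'' direction of Lemma~\ref{lemma8} is \emph{uniquely} determined by the prescribed support $\{i_1,\ldots,i_m\}$ and the values $a_{i_j}^{q+1}$, so it must literally be $\mathbf{x}$. Once this is observed, the contradiction with $\mathbf{x}\notin\mathcal{E}(\mathcal{D}_{k,k})$ is immediate and the dimension squeeze closes the proof. No additional calculation with the defining set~(\ref{eq:defset}) is needed beyond the observation that $\mathcal{E}(\mathcal{D}_{k,k})\subseteq\mathcal{E}(\mathcal{D}_{k,k-1})$, which makes the set difference in the hypothesis meaningful.
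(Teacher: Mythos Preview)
Your proof is correct and follows essentially the same dimension-squeeze strategy as the paper: use Lemma~\ref{lemma8} with $\ell=k-1$ for the lower bound, then use the converse direction with $\ell=k$ to rule out self-orthogonality. Your explicit observation that the codeword produced by the ``only if'' direction is uniquely determined by $\widehat{\mathbf{a}}$ and $\widehat{\mathbf{b}}$ (and hence must coincide with $\mathbf{x}$) is in fact a welcome clarification of a step the paper leaves implicit.
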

\begin{proof}
By Lemma \ref{lemma8}, the Hermitian hull of $GRS_k(\widehat{\mathbf{b}},\widehat{\mathbf{a}})$ contains a subcode $GRS_{k-1}(\widehat{\mathbf{b}},\widehat{\mathbf{a}})$. Since the codeword $\mathbf{x}$ does not belong to $\mathcal{D}_{k,k}$, it follows from the lemma that $GRS_{k}(\widehat{\mathbf{b}},\widehat{\mathbf{a}})$ is not a subcode of ${\rm Hull}_{\rm H} (GRS_k(\widehat{\mathbf{b}},\widehat{\mathbf{a}}))$. This implies that $GRS_k(\widehat{\mathbf{b}},\widehat{\mathbf{a}})$ is not Hermitian self-orthogonal and that the dimension of ${\rm Hull}_{\rm H}(GRS_k(\widehat{\mathbf{b}},\widehat{\mathbf{a}}))$ is $\leq k=1$. Thus, the Hermitian hull of $GRS_k(\widehat{\mathbf{b}},\widehat{\mathbf{a}})$ is $GRS_{k-1}(\widehat{\mathbf{b}},\widehat{\mathbf{a}})$.
\end{proof}

\begin{remark}\label{rem:Chen}
Chen in \cite{Chen2023} raised a question on the maximal Hermitian hull dimension of linear codes under monomial equivalence. In other words, given a linear code, find a monomially-equivalent code whose Hermitian hull has the largest dimension among all monomially-equivalent codes. It is straightforward to verify that the GRS codes constructed by Theorem \ref{GRSMDSHULL} has $(k-1)$-dimensional Hermitian hull. The codes are not Hermitian self-orthogonal under any monomial equivalence. Thus, the \emph{maximal} Hermitian hull dimension of any GRS code that can be obtained from Theorem \ref{GRSMDSHULL} is $k-1$.
\end{remark}

\begin{remark}\label{remark1}
The respective dimensions of $\mathcal{E}(\mathcal{D}_{k,k-1})$ and $\mathcal{E}(\mathcal{D}_{k,k})$ are $q^2-k^2+1$ and $q^2-k^2$. Hence, there exists a nonzero codeword in $\mathcal{E}(\mathcal{D}_{k,k-1})\setminus \mathcal{E}(\mathcal{D}_{k,k})$ whenever $k\leq q$. If $k<q$, then Lemma \ref{trace} allows us to express $\mathcal{E}(\mathcal{D}_{k,k-1})\setminus \mathcal{E}(\mathcal{D}_{k,k})$ as
\begin{equation}\label{eqtr}
\left\{\left(c_0,\ldots,c_{q^2-2},\sum_{r=0}^{q^2-2}c_r\right):\theta_{k-1,k-1}\in\Ff_{q}^*,\theta_{t,t}\in\Ff_{q},t\in[k,q-1], \mbox{ and } \theta_{i,j}\in\Ff_{q^2} \mbox{ for } (i,j) \in T \right\},
\end{equation}
with
\[
T=\bigcup_{i=k}^{q-1}\left\{(i,j):j\in[0,k-1]
\bigcup[i+1,q-1]\right\},
\]
since $[q,q-1]$ is the empty set, and
\[
c_r = \sum_{t=k-1}^{q-1}\theta_{t,t}\alpha^{-r t(q+1)} + \sum_{(i,j)\in T} \Tr_2\left(\theta_{i,j}\alpha^{-r(i+qj))}\right) \mbox{ for each } r\in[0,q^2-2].
\]
We can then confirm that $\sum_{r=0}^{q^2-2}c_r = -\theta_{q-1,q-1}$.
\end{remark}

Setting $k=q$ in Theorem \ref{GRSMDSHULL} gives a construction of GRS codes whose Hermitian hulls are MDS of dimension $q-1$.

\begin{thm}\label{GRScon1}
Let $\mathbf{1}=(1,1,\ldots,1)$ and
$\mathbf{b}= (1,\alpha^1,\ldots,\alpha^{q^2-2},0)$ be vectors of length $q^2$, with $\alpha$ being a primitive element of $\Ff_{q^2}$. The Hermitian hull of $GRS_q(\mathbf{b},\mathbf{1})$ is $GRS_{q-1}(\mathbf{b},\mathbf{1})$.
\end{thm}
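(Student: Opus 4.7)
The plan is to apply Theorem \ref{GRSMDSHULL} directly with $k = q$ and with the simplest possible witnessing codeword, namely $\mathbf{x} = \mathbf{1} = (1, 1, \ldots, 1) \in \Ff_q^{q^2}$. Since $\mathbf{x}$ has weight $m = q^2$, no puncturing occurs: the index set $\{i_1, \ldots, i_m\}$ is all of $[1, q^2]$, and $\widehat{\mathbf{b}} = \mathbf{b}$. Moreover, each entry $x_{i_j} = 1$ equals $1^{q+1}$, so the natural choice $a_{i_j} = 1$ yields $\widehat{\mathbf{a}} = \mathbf{1}$, matching the statement exactly. Thus the whole proof reduces to verifying that $\mathbf{1} \in \mathcal{E}(\mathcal{D}_{q, q-1}) \setminus \mathcal{E}(\mathcal{D}_{q, q})$.

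To check that $\mathbf{1} \in \mathcal{E}(\mathcal{D}_{q, q-1})$, I would test each row of the parity-check matrix $H_{\mathcal{E}(\mathcal{D}_{q, q-1})}$ in \eqref{extendedcyclic}. The top all-ones row contributes $\sum_{i=0}^{q^2-1} 1 = q^2 = 0$ in $\Ff_q$, since the characteristic divides $q^2$. Each remaining row, indexed by some $i_\ell$ in the defining set, contributes $\sum_{r=0}^{q^2-2} \alpha^{i_\ell r}$ (the extra $0$ in the last column plays no role), which vanishes by the geometric sum identity whenever $\alpha^{i_\ell} \neq 1$, i.e. whenever $i_\ell \not\equiv 0 \pmod{q^2 - 1}$. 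A direct inspection of the three subsets making up the defining set in \eqref{eq:defset} with $(k, \ell) = (q, q-1)$ shows that every element lies in $[1, q^2 - 2]$: the sets $\{i + q(q-1) : i \in [0, q-2]\}$ and $\{(q-1) + qj : j \in [0, q-2]\}$ sit in $[q-1, q^2 - 2]$, while the third set $\{i + qj : (i, j) \in [0, q-2]^2\} \setminus \{0\}$ sits in $[1, q^2 - q - 2]$. Hence no element of the defining set is $\equiv 0 \pmod{q^2 - 1}$, and every row annihilates $\mathbf{1}$.

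To rule out $\mathbf{1} \in \mathcal{E}(\mathcal{D}_{q, q})$, I would invoke Remark \ref{remark1} with $k = q$, which yields $\dim \mathcal{E}(\mathcal{D}_{q, q}) = q^2 - q^2 = 0$; so this code is trivial and cannot contain the nonzero vector $\mathbf{1}$. With both the inclusion and the non-membership established, Theorem \ref{GRSMDSHULL} applies verbatim and produces the claimed identity $\mathrm{Hull}_{\rm H}(GRS_q(\mathbf{b}, \mathbf{1})) = GRS_{q-1}(\mathbf{b}, \mathbf{1})$. The only mildly delicate step is the bookkeeping confirming that the defining set of $\mathcal{D}_{q, q-1}$ avoids the zero cyclotomic coset; this is routine, and I do not anticipate any real obstacle.
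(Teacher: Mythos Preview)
Your proposal is correct and follows essentially the same route as the paper: both identify the all-ones vector as the required weight-$q^2$ codeword in $\mathcal{E}(\mathcal{D}_{q,q-1})\setminus\mathcal{E}(\mathcal{D}_{q,q})$ and then invoke Theorem~\ref{GRSMDSHULL}. The only cosmetic difference is that the paper observes directly from the dimension formula that $\dim\mathcal{E}(\mathcal{D}_{q,q-1})=1$ and hence $\mathcal{E}(\mathcal{D}_{q,q-1})=\{(a,\ldots,a):a\in\Ff_q\}$, whereas you verify $\mathbf{1}\in\mathcal{E}(\mathcal{D}_{q,q-1})$ by explicitly checking the parity-check rows.
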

\begin{proof}
The respective dimensions of $\mathcal{E}(\mathcal{D}_{q,q-1})$ and $\mathcal{E}(\mathcal{D}_{q,q})$ are $1$ and $0$. By definition,
\[
\mathcal{E}(\mathcal{D}_{q,q-1})=\{(a,\ldots,a):a\in\Ff_{q}\}.
\]
By Theorem \ref{GRSMDSHULL}, ${\rm Hull}_{\rm H}(GRS_q(\mathbf{b},\mathbf{1})) = GRS_{q-1}(\mathbf{b},\mathbf{1})$.
\end{proof}

\begin{remark} The authors of \cite{Fang2020} constructed, from a $[q^2,k]_{q^2}$ GRS code with $1 \le k\le q-1$, MDS codes with arbitrary Hermitian hull dimensions $\ell$ in the range of $0 \leq \ell \leq k$. They left the case of $q \leq k \leq \lfloor \frac{n}{2}\rfloor$ as a research problem. The code $GRS_{q}(\mathbf{b},\mathbf{1})$ in Theorem \ref{GRScon1} has dimension $q$ and, hence, differ from the codes in \cite{Fang2020}.
\end{remark}

Referring to Remark \ref{remark1}, we turn the task of finding a nonzero codeword of $\mathcal{E}(\mathcal{D}_{k,k-1})\setminus \mathcal{E}(\mathcal{D}_{k,k})$ into counting the number of roots of a polynomial. As the $\theta_{i,j}$ in \eqref{eqtr} traverses the elements of $\Ff_{q^2}$, we obtain the following constructions.

\begin{thm}\label{GRScon2}
Let $k$ be an integer such that $1<k<q$. Let $\alpha$ be a primitive element of $\Ff_{q^2}$. Choosing vectors of length $q^2-1$
\[
\mathbf{a} = \left(\alpha^{-0(k-1)}, \alpha^{-1(k-1)},\alpha^{-2(k-1)},\ldots,\alpha^{-(q^2-2)(k-1)}\right)
\mbox{ and }
\mathbf{b}=\left(1, \alpha, \alpha^2,\ldots,\alpha^{q^2-2}\right),
\]
we obtain ${\rm Hull}_{\rm H}(GRS_k(\mathbf{b},\mathbf{a})) = GRS_{k-1}(\mathbf{b},\mathbf{a})$.
\end{thm}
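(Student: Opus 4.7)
The plan is to apply Theorem~\ref{GRSMDSHULL} with $m=q^2-1$. What I need to supply is an explicit codeword of weight $q^2-1$ in $\mathcal{E}(\mathcal{D}_{k,k-1}) \setminus \mathcal{E}(\mathcal{D}_{k,k})$ whose unique zero coordinate sits at the extended (parity-check) position; once this is located, the vectors $\widehat{\mathbf{a}}$ and $\widehat{\mathbf{b}}$ prescribed by Lemma~\ref{lemma8} should coincide with the $\mathbf{a}$ and $\mathbf{b}$ in the statement.

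To build such a codeword, I would use the parametrization in \eqref{eqtr} of Remark~\ref{remark1}, which is valid because $k<q$. The simplest admissible choice is $\theta_{k-1,k-1}=1$ together with every other free parameter set to zero. Since $\theta_{k-1,k-1} \in \Ff_q^*$, this configuration indeed lies in $\mathcal{E}(\mathcal{D}_{k,k-1}) \setminus \mathcal{E}(\mathcal{D}_{k,k})$, and the formula in \eqref{eqtr} collapses to
\[
c_r = \alpha^{-r(k-1)(q+1)} \mbox{ for } r \in [0, q^2-2],
\]
while the extended coordinate is $-\theta_{q-1,q-1} = 0$. Because $\alpha$ is primitive in $\Ff_{q^2}$, each $c_r$ is a nonzero power of $\alpha$, so the codeword has weight exactly $q^2-1$ with nonzero positions $i_j=j$ for $j \in [1,q^2-1]$.

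It then remains to identify $\widehat{\mathbf{a}}$ and $\widehat{\mathbf{b}}$ from Lemma~\ref{lemma8}. Deleting the final zero entry of $(\alpha^0,\ldots,\alpha^{q^2-2},0)$ produces $\widehat{\mathbf{b}}=(1,\alpha,\ldots,\alpha^{q^2-2})$, which is exactly $\mathbf{b}$ in the statement. For $\widehat{\mathbf{a}}$, one needs entries satisfying $a_{i_j}^{q+1} = c_{j-1} = \alpha^{-(j-1)(k-1)(q+1)}$; the natural choice $a_{i_j} = \alpha^{-(j-1)(k-1)}$ works and recovers the stated $\mathbf{a}$ verbatim. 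Theorem~\ref{GRSMDSHULL} then gives $\mathrm{Hull}_{\mathrm{H}}(GRS_k(\mathbf{b},\mathbf{a})) = GRS_{k-1}(\mathbf{b},\mathbf{a})$. No substantial obstacle is anticipated: the argument is a careful substitution into Theorem~\ref{GRSMDSHULL}, with the only subtle point being that the single zero coordinate of the codeword must lie precisely at the extended position, which is automatic once we set $\theta_{q-1,q-1}=0$ in the parametrization.
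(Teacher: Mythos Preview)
Your proposal is correct and follows essentially the same approach as the paper: set $\theta_{k-1,k-1}=1$ and all other parameters in \eqref{eqtr} to zero, producing the codeword $\left(\alpha^{-0(k-1)(q+1)},\ldots,\alpha^{-(q^2-2)(k-1)(q+1)},0\right)\in\mathcal{E}(\mathcal{D}_{k,k-1})\setminus\mathcal{E}(\mathcal{D}_{k,k})$, then invoke Theorem~\ref{GRSMDSHULL}. Your write-up is in fact more explicit than the paper's, spelling out the recovery of $\widehat{\mathbf{a}}$ and $\widehat{\mathbf{b}}$ from Lemma~\ref{lemma8}, which the paper leaves implicit.
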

\begin{proof}
By setting $\theta_{k-1,k-1}=1$ and $\theta_{t,t}=\theta_{i,j}=0$ for each $t\in[k,q-1]$ and each $(i,j)\in T$ in \eqref{eqtr}, we confirm that
\[
\left(\alpha^{-0(k-1)(q+1)},\alpha^{-1(k-1)(q+1)},\ldots, \alpha^{-(q^2-2)(k-1)(q+1)},0 \right) \in \mathcal{E}(\mathcal{D}_{k,k-1})\setminus \mathcal{E}(\mathcal{D}_{k,k}).
\]
By Theorem \ref{GRSMDSHULL}, ${\rm Hull}_{\rm H}(GRS_k(\mathbf{b},\mathbf{a})) = GRS_{k-1}(\mathbf{b},\mathbf{a})$.
\end{proof}

\begin{thm}\label{GRScon3}
Let $k$ be a positive integer such that $k <q$. Let $\alpha$ be a primitive element of $\Ff_{q^2}$. Let
\[
s=\gcd(k-1,q-1) \mbox{ and } B=\left\{i+\frac{q-1}{s}j \, : \, i\in \left[\frac{q-1}{s}-1\right] \mbox{ and } j\in[0,(q+1)s-1]\right\}.
\]
If $\mathbf{a}=((a_\ell)_{\ell\in B},a)$ and $\mathbf{b}=((\alpha^\ell)_{\ell\in B},0)$ are vectors of length $q^2-s(q+1)$ with $a^{q+1}=-1$ and
$a_\ell^{q+1} = \alpha^{-\ell(k-1)(q+1)}-1$ for each $\ell\in B$, then ${\rm Hull}_{\rm H}(GRS_k(\mathbf{b},\mathbf{a})) = GRS_{k-1}(\mathbf{b},\mathbf{a})$.
\end{thm}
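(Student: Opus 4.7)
The plan is to apply Theorem~\ref{GRSMDSHULL} to an explicit codeword of $\mathcal{E}(\mathcal{D}_{k,k-1}) \setminus \mathcal{E}(\mathcal{D}_{k,k})$ of weight exactly $q^2 - s(q+1)$. Using the parameterization in Remark~\ref{remark1}, I would switch on only the two diagonal parameters $\theta_{k-1,k-1}$ and $\theta_{q-1,q-1}$, setting every other $\theta_{t,t}$ and $\theta_{i,j}$ to zero and keeping $\theta_{k-1,k-1} \neq 0$ so that the resulting word sits in $\mathcal{E}(\mathcal{D}_{k,k-1}) \setminus \mathcal{E}(\mathcal{D}_{k,k})$. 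With the appropriate sign choice, absorbed by a nonzero scalar in $\Ff_q^*$ if needed, the first $q^2 - 1$ entries collapse through $\alpha^{(q-1)(q+1)} = 1$ to the closed form $c_r = \alpha^{-r(k-1)(q+1)} - 1$, while the extended last coordinate reduces to $-1$, matching the constant required by the theorem.

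The next step is to identify the support. Since $\alpha^{q+1}$ is a primitive element of $\Ff_q^*$, the entry $c_r$ vanishes exactly when $(q-1) \mid r(k-1)$, equivalently when $(q-1)/s$ divides $r$. Counting multiples of $(q-1)/s$ in $[0, q^2-2]$ yields exactly $s(q+1)$ zeros, because $(q^2 - 1)/((q-1)/s) = s(q+1)$. Writing each $r \in [0, q^2-2]$ uniquely as $r = i + \frac{q-1}{s} j$ with $i \in [0,(q-1)/s - 1]$ and $j \in [0, s(q+1) - 1]$, the nonzero coordinates are precisely those with $i \neq 0$, which recovers the index set $B$ of the statement. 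Combined with the nonzero last coordinate, the overall weight is $(q^2 - 1 - s(q+1)) + 1 = q^2 - s(q+1)$, matching the desired length.

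Finally, applying Theorem~\ref{GRSMDSHULL} to this codeword will give the conclusion directly. The puncturing procedure dictates $\widehat{\mathbf{b}} = ((\alpha^\ell)_{\ell \in B}, 0)$, and each scalar $a_\ell$ (respectively $a$) exists in $\Ff_{q^2}^*$ because the norm map $x \mapsto x^{q+1} \colon \Ff_{q^2}^* \to \Ff_q^*$ is surjective and each of the values $\alpha^{-\ell(k-1)(q+1)} - 1$ for $\ell \in B$, as well as $-1$, lies in $\Ff_q^*$. Since the codeword we produced is certifiably outside $\mathcal{E}(\mathcal{D}_{k,k})$, Theorem~\ref{GRSMDSHULL} yields ${\rm Hull}_{\rm H}(GRS_k(\mathbf{b}, \mathbf{a})) = GRS_{k-1}(\mathbf{b}, \mathbf{a})$. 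I expect the main technical hurdle to be the bookkeeping for the support, namely confirming that the zero set of the $c_r$ has the asserted cardinality and that the nonzero positions are in bijection with $B$ via the decomposition $r = i + (q-1)j/s$, all while keeping the sign conventions aligned with the parameterization in \eqref{eqtr} and with the precise normalization in the theorem statement.
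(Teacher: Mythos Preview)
Your proposal is correct and follows essentially the same route as the paper: choose $\theta_{k-1,k-1}=1$, $\theta_{q-1,q-1}=-1$, and all other parameters zero in \eqref{eqtr}, obtain the codeword with entries $c_r=\alpha^{-r(k-1)(q+1)}-1$ and final coordinate $-1$, identify its zero positions as the multiples of $(q-1)/s$ in $[0,q^2-2]$, and then invoke Theorem~\ref{GRSMDSHULL}. Your write-up in fact supplies more justification than the paper does (the norm surjectivity for the existence of the $a_\ell$, the explicit bijection between the support and $B$), but the underlying argument is identical.
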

\begin{proof}
In (\ref{eqtr}), let $\theta_{k-1,k-1}=1$, $\theta_{q-1,q-1}=-1$, and $\theta_{t,t}=\theta_{i,j}=0$ for each $t\in[k,q-2]$ and each $(i,j) \in T$. Then the corresponding codeword in $\mathcal{E}(\mathcal{D}_{k,k-1})\setminus \mathcal{E}(\mathcal{D}_{k,k})$ is
\[
\mathbf{c} = \left(\alpha^{-0(k-1)(q+1)}-1,\ldots, \alpha^{-(q^2-2)(k-1)(q+1)}-1,-1 \right).
\]
Since $s=\gcd(k-1,q-1)$, the zero components of $\mathbf{c}$ are $\alpha^{-i\frac{q-1}{s}(k-1)(q+1)}-1$ with $i\in[0,s(q+1)-1]$. By Theorem \ref{GRSMDSHULL},
${\rm Hull}_{\rm H}(GRS_k(\mathbf{b},\mathbf{a})) = GRS_{k-1}(\mathbf{b},\mathbf{a})$.
\end{proof}

\begin{thm}\label{GRScon4}
Given positive integers $k$ and $m$ with $k<q$ and $k-1<m<q-1$, let $\alpha$ be a primitive element of $\Ff_{q^2}$. Let
\[
s =\gcd(m-k+1,q-1) \mbox{ and } B=\left\{i+\frac{q-1}{s}j \, : \, i \in \left[\frac{q-1}{s}-1],j\in[0,(q+1)s-1 \right]\right\}.
\]
If $\mathbf{a}=((a_\ell)_{\ell\in B})$ and $\mathbf{b}=((\alpha^\ell)_{\ell\in B})$ are vectors of length $(q+1)(q-1-s)$, with $a_\ell^{q+1}=\alpha^{-\ell(k-1)(q+1)}-1$ for each $\ell\in B$, then
${\rm Hull}_{\rm H}(GRS_k(\mathbf{b},\mathbf{a})) = GRS_{k-1}(\mathbf{b},\mathbf{a})$.
\end{thm}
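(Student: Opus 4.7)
The plan is to apply Theorem \ref{GRSMDSHULL} by exhibiting a codeword of weight $(q+1)(q-1-s)$ in $\mathcal{E}(\mathcal{D}_{k,k-1}) \setminus \mathcal{E}(\mathcal{D}_{k,k})$ whose support is indexed exactly by $B$. Generalizing the construction used in the proof of Theorem \ref{GRScon3} (which recovers the boundary case $m = q-1$), I would use the trace parametrization in \eqref{eqtr} and set $\theta_{k-1,k-1} = 1$, $\theta_{m,m} = -1$, with every remaining free parameter equal to zero. This assignment is admissible because the hypothesis $k \le m \le q-2$ places $(m,m)$ inside the permitted diagonal index range $t \in [k,q-1]$, and because $\theta_{k-1,k-1} \ne 0$ the resulting codeword genuinely lies in $\mathcal{E}(\mathcal{D}_{k,k-1}) \setminus \mathcal{E}(\mathcal{D}_{k,k})$.

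Next I would compute the coordinates explicitly. For $r \in [0,q^2-2]$,
\[
c_r \;=\; \alpha^{-r(k-1)(q+1)} - \alpha^{-rm(q+1)} \;=\; \alpha^{-r(k-1)(q+1)}\bigl(1 - \alpha^{-r(m-k+1)(q+1)}\bigr),
\]
so $c_r = 0$ iff $\alpha^{r(m-k+1)(q+1)} = 1$ iff $\frac{q-1}{s}\mid r$, with $s = \gcd(m-k+1,q-1)$. This identifies the $s(q+1)$ zero positions in the unextended block as the multiples of $\frac{q-1}{s}$ lying in $[0,q^2-2]$. The extension coordinate equals $-\sum_{r=0}^{q^2-2} c_r$, and both constituent geometric sums vanish because the hypotheses $1 \le k-1 \le q-2$ and $k \le m \le q-2$ force $(k-1)(q+1),\, m(q+1)\not\equiv 0 \pmod{q^2-1}$. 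Consequently the number of nonzero coordinates of the extended codeword is $q^2 - s(q+1) - 1 = (q+1)(q-1-s)$, and these coordinates are indexed precisely by $B = \{\, i + \tfrac{q-1}{s}j : i \in [1,\tfrac{q-1}{s}-1],\ j \in [0,s(q+1)-1]\,\}$.

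Finally I would invoke Theorem \ref{GRSMDSHULL} applied to the codeword obtained by restricting to $B$. This yields ${\rm Hull}_{\rm H}(GRS_k(\mathbf{b},\mathbf{a})) = GRS_{k-1}(\mathbf{b},\mathbf{a})$ with $\mathbf{b} = (\alpha^\ell)_{\ell \in B}$ and any $\mathbf{a} = (a_\ell)_{\ell \in B}$ satisfying $a_\ell^{q+1} = c_\ell = \alpha^{-\ell(k-1)(q+1)} - \alpha^{-\ell m(q+1)}$; such $a_\ell$ exist in $\F_{q^2}^*$ because each $c_\ell$ is a nonzero element of $\F_q$ and thus lies in the image of the surjective norm map $x \mapsto x^{q+1}$. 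The main obstacle is the bookkeeping that pins down the zero set as the multiples of $\frac{q-1}{s}$ in $[0,q^2-2]$ together with the extension coordinate; once this is in hand, recognizing its complement as $B$ and applying Theorem \ref{GRSMDSHULL} is routine.
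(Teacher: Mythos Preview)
Your approach is essentially identical to the paper's: the paper's proof also selects $\theta_{k-1,k-1}=1$, $\theta_{m,m}=-1$, and all other parameters zero in \eqref{eqtr}, then simply refers back to the argument of Theorem~\ref{GRScon3}. Your explicit computation of the support and of the extension coordinate is exactly what ``the same argument'' amounts to.

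One point worth flagging: your computation yields $a_\ell^{q+1}=\alpha^{-\ell(k-1)(q+1)}-\alpha^{-\ell m(q+1)}$, whereas the printed statement has $a_\ell^{q+1}=\alpha^{-\ell(k-1)(q+1)}-1$. Your formula is the one consistent both with the parameter choice $\theta_{m,m}=-1$ used in the paper's proof and with the stated value $s=\gcd(m-k+1,q-1)$ governing the zero set; the formula in the statement appears to be a typo carried over verbatim from Theorem~\ref{GRScon3} (where $m=q-1$ forces $\alpha^{-\ell m(q+1)}=1$). So you have proved the theorem as it should read, not quite as it is printed.
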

\begin{proof}
In (\ref{eqtr}), we put $\theta_{k-1,k-1}=1$, $\theta_{m,m}=-1$, and $\theta_{t,t}=\theta_{i,j}=0$ for each $t\in([k,q-1]\setminus\{m\})$ and each $(i,j)\in T$. The same argument as the one in the proof of Theorem \ref{GRScon3} leads us to the desired conclusion.
\end{proof}

The representation in \eqref{eqtr} provides a general technique to construct GRS codes whose Hermitian hulls are also MDS. To construct MDS EAQECCs with large minimum distances, we increase the dimension of GRS codes in the above four theorems, namely Theorems \ref{GRScon1} to \ref{GRScon4}, and obtain GRS codes whose Hermitian hulls contain MDS codes.

\begin{thm}\label{GRScon1-1}
Let $z$ and $k$ be positive integers such that $1\leq z < \lfloor q/2 \rfloor$ and $zq\leq k<(z+1)q-z-1$. Let $\alpha$ be a primitive element of $\Ff_{q^2}$. The code $GRS_k(\mathbf{b},\mathbf{1})$, of length $q^2$, with
\[
\mathbf{1} = (1,\ldots,1) \mbox{ and } \mathbf{b} =(\alpha^0,\ldots,\alpha^{q^2-2},0)
\]
has a $(k-z^2)$-dimensional Hermitian hull that contains $GRS_{q-1}(\mathbf{b},\mathbf{1})$ as a subcode.
\end{thm}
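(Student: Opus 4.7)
The plan is to compute the Hermitian Gram matrix on the natural monomial basis of $GRS_k(\mathbf{b}, \mathbf{1})$ and extract its rank and left kernel. Let $\mathbf{v}_t := (b_1^t, \ldots, b_{q^2}^t)$ for $t \in [0, k-1]$, which span $GRS_k(\mathbf{b}, \mathbf{1})$. Using $b_i = \alpha^{i-1}$ for $i \in [1, q^2-1]$ together with $b_{q^2} = 0$, a direct evaluation gives
\[
\langle \mathbf{v}_s, \mathbf{v}_t \rangle_{\rm H} = \sum_{j=0}^{q^2-2} \alpha^{j(s+tq)} + \delta_{s,0}\,\delta_{t,0},
\]
which equals $-1$ precisely when $s + tq$ is a positive multiple of $q^2-1$, and vanishes otherwise.

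Next, I would enumerate the ``bad pairs'' $(s, t) \in [0, k-1]^2$ with $s + tq = a(q^2-1)$ for some integer $a \geq 1$. The upper bound $k < (z+1)q - z - 1$ forces $a \le z$, and reducing $s \equiv -a \pmod q$ yields the clean parametrization
\[
(s, t) = (bq - a,\, aq - b), \qquad (a, b) \in [1, z]^2.
\]
The hypothesis $zq \le k$ ensures that all $z^2$ such pairs lie in $[0, k-1]^2$ and that no further pairs appear. Both the bad row and bad column index sets coincide with $S := \{bq - a : (a, b) \in [1, z]^2\}$, of cardinality $z^2$, and the $S \times S$ submatrix of $M := (\langle \mathbf{v}_s, \mathbf{v}_t \rangle_{\rm H})_{s,t \in [0,k-1]}$ is $-P$, where $P$ is the permutation matrix for the involution $(a, b) \mapsto (b, a)$. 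Hence $\rank(M) = z^2$.

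Using that the Frobenius $x \mapsto x^q$ is a bijection of $\Ff_{q^2}$, the Hermitian orthogonality condition $\sum_s c_s \mathbf{v}_s \in GRS_k(\mathbf{b}, \mathbf{1})^{\perp_{\rm H}}$ translates into the linear system $\mathbf{c} M = \mathbf{0}$, whose solution space has dimension $k - z^2$. Since every element of $GRS_k(\mathbf{b}, \mathbf{1})$ corresponds to a unique $\mathbf{c} \in \Ff_{q^2}^k$, this yields $\dim({\rm Hull}_{\rm H}(GRS_k(\mathbf{b}, \mathbf{1}))) = k - z^2$. A basis of the hull is given by $\{\mathbf{v}_s : s \in [0, k-1] \setminus S\}$, and the inclusion $GRS_{q-1}(\mathbf{b}, \mathbf{1}) \subseteq {\rm Hull}_{\rm H}(GRS_k(\mathbf{b}, \mathbf{1}))$ is read off by verifying that the low indices $s \in [0, q-2]$ avoid $S$.

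The main obstacle is the combinatorial bookkeeping behind the parametrization: showing that $(s, t) = (bq - a, aq - b)$ with $(a, b) \in [1, z]^2$ enumerates \emph{exactly} the bad pairs in $[0, k-1]^2$ under the given constraints on $k$, and that the resulting row and column index sets coincide cleanly so that the $z^2 \times z^2$ Gram block is a scaled permutation. Once this bookkeeping is in place, the rank computation and the subcode containment both follow by direct inspection of the parametrization.
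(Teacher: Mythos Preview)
Your approach---computing the Hermitian Gram matrix $G_kG_k^\dagger$, parametrizing its nonzero entries as $(s,t)=(bq-a,\,aq-b)$ with $(a,b)\in[1,z]^2$, and reading off $\rank(G_kG_k^\dagger)=z^2$---is exactly the paper's. Your bookkeeping for the bad pairs is correct, and the conclusion $\dim {\rm Hull}_{\rm H}(GRS_k(\mathbf{b},\mathbf{1}))=k-z^2$ is sound.

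The gap is in your last sentence. You assert that the indices $s\in[0,q-2]$ avoid $S=\{bq-a:a,b\in[1,z]\}$, but the smallest element of $S$ is $q-z$ (take $b=1$, $a=z$), and $q-z\le q-2$ once $z\ge 2$. Concretely, for $q=7$, $z=2$, $k=14$ one has $5+7\cdot 13=96=2(q^2-1)$, so $\langle\mathbf{v}_5,\mathbf{v}_{13}\rangle_{\rm H}=-1$ and $\mathbf{v}_5\in GRS_{q-1}(\mathbf{b},\mathbf{1})$ is \emph{not} in the hull of $GRS_{14}(\mathbf{b},\mathbf{1})$. The paper's own justification at this point is equally inadequate: it invokes the earlier result ${\rm Hull}_{\rm H}(GRS_q(\mathbf{b},\mathbf{1}))=GRS_{q-1}(\mathbf{b},\mathbf{1})$, which only yields $GRS_{q-1}\subseteq GRS_q^{\perp_{\rm H}}$; since $GRS_k^{\perp_{\rm H}}\subsetneq GRS_q^{\perp_{\rm H}}$ for $k>q$, this does not give $GRS_{q-1}\subseteq GRS_k^{\perp_{\rm H}}$. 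Thus the subcode containment in the statement holds only for $z=1$; for $z\ge 2$ neither your argument nor the paper's can be repaired, because the claim itself fails.
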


\begin{proof}
A generator matrix of $GRS_k(\mathbf{b},\mathbf{1})$ is
\[
G_k :=\begin{pmatrix}
\mathbf{v}_0 \\
\mathbf{v}_1\\
\vdots\\
\mathbf{v}_{k-1}
\end{pmatrix}
=\begin{pmatrix}
1 & 1 &\cdots & 1 & 1\\
1 & \alpha &\cdots & \alpha^{q^2-2}& 0\\
\vdots& \vdots &  \ddots & \vdots&\vdots\\
1 & \alpha^{k-1} &\cdots & \alpha^{(q^2-2)(k-1)}&0
\end{pmatrix}.
\]
We can then verify that $G_kG_k^{\dagger} = \left(\mathbf{v}_i\mathbf{v}_j^{\dagger} \right)_{i\in[0,k-1],j\in[0,k-1]}$ with
\[
\mathbf{v}_i\mathbf{v}_j^{\dagger}=
\begin{cases}
\sum_{\ell=0}^{q^2-1}\alpha^{(i+qj)\ell} & \mbox{if } i,j \in [1,k-1], \\
0& \mbox{if } i=0 \mbox{ or } j=0,
\end{cases}
\]
where $^\dag$ denotes the conjugate transpose.

Since, for each $i,j\in[1,k-1]$,
\[
\sum_{\ell=0}^{q^2-1}\alpha^{(i+qj)\ell}=
\begin{cases}
0 & \mbox{if } (q^2-1) \mbox{ does not divide } (i+qj),\\
-1  & \mbox{if } (q^2-1) \mbox{ divides } (i+qj),
\end{cases}
\]
we know that $(q^2-1)$ divides $(i+qj)$ if and only if $(i,j) \in \left\{(xq-y,yq-x):x,y\in[z]\right\}$. Let $\mathbf{e}_\ell$ be the unit vector of length $k$ with the property that the $\ell^{\rm th}$ coordinate is $1$ and the other coordinates are $0$. The set formed by the nonzero rows of $G_kG_k^{\dagger}$ is
\[
\{-\mathbf{e}_\ell:\ell=yq-x+1,x,y\in[z]\}.
\]
Hence, $\rank \left(G_kG_k^{\dagger}\right) = z^2$ and the Hermitian hull of $GRS_k(\mathbf{b},\mathbf{1})$ has dimension $k-z^2$. By Theorem \ref{GRScon1}, $GRS_{q-1}(\mathbf{b},\mathbf{1})$ is a subcode of the Hermitian hull of $GRS_k(\mathbf{b},\mathbf{1})$.
\end{proof}

\begin{thm}\label{GRScon2-1}
Let $z,f,k$ be positive integers such that
\[
1\leq z < \left \lfloor \frac{q^2-1}{2q} \right \rfloor, \, z+f+1<q, \, \mbox{ and }
zq\leq k<(z+1)q-z-f-1.
\]
Let $\alpha$ be a primitive element of $\Ff_{q^2}$. Given $GRS_k(\mathbf{b},\mathbf{a})$ of length $q^2-1$ with defining vectors
\[
\mathbf{a}=(\alpha^{-0(q-f-1)},\ldots,\alpha^{-(q^2-2)(q-f-1)}) \mbox{ and } \mathbf{b}=(\alpha^0,\ldots,\alpha^{q^2-2}),
\]
the Hermitian hull of $GRS_k(\mathbf{b},\mathbf{a})$ has dimension $k-z^2$ and contains $GRS_{q-f-1}(\mathbf{b},\mathbf{a})$ as a subcode.
\end{thm}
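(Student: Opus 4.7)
My plan is to follow the template of the proof of Theorem \ref{GRScon1-1}, suitably adjusted for the new defining vectors. Setting $\mu := q-f-1$, the $i$-th row of the generator matrix of $GRS_k(\mathbf{b},\mathbf{a})$ is $\mathbf{v}_i = (\alpha^{\ell(i-\mu)})_{\ell \in [0,q^2-2]}$. A direct expansion yields
\[
\mathbf{v}_i\cdot\mathbf{v}_j^{\dagger} = \sum_{\ell=0}^{q^2-2} \alpha^{\ell[(i-\mu)+q(j-\mu)]} = \begin{cases} -1 & \text{if } (q^2-1) \mid (i+qj-\mu(q+1)), \\ 0 & \text{otherwise,}\end{cases}
\]
since $\alpha$ has multiplicative order $q^2-1$. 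Hence the Gram matrix $G_kG_k^{\dagger}$ is sparse and controlled purely by a divisibility condition.

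The first main step is to enumerate the pairs $(i,j)\in[0,k-1]^2$ satisfying this divisibility. Writing $i+qj = \mu(q+1)+m(q^2-1)$ for a nonnegative integer $m$, I would bound $m\in[0,z-1]$ by using $\mu(q+1)<q^2-1$ (so $m\geq 0$) together with $i+qj\leq(k-1)(q+1)$ and the hypothesis $k<(z+1)q-z-f-1$ (so $m\leq z-1$). For each such $m$, a residue-class analysis modulo $q$ parametrizes the valid solutions as $(i,j)=(qx-y-f,\,qy-x-f)$ with $y:=m+1$ and $x\in[z]$; the lower bound $zq\leq k$ guarantees that each such pair sits in $[0,k-1]^2$, while the upper bound on $k$ rules out any extra solution with $x\geq z+1$.

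The second step is a rank count. Since $z<q$, the map $(x,y)\mapsto qx-y-f$ is injective on $[z]^2$, so the $z^2$ nonzero entries of $G_kG_k^{\dagger}$ occupy $z^2$ distinct rows. Each such row is a scalar multiple of a standard basis vector, so $\rank(G_kG_k^{\dagger})=z^2$, and therefore $\dim({\rm Hull}_{\rm H}(GRS_k(\mathbf{b},\mathbf{a})))=k-z^2$.

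For the subcode claim, the plan is to invoke Theorem \ref{GRScon2}: at the base dimension $k=q-f$, it gives ${\rm Hull}_{\rm H}(GRS_{q-f}(\mathbf{b},\mathbf{a}))=GRS_{q-f-1}(\mathbf{b},\mathbf{a})$, so every generator $\mathbf{v}_i$ with $i\in[0,q-f-2]$ is already Hermitian-orthogonal to every $\mathbf{v}_j$ with $j\in[0,q-f-1]$. The main obstacle is then upgrading this orthogonality from $j\in[0,q-f-1]$ to all $j\in[0,k-1]$; this reduces to re-examining the enumeration from the first step and verifying that no nonzero-Gram pair $(qx-y-f,\,qy-x-f)$ has its row index in $[0,q-f-2]$. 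I expect this final boundary check to be the delicate point of the whole argument, since it must exploit the precise interplay between $q$, $z$, and $f$ enforced by $z+f+1<q$ and $zq\leq k<(z+1)q-z-f-1$ to keep the minimum nonzero-row index above the threshold $q-f-2$.
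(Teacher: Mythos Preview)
Your plan for the Gram-matrix computation and the rank count is exactly what the paper intends: its proof is the single sentence ``The same argument as the one in the proof of Theorem~\ref{GRScon1-1} leads to the desired conclusion,'' and your first two steps are precisely that argument, with the shift $i+qj\mapsto i+qj-\mu(q+1)$ accounting for the new~$\mathbf a$. The enumeration $(i,j)=(qx-y-f,\,qy-x-f)$ with $x,y\in[z]$ and the conclusion $\rank(G_kG_k^{\dagger})=z^2$ are correct.

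The genuine gap is in your third step. The ``delicate boundary check'' you anticipate \emph{cannot succeed for $z\ge 2$}. Taking $(x,y)=(1,z)$ in your own parametrisation gives the nonzero-Gram pair $(i,j)=(q-z-f,\,qz-1-f)$; since $z+f+1<q$ and $k\ge zq$, both indices lie in $[0,k-1]$, and the row index $i=q-z-f\le q-f-2$ falls inside $[0,q-f-2]$ whenever $z\ge 2$. Hence $\mathbf v_{q-z-f}\notin GRS_k(\mathbf b,\mathbf a)^{\perp_{\rm H}}$, so $GRS_{q-f-1}(\mathbf b,\mathbf a)$ is not contained in the Hermitian hull. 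A concrete instance is $q=5$, $f=1$, $z=2$, $k=10$: here $(i,j)=(2,8)$ and indeed $i+qj-\mu(q+1)=2+40-18=24=q^2-1$, so $\mathbf v_2\mathbf v_8^{\dagger}=-1$ while $\mathbf v_2\in GRS_3(\mathbf b,\mathbf a)$. The same obstruction already sits inside Theorem~\ref{GRScon1-1} itself (take $q=5$, $z=2$, $k=10$: then $\mathbf v_3\mathbf v_9^{\dagger}=-1$, so $GRS_4(\mathbf b,\mathbf 1)\not\subseteq\mathrm{Hull}_{\rm H}(GRS_{10}(\mathbf b,\mathbf 1))$). In short, the subcode assertion in both theorems is only valid for $z=1$; for general $z$ the containment your enumeration actually yields is $GRS_{q-z-f}(\mathbf b,\mathbf a)\subseteq\mathrm{Hull}_{\rm H}(GRS_k(\mathbf b,\mathbf a))$. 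Your dimension formula $k-z^2$ is unaffected and your argument for it is sound, but the subcode claim as stated cannot be proved because it is not true for $z\ge 2$.
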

\begin{proof}
The same argument as the one in the proof of Theorem \ref{GRScon1-1} leads to the desired conclusion.
\end{proof}

\begin{thm}\label{GRScon3-1}
Let $z,f,k$ be positive integers such that
\[
1\leq z < \left \lfloor \frac{q^2-s(q+1)}{2q} \right \rfloor, \,
z+f+1<q, \mbox{ and }
zq \leq k <(z+1)q-z-f-1,
\]
with $s :=\gcd(q-f-1,q-1)$. Let $\alpha$ be a primitive element of $\Ff_{q^2}$. Let
\[
B:=\left\{i+\frac{q-1}{s} j : i \in \left[ \frac{q-1}{s}-1 \right],j \in[0,(q+1)s-1]\right\}.
\]
Let $\mathbf{a}=((a_\ell)_{\ell\in B},a)$ and $\mathbf{b}=((\alpha^\ell)_{\ell\in B},0)$ be vectors of length $q^2-s(q+1)$ with
\[
a^{q+1}=-1 \mbox{ and } a_\ell^{q+1} = \alpha^{-\ell(q-f-1)(q+1)}-1 \mbox{ for each } \ell\in B.
\]
The following statements hold.
\begin{enumerate}
\item If $f\geq z$, then $GRS_k(\mathbf{b},\mathbf{a})$ has a $(k-2z^2)$-dimensional Hermitian hull that contains  $GRS_{q-f-1}(\mathbf{b},\mathbf{a})$ as a subcode.
\item If $1\leq f < z$, then $GRS_k(\mathbf{b},\mathbf{a})$ has a $(k-z^2-zf)$-dimensional Hermitian hull that contains $GRS_{q-f-1}(\mathbf{b},\mathbf{a})$ as a subcode.
\end{enumerate}
\end{thm}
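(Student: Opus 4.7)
The plan is to mimic the strategy used in Theorems \ref{GRScon1-1} and \ref{GRScon2-1}: compute $G_k G_k^{\dagger}$ directly and read off its rank, exploiting the identity $\dim({\rm Hull}_{\rm H}(\cc)) = k - \rank(G_k G_k^{\dagger})$. Writing the rows of $G_k$ as $\mathbf{v}_i = (a_t b_t^i)_t$, the $(i,j)$-entry is $\mathbf{v}_i \mathbf{v}_j^{\dagger} = \sum_t a_t^{q+1} b_t^{i+qj}$. Substituting $a_\ell^{q+1} = \alpha^{-\ell(q-f-1)(q+1)} - 1$ for $\ell \in B$ and $a^{q+1} = -1$ for the extra coordinate, I would split each $\sum_{\ell \in B}$ as $\sum_{\ell=0}^{q^2-2} - \sum_{\ell \in B^c}$, where $B^c = \{t(q-1)/s : 0 \le t \le s(q+1)-1\}$ is the set of $\ell$ where $\alpha^{-\ell(q-f-1)(q+1)} = 1$. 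Each piece is a finite geometric sum and evaluates to a nonzero constant only when the exponent satisfies a specific divisibility condition modulo $q^2 - 1$ or modulo $s(q+1)$.

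After sorting the cases, the nonzero entries of $G_k G_k^{\dagger}$ appear in exactly two configurations: those with $i + qj \equiv 0 \pmod{q^2-1}$, giving entry $+1$ (\emph{type C}), and those with $i + qj \equiv (q-f-1)(q+1) \pmod{q^2-1}$, giving entry $-1$ (\emph{type B}). The two congruences cannot hold simultaneously, since this would force $(q-1) \mid (q-f-1)$, which is ruled out by $1 \le f < q - z - 1$. The type C solutions in $[0,k-1]^2$ are precisely $(xq - y, yq - x)$ for $(x,y) \in [1,z]^2$, by the same argument used in Theorem \ref{GRScon1-1}. A parallel computation, using $(q-f-1)(q+1) = (q^2-1) - f(q+1)$, shows that the type B solutions are $(xq - y - f, yq - x - f)$ for $(x,y) \in [1,z]^2$. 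The inequalities $zq \le k < (z+1)q - z - f - 1$ and $z + f + 1 < q$ are just tight enough to ensure these parametrizations capture all solutions in $[0,k-1]^2$.

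Next I would count distinct nonzero rows. A type C row index $xq - y$ equals a type B row index $x'q - y' - f$ iff $x = x'$ and $y = y' + f$. For $f \ge z$, the constraint $y \ge f + 1 > z$ is out of range, so the two row-index sets are disjoint and there are $2z^2$ distinct nonzero rows. For $1 \le f < z$, each $x \in [1,z]$ produces $z - f$ coincidences and thus $(z + z) - (z - f) = z + f$ distinct rows, giving a total of $z(z + f) = z^2 + zf$. By the $(x,y) \leftrightarrow (y,x)$ symmetry of the parametrizations, the columns exhibit the same structure.

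The hardest step, which I flag as the main obstacle, is to confirm that these nonzero rows are linearly independent, so that the rank equals the row count. For $f \ge z$ this is immediate: each nonzero row is a distinct signed standard basis vector. For $1 \le f < z$, some rows carry two nonzero entries (a $+1$ from type C and a $-1$ from type B at different columns), and the rows together with the columns form chains under the shared-index structure. A triangular reduction walking down each chain, using the single-entry rows at the chain endpoints to anchor pivots, successively resolves the two-entry rows in between. Once independence is established, $\rank(G_k G_k^{\dagger})$ is $2z^2$ or $z^2 + zf$ as appropriate, yielding the stated hull dimensions; the inclusion $GRS_{q-f-1}(\mathbf{b}, \mathbf{a}) \subseteq {\rm Hull}_{\rm H}(GRS_k(\mathbf{b}, \mathbf{a}))$ then follows directly from Theorem \ref{GRScon3}.
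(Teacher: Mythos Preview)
Your proposal is correct and follows essentially the same approach as the paper: compute $G_kG_k^{\dagger}$, identify its nonzero entries via the two divisibility conditions on $i+qj$ (your types C and B match the paper's Cases~3 and~2), parametrize the solutions exactly as you do, count the distinct nonzero rows, and invoke Theorem~\ref{GRScon3} for the subcode inclusion. Your sketch of the linear-independence step for $1\le f<z$ via chain structure is in fact slightly more explicit than the paper, which simply lists the three families of nonzero rows and asserts $\rank(G_kG_k^{\dagger})=z^2+zf$ without further justification.
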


\begin{proof}
Let $G_k$ be a generator matrix of $GRS_k(\mathbf{b},\mathbf{a})$ in the form of \eqref{GRSgenerator} and let $\mathbf{v}_0,\mathbf{v}_1,\cdots,\mathbf{v}_{k-1}$ be the rows of $G_k$. Hence,
\[
G_k G_k^{\dagger} = \left(\mathbf{v}_i\mathbf{v}_j^{\dagger}\right)_{i\in[0,k-1],j\in[0,k-1]},
\]
with
\[
\mathbf{v}_i\mathbf{v}_j^{\dagger} =
\begin{cases}
0 & \mbox{if } i=j=0,\\
\sum_{\ell\in B} \alpha^{(i-q+f+1+q(j-q+f+1))\ell} -
\sum_{\ell\in B}\alpha^{(i+qj)\ell} & \mbox{otherwise.}
\end{cases}
\]
Let $g_{i,j}$ stand for $i-q+f+1+q(j-q+f+1)$. Since $1\leq z < \lfloor q/2 \rfloor$ and $z+f+1<q$, we know that $(q^2-1)$ cannot simultaneously divide both $g_{i,j}$ and $(i+qj)$. For completeness, we compute $\mathbf{v}_i\mathbf{v}_j^{\dagger}$ in three cases.
\begin{enumerate}[wide, itemsep=0pt, leftmargin =0pt, widest={{\bf Case $2$}}]
\item[{\bf Case $1$}:] If $(q^2-1)$ divides neither $g_{i,j}$ nor $(i+qj)$, then
\[
\mathbf{v}_i\mathbf{v}_j^{\dagger} = -\sum_{\ell=0}^{s(q+1)-1}\alpha^{g_{i,j}\ell\frac{q-1}{s}} + \sum_{\ell=0}^{s(q+1)-1}\alpha^{(i+qj)\ell\frac{q-1}{s}},
\]
since
\[
\sum_{\ell=0}^{q^2-1}\alpha^{g_{i,j}\ell} =\sum_{\ell=0}^{q^2-1}\alpha^{(i+qj)\ell} = 0.
\]
Recalling that $s=\gcd(q-f-1,q-1)$, we infer that $s(q+1)$ divides $g_{i,j}$ if and only if $s(q+1)$ divides $(i+qj)$. Thus, $\mathbf{v}_i\mathbf{v}_j^{\dagger} = 0$ regardless of whether $s(q+1)$ divides $g_{i,j}$ or not.

\item[{\bf Case $2$}:] If $(q^2-1)$ divides $g_{i,j}$ but does not divide  $(i+qj)$, then
\[
\mathbf{v}_i\mathbf{v}_j^{\dagger} = |B| + \sum_{\ell=0}^{s(q+1)-1} \alpha^{(i+qj)\ell\frac{q-1}{s}}.
\]
Since $(q^2-1)$ divides $g_{i,j}$, we know that $s(q+1)$ divides $g_{i,j}$, which implies that $s(q+1)$ divides $(i+qj)$, giving us  $\mathbf{v}_i\mathbf{v}_j^{\dagger}=-1$.

\item[{\bf Case $3$}:] If $(q^2-1)$ does not divide $g_{i,j}$ but divides $(i+qj)$, then, using a similar method as in {\bf Case $2$}, we get $\mathbf{v}_i\mathbf{v}_j^{\dagger}=1$.
\end{enumerate}
		
The proof of Theorem \ref{GRScon1-1} tells us that $(q^2-1)$ divides $(i+qj)$ if and only if $(i,j) \in \left\{(xq-y,yq-x):x,y\in[z]\right\}$ for each $(i,j)\in\left([0,k-1]\times[0,k-1]\setminus\{(0,0)\}\right)$. Since $z+f+1<q$ and $zq\leq k<(z+1)q-z-f-1$, we conclude that $(q^2-1)$ divides $g_{i,j}$ if and only if
\[
(i,j)\in\left\{((x+1)q-y-f-1,(y+1)q-x-f-1):x,y\in[0,z-1]\right\}
\]
for each
\[
(i,j)\in\left([0,k-1]\times[0,k-1]\setminus\{(0,0)\}\right).
\]
Let $\mathbf{e}_\ell$ be the unit vector of length $k$ such that its $\ell^{\rm th}$ coordinate is $1$ and the other coordinates are $0$.
		
If $f\geq z$, then the set formed by the nonzero rows of $G_kG_k^{\dagger}$ is
\[
\{\mathbf{e}_\ell: \ell=yq-x+1,x,y\in[z]\} \bigcup \{-\mathbf{e}_\ell:\ell=(y+1)q-x-f,x,y\in[0,z-1]\}.
\]
Hence, $\rank\left(G_kG_k^{\dagger}\right) = 2z^2$ and the dimension of the Hermitian hull of $GRS_k(\mathbf{b},\mathbf{a})$ is $k-2z^2$. By Theorem \ref{GRScon3}, we can conclude that $GRS_{q-f-1}(\mathbf{b},\mathbf{a})$ is a subcode of the Hermitian hull of $GRS_k(\mathbf{b},\mathbf{a})$.
		
If $1\leq f < z$, then the set formed by the nonzero rows of $G_kG_k^{\dagger}$ is the union of three sets
\begin{align*}
&\{\mathbf{e}_\ell:\ell=yq-x+1,x\in[z],y\in[f]\},\\
&\{-\mathbf{e}_\ell:\ell=(y+1)q-x-f,x\in[0,z-1],y\in[z-f,z-1]\},\\
&\{\mathbf{e}_{\ell_1}-\mathbf{e}_{\ell_2}: \ell_1=yq-x+1,\ell_2=(y-f)q-x-f+1,x\in[z],y\in[f+1,z]\}.
\end{align*}
Hence, $\rank\left(G_kG_k^{\dagger}\right) = z^2+zf$ and the Hermitian hull of $GRS_k(\mathbf{b},\mathbf{a})$ has dimension $k-z^2-zf$. By Theorem \ref{GRScon3}, $GRS_{q-f-1}(\mathbf{b},\mathbf{a})$ is a subcode of the Hermitian hull of $GRS_k(\mathbf{b},\mathbf{a})$.
\end{proof}

\begin{thm}\label{GRScon4-1}
Let $m$ and $f$ be positive integers such that $q-f-1<m<q-1$ and let $s :=\gcd(m-q+f+1,q-1)$. Let $z$ and $k$ be positive integers satisfying
\[
1\leq z < \left \lfloor \frac{(q+1)(q-1-s)}{2q} \right \rfloor, \,
z+f+1<q, \mbox{ and } zq \leq k<(z+1)q-z-f-1.
\]
Let $\alpha$ be a primitive element of $\Ff_{q^2}$. Let
\[
B=\left\{i+\frac{q-1}{s}j : i\in \left[\frac{q-1}{s}-1 \right] , j\in[0,(q+1)s-1]\right\}.
\]
If $\mathbf{a}=(a_\ell)_{\ell\in B}$ and $\mathbf{b}=(\alpha^\ell)_{\ell\in B}$ are vectors of length $(q+1)(q-1-s)$, with $a_\ell^{q+1}=\alpha^{-\ell(q-f-1)(q+1)}-1$ for each $\ell\in B$, then the following statements hold.
\begin{enumerate}
\item If $f\geq z$, then $GRS_k(\mathbf{b},\mathbf{a})$ has a $(k-2z^2)$-dimensional Hermitian hull that contains  $GRS_{q-f-1}(\mathbf{b},\mathbf{a})$ as a subcode.
\item If $1\leq f < z$, then $GRS_k(\mathbf{b},\mathbf{a})$ has a $(k-z^2-zf)$-dimensional Hermitian hull that contains $GRS_{q-f-1}(\mathbf{b},\mathbf{a})$ as a subcode.
\end{enumerate}
\end{thm}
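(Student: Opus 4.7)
The plan is to follow the proof of Theorem \ref{GRScon3-1} step by step, with Theorem \ref{GRScon4} taking the place of Theorem \ref{GRScon3} as the seed result. Writing the standard generator matrix of $GRS_k(\mathbf{b},\mathbf{a})$ as $G_k$ with rows $\mathbf{v}_0,\ldots,\mathbf{v}_{k-1}$, the core computation is
\[
\mathbf{v}_i\mathbf{v}_j^{\dagger}=\sum_{\ell\in B}a_\ell^{q+1}\,\alpha^{(i+qj)\ell}.
\]
Substituting the stated formula for $a_\ell^{q+1}$ yields a difference of two character sums over $B$, each of which I would expand as a full geometric sum over $[0,q^2-2]$ minus a shorter geometric sum over the arithmetic progression $\{0,(q-1)/s,\ldots,(s(q+1)-1)(q-1)/s\}$. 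Orthogonality of characters of $\F_{q^2}^*$ reduces each sum to $0$ or $-1$ depending on divisibility of the exponent by $q^2-1$ or by $s(q+1)$.

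Next, the same three-case dichotomy used in the proof of Theorem \ref{GRScon3-1} applies: the condition $z+f+1<q$ again precludes both exponents from being divisible by $q^2-1$ simultaneously, so each $(i,j)\in[0,k-1]^2$ falls into exactly one of Case 1 (neither exponent divisible; $\mathbf{v}_i\mathbf{v}_j^{\dagger}=0$), Case 2 (only the shifted exponent $g_{i,j}$ divisible; $\mathbf{v}_i\mathbf{v}_j^{\dagger}=-1$), or Case 3 (only $i+qj$ divisible; $\mathbf{v}_i\mathbf{v}_j^{\dagger}=1$). Under the range constraints $zq\leq k<(z+1)q-z-f-1$, the Case 3 index pairs are $\{(xq-y,yq-x):x,y\in[z]\}$ and the Case 2 pairs are $\{((x+1)q-y-f-1,(y+1)q-x-f-1):x,y\in[0,z-1]\}$, exactly as in Theorem \ref{GRScon3-1}. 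Collecting the nonzero rows of $G_kG_k^{\dagger}$ then gives $\rank(G_kG_k^{\dagger})=2z^2$ when $f\geq z$ (the two families are disjoint, producing $2z^2$ independent signed unit vectors $\pm\mathbf{e}_\ell$) and $\rank(G_kG_k^{\dagger})=z^2+zf$ when $1\leq f<z$ (the two families overlap in $zf$ positions, producing vectors of the shape $\mathbf{e}_{\ell_1}-\mathbf{e}_{\ell_2}$). The Hermitian hull dimension is $k-\rank(G_kG_k^{\dagger})$, which matches the two advertised values. For the subcode inclusion $GRS_{q-f-1}(\mathbf{b},\mathbf{a})\subseteq{\rm Hull}_{\rm H}(GRS_k(\mathbf{b},\mathbf{a}))$, I would invoke Theorem \ref{GRScon4} with its $k$ taken to be $q-f$: that result gives $GRS_{q-f-1}(\mathbf{b},\mathbf{a})={\rm Hull}_{\rm H}(GRS_{q-f}(\mathbf{b},\mathbf{a}))\subseteq GRS_{q-f}(\mathbf{b},\mathbf{a})^{\perp_{\rm H}}$, and the present case analysis then shows this Hermitian orthogonality persists against the additional rows $\mathbf{v}_{q-f},\ldots,\mathbf{v}_{k-1}$ as well.

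The main obstacle will be the combinatorial and divisibility bookkeeping. In particular, the Case 1 cancellation in Theorem \ref{GRScon3-1} rested on the simultaneous $s(q+1)$-divisibility of the shifted and plain exponents, which there followed automatically from $s=\gcd(q-f-1,q-1)$ implying $s\mid f$. Here, with $s=\gcd(m-q+f+1,q-1)$, the relation $s\mid f$ need not hold, so the alignment has to be re-derived; the cleanest way is to reconcile the displayed $a_\ell^{q+1}=\alpha^{-\ell(q-f-1)(q+1)}-1$ with the codeword $\alpha^{-\ell(q-f-1)(q+1)}-\alpha^{-\ell m(q+1)}$ produced in the proof of Theorem \ref{GRScon4}, after which $g_{i,j}-h_{i,j}\equiv(m-q+f+1)(q+1)\pmod{q^2-1}$ is divisible by $s(q+1)$ by construction, and the remaining computations proceed exactly as in the proof of Theorem \ref{GRScon3-1}.
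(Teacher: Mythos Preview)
Your proposal is correct and matches the paper's approach exactly: the paper's entire proof of Theorem~\ref{GRScon4-1} is the single line ``The same argument as the one in the proof of Theorem~\ref{GRScon3-1} leads to the desired conclusion,'' with Theorem~\ref{GRScon4} replacing Theorem~\ref{GRScon3} for the subcode inclusion. Your flagging of the Case~1 divisibility subtlety (and its resolution through the codeword produced in the proof of Theorem~\ref{GRScon4}) actually goes beyond what the paper spells out.
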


\begin{proof}
The same argument as the one in the proof of Theorem \ref{GRScon3-1} leads to the desired conclusion.
\end{proof}

\subsection{Hermitian MDS Hulls from Two-Point Rational AG Codes}\label{subsec:non-GRS typed}

In this subsection, we construct MDS linear codes, whose Hermitian hulls are MDS, from two-point rational AG codes.

\begin{thm} \label{thm:rankcomp}
Let $D = P_1+\ldots+P_n$ and $G = k \, O$, with $0\leq k\leq
\left\lfloor \frac{n-2}{q+1} \right\rfloor$. Let $H$ be a divisor and let $\omega$ be a differential form with the properties that
\begin{itemize}
	\item $(\omega) = G + H - D$ and
	\item for any $1 \leq i \leq n$ and for some $a_i\in \Ff_{q^2}^*$, $ v_{P_i}(\omega)=-1$ and
	$\Res_{P_i}(\omega) = a_i^{q+1}$.
\end{itemize}
Let $P$ be a rational place such that $P\notin \textnormal{supp}(D)\cup \textnormal{supp}(G)$. Set $G'=G+P_0$ and $H'=D-G'+(\omega)$, and assume that $\deg (G'\vee H')<n$. Then the following statements hold.
\begin{enumerate}
\item If ${\bf a} \cdot \cc_{{\cal L}}(D, P)$ is a subcode of
$({\bf a} \cdot \cc_{{\cal L}}(D, G'))^{\perp_{\rm H}}$, then ${\bf a}\cdot \cc_{{\cal L}}(D, G')$ is Hermitian self-orthogonal with parameters $[n,k+2, n-k-1]_{q^2}$.

\item If ${\bf a} \cdot \cc_{{\cal L}}(D, P)$ is not a subcode of $({\bf a} \cdot \cc_{{\cal L}}(D, G'))^{\perp_{\rm H}}$, then ${\bf a} \cdot \cc_{{\cal L}}(D, G')$ has parameters $[n,k+2, n-k-1]_{q^2}$. Its Hermitian hull is a $k$-dimensional MDS code.
\end{enumerate}
\end{thm}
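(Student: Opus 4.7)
The plan is to exploit the identity $\Res_{P_i}(\omega) = a_i^{q+1}$ to convert Hermitian pairings on twisted evaluation codes into residue sums, and then combine the residue theorem with Riemann--Roch. First, for the parameter count, since the curve is rational ($g = 0$) and $\deg(G') = k+1$, the hypothesis $k \le \lfloor (n-2)/(q+1) \rfloor$ gives $-2 < k+1 < n$, so Proposition \ref{thm:distance} yields $\dim \cc_{\cal L}(D, G') = k+2$ and $d \ge n-k-1$. Twisting by the nowhere-zero ${\bf a}$ preserves both, and the Singleton bound forces equality, so in both statements ${\bf a} \cdot \cc_{\cal L}(D, G')$ is an $[n, k+2, n-k-1]_{q^2}$ MDS code. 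The central identity
\[
\langle {\bf a}\cdot f(D),\ {\bf a}\cdot g(D)\rangle_{\rm H} = \sum_{i=1}^n a_i^{q+1} f(P_i)\, g(P_i)^q = \sum_{i=1}^n \Res_{P_i}(fg^q\omega),
\]
combined with the residue theorem, rewrites the pairing as $-\sum_{Q \notin \supp(D)} \Res_Q(fg^q\omega)$. For $f, g \in {\cal L}(G)$ the divisor estimate $(fg^q\omega) \ge -qG + H - D$ together with the degree hypothesis $(q+1)k \le n-2$ forces $fg^q\omega$ to be regular away from $\supp(D)$, so the pairing vanishes; this is the baseline Hermitian self-orthogonality of ${\bf a}\cdot \cc_{\cal L}(D, G)$.

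For Statement 1, I would decompose ${\cal L}(G') = {\cal L}(G) + {\cal L}(P)$ (their intersection is the constant functions), which lifts to the sum ${\bf a}\cdot \cc_{\cal L}(D, G') = {\bf a}\cdot \cc_{\cal L}(D, G) + {\bf a}\cdot \cc_{\cal L}(D, P)$. Pairings internal to the first summand vanish by the baseline; pairings with ${\bf a}\cdot \cc_{\cal L}(D, P)$ on one side vanish by the hypothesis; and the symmetry $\langle u, v\rangle_{\rm H} = 0 \Leftrightarrow \langle v, u\rangle_{\rm H} = 0$ (obtained by raising the scalar equation to the $q$-th power and using $\sigma^2 = {\rm id}$ on $\Ff_{q^2}$) takes care of the side on which $P$ sits. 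All four cross pairings therefore vanish, so ${\bf a}\cdot \cc_{\cal L}(D, G')$ is Hermitian self-orthogonal with the MDS parameters noted above.

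For Statement 2, the failure of the inclusion rules out self-orthogonality, so the hull has dimension strictly less than $k+2$. I would then apply Lemma \ref{lem:dual2} to the given $\omega$ to realize $({\bf a}\cdot \cc_{\cal L}(D, G'))^{\perp_{\rm H}}$ as a twisted evaluation code supported on the divisor $H' = D - G' + (\omega)$, and identify the Hermitian hull with ${\bf a}\cdot \cc_{\cal L}(D, G' \wedge H')$ up to a common scaling. The hypothesis $\deg(G' \vee H') < n$ keeps $G' \wedge H'$ inside the Riemann--Roch range, so the hull has dimension $\deg(G' \wedge H') + 1 = k$ and meets the Singleton bound at distance $n - k + 1$, giving the claimed $k$-dimensional MDS hull.

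The hardest step will be the Hermitian-dual identification in Statement 2. The Hermitian dual of a twisted AG code is not literally a twisted AG code on the same divisor system because of the Frobenius twist, and the hypothesis $\Res_{P_i}(\omega) = a_i^{q+1}$ is precisely what ensures the residue factors are norms in $\Ff_q$ so that the twist on the dual side absorbs into ${\bf a}$ in a controlled way. Verifying that the primal--dual intersection collapses cleanly to ${\bf a}\cdot \cc_{\cal L}(D, G' \wedge H')$, with the Gram-matrix rank dropping to exactly $2$ under the failure hypothesis (rather than $1$, which would give the wrong dimension), will be the most delicate bookkeeping.
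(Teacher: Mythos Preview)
Your parameter count and the residue-theorem argument for the baseline self-orthogonality of ${\bf a}\cdot\cc_{\cal L}(D,G)$ are correct; the paper obtains the same baseline by citing an external result, so your treatment is in fact more self-contained. The decomposition ${\cal L}(G')={\cal L}(G)+{\cal L}(P)$ together with the conjugate-symmetry of $\langle\cdot,\cdot\rangle_{\rm H}$ handles Statement 1 cleanly and matches the paper's Gram-matrix reasoning.

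The genuine gap is in Statement 2, at the step you yourself flag as hardest. Your plan is to identify the Hermitian hull with ${\bf a}\cdot\cc_{\cal L}(D,G'\wedge H')$ and then read off dimension $k$ and the MDS distance from Riemann--Roch. But this identification fails. With $G'\wedge H'=kO-P$, take $f\in{\cal L}(kO-P)$ and $g\in{\cal L}(kO+P)$; then $v_P(f)\ge 1$, $v_P(g^q)\ge -q$, $v_P(\omega)=0$, so $v_P(fg^q\omega)\ge 1-q$, which is negative for $q\ge 2$. The residue at $P$ need not vanish, so the pairing $\langle {\bf a}\cdot f(D),\,{\bf a}\cdot g(D)\rangle_{\rm H}$ can be nonzero and ${\bf a}\cdot\cc_{\cal L}(D,G'\wedge H')$ is not in general contained in $({\bf a}\cdot\cc_{\cal L}(D,G'))^{\perp_{\rm H}}$. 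The underlying reason is that the Hermitian dual is ${\bf a}\cdot\cc_{\cal L}(D,H')^{(q)}$ (coordinate-wise $q$-th power), not ${\bf a}\cdot\cc_{\cal L}(D,H')$; the Frobenius twist prevents the intersection from collapsing to $\cc_{\cal L}(D,G'\wedge H')$.

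The paper avoids this by a detour you do not take. It introduces the auxiliary twist ${\bf b}$ with $b_i^{2}=a_i^{q+1}$ and computes the \emph{Euclidean} hull of $\cc_0'={\bf b}\cdot\cc_{\cal L}(D,G')$; Lemma \ref{lem:dual2} applies without any Frobenius obstruction and gives $\cc_0'\cap(\cc_0')^{\perp_{\rm E}}={\bf b}\cdot\cc_{\cal L}(D,G'\wedge H')$, an MDS code of dimension $k$. The paper then argues that the Hermitian hull of $\cc'={\bf a}\cdot\cc_{\cal L}(D,G')$ is monomially equivalent to this Euclidean hull, writing $\cc'\cap(\cc')^{\perp_{\rm H}}={\bf a}\cdot\bigl(\cc_{\cal L}(D,G')\cap\cc_{\cal L}(D,H')^{(q)}\bigr)$ and invoking an isomorphism $\cc_{\cal L}(D,H')\to\cc_{\cal L}(D,H')^{(q)}$ from \cite{Pereira2021}. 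The dimension of the hull is established separately via $\rank({\cal G}'{\cal G}'^{\dagger})=2$; on your rank-$1$ concern, note that the paper's proof effectively uses the stronger hypothesis ${\bf a}\cdot\cc_{\cal L}(D,P)\not\subseteq({\bf a}\cdot\cc_{\cal L}(D,G))^{\perp_{\rm H}}$, which is precisely the condition ${\bf g}_i{\bf x}^{\dagger}\neq 0$ for some $i$ and forces rank $2$.
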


\begin{proof}
We choose a subset $U \subseteq \Ff_{q^2}$ of cardinality $n$ and use it to define
\[
h(x) = \prod\limits_{\alpha\in U}(x-\alpha) \mbox{ and }
D=(h)_0.
\]
We then use
\[
\omega=\frac{dx}{h(x)}, \quad H=D-G+(\omega), \quad H'=D-G'+(\omega).
\]
It is immediate to confirm that
\[
H'=(n-k-2)O-P, \quad
G'\wedge H'= k \, O-P, \quad
\deg (G'\vee H')=(n-k-2) \, O+P.
\]
Let ${\bf b}=(b_1,\ldots,b_n)$ with $b_i=a_i^{\frac{q+1}{2}}$. Let
\[
\cc_0={\bf b} \cdot \cc_{\cal L}(D, G), \quad
\cc_0'={\bf b} \cdot \cc_{\cal L}(D, G'), \quad
\cc={\bf a} \cdot \cc_{\cal L}(D, G), \quad
\cc'={\bf a} \cdot \cc_{\cal L}(D, G').
\]
Under the condition that $\deg (G'\vee H')< n$, we use a similar technique as that in the proof of \cite[Lemma 5]{Sok2022b} to confirm that
\[
\cc_{{\cal L}}(D,G') \cap  \cc_{{\cal L}}(D,H') =
\cc_{{\cal L}}(D,G'\wedge H').
\]
From the fact that $H'+G'-D=(H'\vee G')+(H'\wedge G')-D$, we get  $(\omega)-(H'\wedge G')=(H'\vee G')-D$ and, hence,
$\dim({\cal L}((\omega)-(H'\wedge G')))=0$ and
\begin{multline*}
\dim(\cc_{{\cal L}}(D,G'\wedge H')) =
\dim({\cal L}(H'\wedge G'))-
\dim({\cal L}((H'\wedge G')-D))\\
= \deg(H'\wedge G') + 1 + \dim( {\cal L}((\omega)-(H'\wedge G')))=k.
\end{multline*}
The first equality follows from Proposition \ref{thm:distance}. The second equality is due to Proposition \ref{thm:Riemann-Roch}.
Since
\[
\left({\bf b} \cdot \cc_{\cal L}(D, G')\right)^{\perp_{\rm E}} =
\left(b_1^{-1},\ldots,b_n^{-1}\right) \cdot \cc_{\cal L}(D, G')^{\perp_{\rm E}} = {\bf b}\cdot \cc_{\cal L}(D, H'),
\]
we can use Lemma \ref{lem:dual2} to infer that
\[
\cc_0' \cap {\cc_0'}^{\perp_{\rm E}} =
{\bf b} \cdot (\cc_{\cal L}(D,G') \cap \cc_{\cal L}(D, H')) =
{\bf b} \cdot \cc_{\cal L}(D,G'\wedge H'),
\]
which is MDS with dimension $k$. Since $k \leq \left \lfloor \frac{n-2}{q+1} \right \rfloor$ and $\Res_{P_i}(\omega) \in \Ff_q$ for any $1\le i\le n$, we obtain $\cc \subset \cc^{\perp_{\rm H}}$ by \cite[Theorem 3]{SokQSC}.
	
Writing the respective generator matrices of $\cc$ and $\cc'$ as ${\cal G}=({\bf g}_1,{\bf g}_2,\cdots,{\bf g}_k)^\top$ and ${\cal G}'=({\bf g}_1,{\bf g}_2,\cdots,{\bf g}_k,{\bf x})^\top$, we have
\[
{\cal G}'{{\cal G}'}^\dag =
\left(
\begin{array}{ccc|c}
& & & {\bf g}_1{\bf x}^\dag\\
& {\cal G}{{\cal G}}^\dag & & \vdots\\
& & & {\bf g}_k{\bf x}^\dag\\
\hline
{\bf x}{\bf g}_1^\dag & \cdots & {\bf x}{\bf g}_k^\dag &
{\bf x}{\bf x}^\dag\\
\end{array}
\right).
\]
If ${\bf a} \cdot \cc_{{\cal L}}(D, P)$ is a subcode of $\cc^{\perp_{\rm H}}$, then $\cc'$ is Hermitian self-orthogonal. If ${\bf a} \cdot \cc_{{\cal L}}(D, P)$ is not a subcode of $\cc^{\perp_{\rm H}}$, then
$\rank({\cal G}'{{\cal G}'}^\dag)= 2$, since
$\rank({\cal G}{{\cal G}}^\dag)=0$ and
${\bf g}_i{\bf x}^\dag \neq 0$ for some $1\le i\le k$.
By \cite[Proposition 3.2]{Guenda2017}, ${\bf a} \cdot
\cc_{{\cal L}}(D, G')$ has Hermitian hull of dimension $k-1$. To show that its Hermitian hull is MDS, we verify that
$\cc_0' \cap {\cc_0'}^{\perp_{\rm E}}$ and $\cc' \cap
\cc^{\perp_{\rm H}} = {\bf a} \cdot (\cc_{\cal L}(D,G') \cap
\cc_{\cal L}(D, H')^q)$ are equivalent codes. To do so, we refer to \cite[Proposition 11]{Pereira2021} for the isomorphism
$\cc_{\cal L}(D, H') \longrightarrow \cc_{\cal L}(D, H')^q$ that maps ${\bf x}_i \mapsto {\bf x}_i^q$ for $1 \leq i \leq n-k-2$, with $\{{\bf x}_1,\cdots, {\bf x}_{n-k-2}\}$ being a basis for
$\cc_{\cal L}(D, H')$.
\end{proof}

\begin{remark} Note that starting from a GRS code which is Hermitian self-orthogonal, the choice of the place $P$ such that ${\bf a}\cdot C_{{\cal L}}(D, P)$ is not a subcode of $({\bf a}\cdot C_{{\cal L}}(D, G'))^{\perp_{\rm H}}$ is always guaranteed if the set of evaluation points, used to construct such a Hermitian self-orthogonal code, is not equal to $\Ff_{q^2}$. A basis of ${\bf a}\cdot C_{{\cal L}}(D, P)$ can be obtained from choosing a suitable rational function which has a pole at $P$.
\end{remark}

The following corollary gives our first construction.
\begin{cor}\label{cor:1}
Let $q$ be a prime power. If $(s-1)$ is a divisor of $(q^2-1)$, with $s \neq q^2$, then,
for $0\le k\le \lfloor \frac{s-2}{q+1}\rfloor$, there exists an $[s,k+2, s-k-1]_{q^2}$ code $\cc_0$ whose Hermitian hull is MDS with dimension $k$.
\end{cor}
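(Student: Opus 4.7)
The plan is to realize the hypotheses of Theorem \ref{thm:rankcomp} part (2) with $n = s$ through an explicit two-point rational AG construction on the projective line. Since $(s-1)$ divides $(q^2-1)$ and $\Ff_{q^2}^*$ is cyclic, I would take $H$ to be the unique multiplicative subgroup of order $s-1$ in $\Ff_{q^2}^*$ and set $U := H \cup \{0\}$, so that $|U| = s$. With this evaluation set I would define
\[
h(x) := \prod_{\alpha \in U}(x-\alpha) = x^s - x, \quad D := (h)_0, \quad \omega := \frac{dx}{h(x)},
\]
and choose $G := k \, O$ for any $0 \leq k \leq \lfloor (s-2)/(q+1) \rfloor$. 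Because $s-1$ divides $q^2-1$ we have $s \leq q^2$, and the hypothesis $s \neq q^2$ gives $|U| < q^2$, so I can pick a rational place $P$ corresponding to an element of $\Ff_{q^2} \setminus U$ and then form $G' := G + P$ with $P \notin \supp(D) \cup \supp(G)$.

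The first check is the residue condition on $\omega$. For $\alpha \in H$, differentiation of $h(x)=x^s-x$ and the relation $\alpha^{s-1}=1$ give $h'(\alpha) = s\alpha^{s-1} - 1 = s - 1$, while $h'(0) = -1$. Hence every residue $\Res_{P_\alpha}(\omega) = 1/h'(\alpha)$ equals either $-1$ or $1/(s-1)$. Both values are nonzero in $\Ff_p$, because $(s-1) \mid (q^2-1)$ and $\gcd(q^2-1, p) = 1$ force $s-1$ to be coprime to $p$. Since the $(q+1)$-th powers in $\Ff_{q^2}^*$ are exactly the nonzero elements of $\Ff_q$, each residue can be written as $a_i^{q+1}$ for a suitable $a_i \in \Ff_{q^2}^*$, producing the scaling vector $\mathbf{a}$ that Theorem \ref{thm:rankcomp} requires.

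Next I would confirm the degree constraint. Using $(h) = D - sO$ and $(dx) = -2O$ on the rational function field, I get $(\omega) = (s-2)O - D$, whence $H = (s-2-k)O$ and $H' = D - G' + (\omega) = (s-2-k)O - P$. Since $k \leq \lfloor (s-2)/(q+1) \rfloor \leq (s-2)/2$, we have $s-2-k \geq k$, so $G' \vee H' = (s-2-k)O + P$ and
\[
\deg(G' \vee H') = s - 1 - k < s = n.
\]
Because $U \subsetneq \Ff_{q^2}$, the remark immediately following Theorem \ref{thm:rankcomp} lets me choose $P$ so that $\mathbf{a} \cdot \cc_{\cal L}(D, P)$ is not a subcode of $(\mathbf{a} \cdot \cc_{\cal L}(D, G'))^{\perp_{\rm H}}$. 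Applying part (2) of the theorem then delivers $\cc_0 := \mathbf{a} \cdot \cc_{\cal L}(D, G')$ as an $[s, k+2, s-k-1]_{q^2}$ code whose Hermitian hull is MDS of dimension $k$, which is exactly the assertion.

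The principal obstacle is the residue computation: the whole argument hinges on having $h'(\alpha) \in \Ff_p^*$ \emph{simultaneously} at every place of $\supp(D)$, which is why the evaluation set is picked as a subgroup together with $\{0\}$; there $h'$ takes only two values, both in the prime subfield. Once this is in place, the remaining work is essentially divisor bookkeeping on the projective line, together with a direct invocation of Theorem \ref{thm:rankcomp}(2).
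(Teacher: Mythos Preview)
Your proof is correct and follows essentially the same route as the paper: the same evaluation set $U=\{\alpha:\alpha^{s-1}=1\}\cup\{0\}$, the same differential $\omega=dx/h(x)$ with $h(x)=x^s-x$, the same residue computation showing $h'(\alpha)\in\{s-1,-1\}\subset\Ff_q^*$, and the same divisor bookkeeping leading into Theorem~\ref{thm:rankcomp}. If anything, you are slightly more careful than the paper in justifying that $s-1$ is a unit in $\Ff_p$ and in spelling out the computation of $(\omega)$ and $\deg(G'\vee H')$.
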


\begin{proof}
Let $U=\left\{\alpha\in \F_{q^2} : \alpha^{s-1}=1\right\} \cup \{0\}$. Writing the elements more explicitly, we have $U=\{u_1,\hdots,u_s\}$, with $u_s=0$.
Let
\[
h(x)=\prod\limits_{\alpha\in U}(x-\alpha), \quad
D=(h)_0, \quad \omega=\frac{dx}{h(x)}, \quad G=k O+P.
\]
Taking the derivative of $h(x)$ yields $h'(x)=s \, x^{s-1}-1$. Hence, for any $1 \leq i \leq s-1$, we know that $h'(u_i) = s-1 \in \F_{q}^*$ and $h'(u_s)=-1 \in \F_{q}^*$. Thus, for any $1\leq i\leq s$, we have $h'(u_i)=\beta_i^{q+1}$, for some $\beta_i\in \F_{q^2}$. We are guaranteed the existence of the required rational place $P$ since $s < q^2$.
Taking $H=D-G+(\omega)=(s-k-2) O-P$ gives us
$G\wedge H=k O-P$ and $G\vee H=(s-k-2) O+P.$
It is then clear that $\deg (G\wedge H)=k-1$, and $\deg (G\vee H)<s$.
The conclusion of the corollary now follows from Theorem \ref{thm:rankcomp}.
\end{proof}

Next, we construct linear codes with $k$-dimensional Hermitian hull, for which the set of evaluation points is a union of additive cosets of $\Ff_q$ in $\Ff_{q^2}$.
\begin{cor}\label{cor:2}
Let $q$ be a prime power. If $t$ is an integer such that  $1 \leq t < q$, then, for $0\le k \le \lfloor \frac{tq-2}{q+1}\rfloor$, there exists a $[tq,k+2, tq-k-1]_{q^2}$ code $\cc_1$ whose Hermitian hull is MDS with dimension $k$.
\end{cor}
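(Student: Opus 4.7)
The plan is to replicate the strategy used for Corollary \ref{cor:1}, but with an evaluation set $U$ consisting of $t$ disjoint additive cosets of $\Ff_q$ inside $\Ff_{q^2}$. Fix $\eta \in \Ff_{q^2}^*$ with $\eta^q = -\eta$, so that $\ker(\Tr_2) = \eta\,\Ff_q$ uniformly in the characteristic (in characteristic two any $\eta \in \Ff_q^*$ works, for instance $\eta = 1$). Because $t < q$, I pick distinct $c_1,\ldots,c_t \in \Ff_q$, set $\beta_i = \eta\,c_i$, and let $U$ be the union of the solution sets in $\Ff_{q^2}$ of the Artin-Schreier equations $x^q - x = \beta_i$. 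Each such equation has $q$ solutions forming a coset of $\Ff_q$, and distinct $\beta_i$ give disjoint cosets, so $|U| = tq$. I then take $D = \sum_{u \in U} P_u$ and
\[
h(x) = \prod_{i=1}^{t}(x^q - x - \beta_i) = \prod_{u \in U}(x-u).
\]

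Next I would exhibit a differential $\omega$ that satisfies the residue condition of Theorem \ref{thm:rankcomp}. Using $(x^q - x)' = -1$ in characteristic $p$, for $u \in U$ with $u^q - u = \beta_l$ one obtains
\[
h'(u) = -\prod_{j \neq l}(\beta_l - \beta_j) = -\eta^{t-1} \prod_{j \neq l}(c_l - c_j) \in \eta^{t-1}\,\Ff_q^*.
\]
Setting $\omega = \eta^{t-1}\,dx/h(x)$ then makes $\Res_{P_u}(\omega) \in \Ff_q^*$ for every $u \in U$; since the norm map $\Ff_{q^2}^* \to \Ff_q^*$ is surjective, each such residue equals $a_u^{q+1}$ for some $a_u \in \Ff_{q^2}^*$. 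The divisor of $\omega$ is $(\omega) = (tq-2)\,O - D$, which can be written as $G + H - D$ with $G = k\,O$ and $H = (tq-k-2)\,O$.

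Finally, I would apply Theorem \ref{thm:rankcomp} with $G = k\,O$ and $G' = k\,O + P$, where $P$ is any rational place outside $\supp(D) \cup \{O\}$; such a $P$ exists because the rational function field over $\Ff_{q^2}$ has $q^2 + 1$ rational places and $tq + 1 < q^2 + 1$. A direct check gives $\deg(G' \vee H') = \max(k, tq - k - 2) + 1 < tq$ under $k \le \lfloor (tq-2)/(q+1) \rfloor$, which simultaneously supplies the bound on $k$ demanded by Theorem \ref{thm:rankcomp}. By the remark following that theorem, since $U \neq \Ff_{q^2}$ the place $P$ can be chosen so that ${\bf a} \cdot \cc_{\cal L}(D, P) \not\subseteq ({\bf a} \cdot \cc_{\cal L}(D, G'))^{\perp_{\rm H}}$, placing us in case (2) of Theorem \ref{thm:rankcomp} and producing the claimed $[tq, k+2, tq-k-1]_{q^2}$ code whose Hermitian hull is MDS of dimension $k$.

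The main obstacle is arranging for every residue of $\omega$ to lie simultaneously in $\Ff_q^*$. In odd characteristic with even $t$, $\eta^{t-1}$ itself lies outside $\Ff_q$, so the naive choice $\omega = dx/h(x)$ will not do; the rescaling by $\eta^{t-1}$ is precisely what cancels this obstruction and lets the hypotheses of Theorem \ref{thm:rankcomp} be verified uniformly. Once this rescaling is in hand, the remaining degree, support, and non-containment verifications all follow from the theorem and its subsequent remark.
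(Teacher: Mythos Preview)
Your proposal is correct and follows essentially the same route as the paper: both use as evaluation set a union of $t$ additive $\Ff_q$-cosets in $\Ff_{q^2}$ (your Artin--Schreier fibres $\{x:x^q-x=\eta c_i\}$ are exactly the cosets $u_i\alpha+\Ff_q$ used in the paper, with $\eta$ playing the role of $\alpha^q-\alpha$), form $h(x)=\prod_i(x^q-x-\beta_i)$, and then invoke Theorem~\ref{thm:rankcomp} together with the remark that $U\neq\Ff_{q^2}$ guarantees a place $P$ landing in case~(2). The one substantive difference is presentational: the paper outsources the verification that the residues of $\omega$ lie in $\Ff_q^*$ to \cite[Construction~4]{SokQSC}, whereas you compute $h'(u)=-\eta^{t-1}\prod_{j\neq l}(c_l-c_j)$ explicitly and correctly observe that for odd $q$ and even $t$ this forces the rescaling $\omega=\eta^{t-1}\,dx/h(x)$ to land every residue in $\Ff_q^*$; this makes your argument self-contained while the paper's is not.
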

\begin{proof}
Fixing an element $\alpha\in \Ff_{q^2}\setminus \Ff_q$, we write $\Ff_q = \{u_1,\ldots, u_q\}$ and define $\alpha_{i,j} := u_i \, \alpha + u_j$, for $1\le i\le t$ and $1\le j\le q$. The set $U=\{\alpha_{i,j} \, : \, 1\le i\le t, 1\le j\le q\}$ clearly has $tq$ elements.
Let
\[
h(x)=\prod\limits_{\alpha\in U}(x-\alpha), \quad
D=(h)_0, \quad \omega=\frac{dx}{h(x)}, \quad
G=k O+P.
\]
Performing the computation as the one done in \cite[Construction 4]{SokQSC}, we obtain, for any $\alpha\in U$, $h'(\alpha)=\beta^{q+1}$ for some $\beta\in \Ff_{q^2}.$

We select $H=D-G+(\omega)=(tq-k-2) O-P$ to arrive at
$G\wedge H=k O-P$ and $G\vee H=(tq-k-2) O+P.$
The conclusion of the corollary now follows from Theorem \ref{thm:rankcomp}.
\end{proof}

\begin{example} Over $\F_{25}$, there are $26$ rational places, and we consider the following $22$ places, with $\theta$ being the standard primitive element of $\F_{25}$ in {\tt MAGMA}:
\begin{align*}
O &=(1 : 0 : 0) &  P &=(\theta^{10} : 1 : 0) & P_1 & =(0 : 1 : 0) &
P_2 &=(\theta : 1 : 0) & P_3 &=(\theta^2 : 1 : 0) \\
P_4 &= (\theta^3 : 1 :0) & P_5 &= (\theta^4 : 1 : 0) &
P_6 &= (\theta^5 : 1 : 0) & P_7 &= (2 : 1 : 0) & P_8 &= (\theta^7 : 1 : 0)\\
P_9&=(\theta^8 : 1 : 0) & P_{10} &= (\theta^9 : 1 : 0) &
P_{11}&=(\theta^{11} : 1 : 0) & P_{12}&= (\theta^{13} : 1 : 0) &
P_{13} &= (\theta^{14} : 1 : 0)\\
P_{14}&=(\theta^{15} : 1 : 0) & P_{15}&=(\theta^{16} : 1 : 0) &
P_{16}&=(3 : 1 : 0)& P_{17}&= (\theta^{19} : 1 : 0) &
P_{18}&= (\theta^{22} : 1 : 0) \\
P_{19}&= (\theta^{23} : 1 : 0) & P_{20}&=(1 : 1 : 0). &&&&&&
\end{align*}
Taking $D=P_1+\ldots+P_{20}$ and $G=3 O+P$, the residues of $\omega $ at $(P_i)_{1\le i\le 20}$ are
\[
{\rm Res}(\omega)=
\left( \theta^{22}, \theta^{22}, \theta^{21}, \theta^{21}, 1,1, \theta^{22}, \theta^{23}, 1, \theta^{23}, \theta^{22}, 1,1, \theta^{21}, \theta^{23}, \theta^{21}, \theta^{22}, \theta^{23}, \theta^{23}, \theta^{21}\right).
\]
We then obtain the code ${\rm Res}(\omega) \cdot C_{{\cal L}}
(D,3 O+P)$, whose Hermitian hull is MDS with dimension $3$. The code has a generator matrix ${\cal G}=({\cal G}_0~{\cal G}_1)$, where
\begin{align*}
{\cal G}_0 &=
\begin{pmatrix}
\theta^{22}&\theta^{22}&\theta^{21}&\theta^{21}&1&1&\theta^{22}&\theta^{23}&1&\theta^{23}\\
0&\theta^{11}&\theta^{11}&4&\theta^{16}&3&\theta^{17}&\theta^{21}&\theta^{23}&\theta^{23}\\
0&1&\theta&\theta^{3}&\theta^{8}&4&4&\theta^{19}&\theta^{22}&\theta^{23}\\
0&\theta^{13}&\theta^{15}&3&1&2&\theta^{7}&\theta^{17}&\theta^{21}&\theta^{23}\\
1&\theta^{22}&\theta^{3}&\theta^{22}&\theta^{14}&\theta^{7}&\theta^{17}&\theta^{19}&\theta^{4}&\theta^{8}\\
\end{pmatrix} \mbox{ and} \\
{\cal G}_1 &=
\begin{pmatrix}
\theta^{22}&1&1&\theta^{21}&\theta^{23}&\theta^{21}&\theta^{22}&\theta^{23}&\theta^{23}&\theta^{21}\\
\theta^{23}&\theta^{2}&\theta^{3}&\theta&\theta^{4}&\theta^{3}&\theta^{5}&\theta^{7}&\theta^{8}&\theta^{8}\\
1&\theta^{4}&2&\theta^{5}&\theta^{9}&\theta^{9}&4&\theta^{15}&\theta^{17}&\theta^{19}\\
\theta&2&\theta^{9}&\theta^{9}&\theta^{14}&\theta^{15}&\theta^{19}&\theta^{23}&\theta^{2}&2\\
%\hline
\theta^{14}&\theta^{3}&4&\theta^{5}&\theta^{17}&\theta^{19}&\theta^{13}&4&\theta^{9}&3\\
\end{pmatrix}.
\end{align*}
\end{example}

The next construction utilizes a set of evaluation points which is a union of multiplicative cosets in $\F_{q^2}$.

\begin{cor}\label{cor:3}
Let $n_0$ be a divisor of $(q^{2}-1)$ and let $n_{2} =
\frac{n_0}{\gcd (n_0,q+1)}$. If $t$ is an integer such that $1 \leq t \leq \frac{q-1}{n_2}-2$, then, for $0 \leq k \leq
\left \lfloor \frac{(t+1)n_0-1}{q+1} \right\rfloor$, there exists a $[(t+1)n_0+1,k+2, (t+1)n_0-k]_{q^2}$ code $\cc_2$ whose Hermitian hull is MDS with dimension $k$.
\end{cor}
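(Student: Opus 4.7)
The plan is to mirror the structure of Corollaries \ref{cor:1} and \ref{cor:2}: construct an evaluation set $U\subset \Ff_{q^2}$ with $|U|=(t+1)n_0+1$, set $h(x)=\prod_{\alpha\in U}(x-\alpha)$ and $\omega=dx/h(x)$, verify that each residue $\Res_{P_\alpha}(\omega)=1/h'(\alpha)$ lies in $\Ff_q^*$ (hence is a $(q+1)$-th power in $\Ff_{q^2}^*$), and then invoke Theorem \ref{thm:rankcomp}. In contrast with the additive cosets used in Corollary \ref{cor:2}, here $U$ will be $\{0\}$ together with a union of $t+1$ multiplicative cosets of the order-$n_0$ subgroup $H_0\le\Ff_{q^2}^*$.

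First I would produce the coset representatives; this is where the arithmetic hypothesis enters. The preimage of $\Ff_q^*$ under the map $\gamma\mapsto\gamma^{n_0}$ from $\Ff_{q^2}^*$ to itself is the subgroup of $\Ff_{q^2}^*$ of order $(q-1)\gcd(n_0,q+1)$; since its kernel is $H_0$, the image in $\Ff_q^*$ is a subgroup of order $(q-1)/n_2$. The hypothesis $t\le(q-1)/n_2-2$ then guarantees the existence of an element $\delta$ in this image of order at least $t+2$, and any $\gamma\in\Ff_{q^2}^*$ with $\gamma^{n_0}=\delta$ yields pairwise distinct cosets $\gamma^iH_0$ for $0\le i\le t$ because the elements $(\gamma^i)^{n_0}=\delta^i$ are distinct. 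Setting $U=\{0\}\cup\bigcup_{i=0}^{t}\gamma^iH_0$ gives $|U|=(t+1)n_0+1$, and $h(x)=x\prod_{i=0}^{t}(x^{n_0}-\delta^i)$.

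The residue calculation is then routine: writing $f(x)=\prod_{i=0}^{t}(x^{n_0}-\delta^i)$ so that $h'(x)=f(x)+xf'(x)$, one finds $h'(\alpha)=n_0\,\delta^i\prod_{j\ne i}(\delta^i-\delta^j)$ for $\alpha\in\gamma^iH_0$, and $h'(0)=(-1)^{t+1}\delta^{t(t+1)/2}$. Both are products of elements of $\Ff_q^*$, so $\Res_{P_\alpha}(\omega)=1/h'(\alpha)\in\Ff_q^*$ can be written as $a_\alpha^{q+1}$ for some $a_\alpha\in\Ff_{q^2}^*$, for every $\alpha\in U$. This supplies the vector $\mathbf{a}$ required by Theorem \ref{thm:rankcomp}.

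To conclude, take $G=kO$ and a rational place $P\notin\supp(D)\cup\{O\}$, which exists because $(t+1)n_0+1<q^2$. With $G'=kO+P$ and $H'=D-G'+(\omega)=((t+1)n_0-k-1)O-P$, the inequality $\deg(G'\vee H')<n$ is immediate, and the bound $k\le\lfloor((t+1)n_0-1)/(q+1)\rfloor$ is precisely the hypothesis on $k$. Case 2 of Theorem \ref{thm:rankcomp} then applies --- Case 1 being ruled out by choosing $P$ so that $\mathbf{a}\cdot\cc_{\cal L}(D,P)\not\subset(\mathbf{a}\cdot\cc_{\cal L}(D,G'))^{\perp_{\rm H}}$, which is possible since $U\ne\Ff_{q^2}$, by the remark following Theorem \ref{thm:rankcomp} --- and produces an $[(t+1)n_0+1,k+2,(t+1)n_0-k]_{q^2}$ code whose Hermitian hull is MDS of dimension $k$. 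The main delicate point throughout is the existence of $\delta$ of order at least $t+2$ in the relevant image subgroup of $\Ff_q^*$, which is exactly what the hypothesis $t\le(q-1)/n_2-2$ is tailored to guarantee; everything else is a direct specialization of Theorem \ref{thm:rankcomp}.
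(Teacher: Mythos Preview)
Your proposal is correct and follows essentially the same approach as the paper: construct $U$ as $\{0\}$ together with a union of $t+1$ multiplicative cosets of the order-$n_0$ subgroup of $\Ff_{q^2}^*$, verify that $h'(\alpha)\in\Ff_q^*$ for every $\alpha\in U$, and apply Theorem~\ref{thm:rankcomp}. Where the paper simply cites \cite[Construction~3]{SokQSC} for the residue computation and for the choice of coset representatives, you supply the argument explicitly---in particular, your observation that one needs $\gamma^{n_0}=\delta\in\Ff_q^*$ and your computation $h'(\alpha)=n_0\,\delta^i\prod_{j\ne i}(\delta^i-\delta^j)$ make the proof self-contained (and incidentally give the correct $H'=((t+1)n_0-k-1)O-P$, fixing an off-by-one slip in the paper's displayed $H$).
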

\begin{proof} We write $U_{n_0}=\{u\in \F_{q^2} \, : \, u^{n_0}=1\}$. For $1 \leq t < \frac{q-1} {n_{2}}$, let
\[
h(x) =\prod\limits_{\alpha\in U}(x-\alpha) \mbox{ and }
U = \{a_1,\hdots,a_{(t+1)n_0+1}\} = U_{n_0} \bigcup \left(\bigcup^{t}\limits_{j=1} \alpha_{j} U_{n_0} \right) \bigcup \{0\},
\]
with $(\alpha_{j})_{1\le j\le t}$ being elements in the distinct multiplicative cosets of $U_{n_0}$. The cardinality of $U$ is $(t+1)n_0+1$. Let
\[
h(x)=\prod\limits_{\alpha\in U}(x-\alpha), \quad
D=(h)_0, \quad \omega=\frac{dx}{h(x)}, \quad G=k O+P.
\]
Computing as in \cite[Construction 3]{SokQSC}, for any $\alpha\in U$, we get $h'(\alpha)=\beta^{q+1}$ for some $\beta\in \Ff_{q^2}$. Choosing
\[
H=D-G+(\omega)=((t+1)n_0-k) \, O - P
\]
leads to
\[
G\wedge H = k O-P \mbox{ and } G\vee H=((t+1)n_0-k) O+P.
\]
The desired conclusion now follows from Theorem \ref{thm:rankcomp}.
\end{proof}

The next theorem constructs an MDS linear code, whose Hermitian hull is MDS, from an extended MDS Hermitian self-orthogonal code, if the latter exists.
\begin{thm}\label{thm:MDSHull}
Let $D = P_1 + \ldots + P_n$ and $G=kO$ with $0 \le k \le \left\lfloor \frac{n-2}{q+1} \right\rfloor$. We denote the extended code of $\cc_{{\cal L}}(D, G)$ by $\bar{\cc}_{{\cal L}}(D, G)$. Let $\omega$ be a differential form such that $(\omega) = G + H - D$ for some divisor $H$ and let $1 \leq v_{P_i}(\omega) = -1$ for $1 \le i \le n$. Let ${\bf a} \cdot \bar{\cc}_{{\cal L}}(D, G)$ be a Hermitian self-orthogonal code with parameters $[n+1,k+1,n-k+1]_{q^2}$, where ${\bf a}=(a_1,\ldots, a_n,0)$ with $a_i^{q+1} = \Res_{P_i}(\omega)$. Let $P$ be a rational place such that $P\notin \textnormal{supp}(D)\cup \textnormal{supp}(G)$. We set $G'=G+P_0$ and $H'=D-G'+(\omega)$, and assume that $\deg (G'\vee H')<n$. Then the following assertions hold.
\begin{enumerate}
\item If ${\bf a}\cdot \bar{\cc}_{{\cal L}}(D, P)$ is a subcode of $({\bf a}\cdot \bar{\cc}_{{\cal L}}(D, G'))^{\perp_{\rm H}}$, then ${\bf a}\cdot \bar{\cc}_{{\cal L}}(D, G')$ is an $[n+1,k+2, n-k]_{q^2}$-Hermitian self-orthogonal code.
\item If ${\bf a}\cdot \bar{\cc}_{{\cal L}}(D, P)$ is not a subcode of $({\bf a}\cdot \bar{\cc}_{{\cal L}}(D, G'))^{\perp_{\rm H}}$, then ${\bf a}\cdot \bar{\cc}_{{\cal L}}(D, G')$ has parameters $[n+1,k+2, n-k]_{q^2}$ and its Hermitian hull is MDS with dimension $k$.
\end{enumerate}
\end{thm}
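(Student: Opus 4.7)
The plan is to adapt the proof of Theorem \ref{thm:rankcomp} to the extended setting, replacing the length-$n$ evaluation code $\cc_{\cal L}(D,G)$ throughout by its length-$(n+1)$ extended counterpart $\bar{\cc}_{\cal L}(D,G)$. First, by the Riemann-Roch theorem on the rational function field (Proposition \ref{thm:Riemann-Roch} with $g=0$), $\dim {\cal L}(G') = \dim {\cal L}(G) + 1 = k+2$, so $\bar{\cc}_{\cal L}(D,G')$ has dimension $k+2$. Fix a generator matrix ${\cal G}$ of $\bar{\cc}:={\bf a}\cdot \bar{\cc}_{\cal L}(D,G)$, and extend it by one row ${\bf x}$, taken to be the weighted extended evaluation of a function in ${\cal L}(G')\setminus {\cal L}(G)$, to form a generator matrix ${\cal G}'$ of $\bar{\cc}':={\bf a}\cdot \bar{\cc}_{\cal L}(D,G')$. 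By construction, ${\bf x}$ represents the new direction obtained from ${\bf a}\cdot \bar{\cc}_{\cal L}(D,P)$ modulo $\bar{\cc}$.

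Next, since $\bar{\cc}$ is Hermitian self-orthogonal by hypothesis, ${\cal G}{\cal G}^\dagger=0$, and the Hermitian Gram matrix of $\bar{\cc}'$ takes the block form
\[
{\cal G}'{\cal G}'^\dagger = \begin{pmatrix} 0 & {\cal G}{\bf x}^\dagger \\ {\bf x}{\cal G}^\dagger & {\bf x}{\bf x}^\dagger \end{pmatrix}.
\]
In case (1), the containment ${\bf a}\cdot \bar{\cc}_{\cal L}(D,P) \subseteq \bar{\cc}'^{\perp_{\rm H}}$ forces ${\bf x}$ to be Hermitian-orthogonal to every row of ${\cal G}'$, giving ${\bf x}{\cal G}^\dagger = 0$ and ${\bf x}{\bf x}^\dagger = 0$, so ${\cal G}'{\cal G}'^\dagger = 0$ and $\bar{\cc}'$ is Hermitian self-orthogonal. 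In case (2), the analogue of the rank argument in Theorem \ref{thm:rankcomp} yields $\rank({\cal G}'{\cal G}'^\dagger) = 2$, and \cite[Proposition 3.2]{Guenda2017} gives that the Hermitian hull of $\bar{\cc}'$ has dimension $(k+2)-2 = k$.

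To establish the MDS property of the Hermitian hull in case (2), I would replicate the divisor-intersection argument from the proof of Theorem \ref{thm:rankcomp}: set $H' := D - G' + (\omega)$, invoke the hypothesis $\deg(G' \vee H') < n$ to obtain $\cc_{\cal L}(D,G') \cap \cc_{\cal L}(D,H') = \cc_{\cal L}(D, G' \wedge H')$, and apply Proposition \ref{thm:Riemann-Roch} to conclude that this intersection is MDS of dimension $k$. The Hermitian hull of $\bar{\cc}'$ is then identified, up to monomial equivalence, with this intersection via Lemma \ref{lem:dual2} and the Frobenius-type isomorphism of \cite[Proposition 11]{Pereira2021}, so it is MDS of dimension $k$. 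For the minimum distance $d=n-k$ of $\bar{\cc}'$ itself, the Singleton bound gives $d \le n-k$; the reverse inequality follows from the fact that ${\bf a}\cdot \cc_{\cal L}(D,G')$ is MDS of parameters $[n,k+2,n-k-1]$ by Proposition \ref{thm:distance}, and the parity-check-type extension that turns $\cc_{\cal L}(D,G)$ into the $[n+1,k+1,n-k+1]$-MDS code $\bar{\cc}_{\cal L}(D,G)$ applies identically to $\cc_{\cal L}(D,G')$ to produce an $[n+1,k+2,n-k]$-MDS code. The main obstacle will be verifying that the vanishing of the last entry of ${\bf a}$ keeps the extension compatible with the residue-based Hermitian duality driving Lemma \ref{lem:dual2}, so the divisor-intersection identification of Theorem \ref{thm:rankcomp} transfers cleanly to the length-$(n+1)$ setting without disturbing either the rank analysis or the MDS conclusion for the hull.
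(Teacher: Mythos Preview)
Your proposal is correct and follows essentially the same route as the paper. The paper's proof is extremely terse: it records $H'=(n-k-2)O-P$, $G'\wedge H'=kO-P$, $\deg(G'\vee H')\le n-1$, introduces an auxiliary vector ${\bf b}$ with $b_i=a_i^{(q+1)/2}$ to identify the Euclidean hull of the twisted code with $\bar{\cc}_{\cal L}(D,G'\wedge H')$ (MDS of dimension $k$), and then simply points back to the reasoning of Theorem~\ref{thm:rankcomp}. Your outline expands exactly this reasoning---the block Gram matrix, the rank dichotomy via \cite[Proposition 3.2]{Guenda2017}, the divisor-intersection $\cc_{\cal L}(D,G')\cap\cc_{\cal L}(D,H')=\cc_{\cal L}(D,G'\wedge H')$, and the transfer from Euclidean to Hermitian via Lemma~\ref{lem:dual2} and \cite[Proposition 11]{Pereira2021}---so the two arguments coincide; your version is just more explicit about the minimum-distance step and the compatibility of the extension with the zero last coordinate of ${\bf a}$.
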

\begin{proof}
With the same setting as that in the proof of Theorem \ref{thm:rankcomp}, we verify that
\[
H'=(n-k-2)O-P, \quad G'\wedge H'=kO-P, \quad \deg (G'\vee H') \le n-1.
\]
We select $b_i=a_i^{\frac{q+1}{2}}$ for $1\le i\le n$ to build
${\bf b}=(b_1,\ldots,b_n,0)$. Let
\[
\cc_0={\bf b}\cdot \cc_{\cal L}(D, G), \quad
\cc_0'={\bf b}\cdot \cc_{\cal L}(D, G'), \quad
\cc={\bf a}\cdot \cc_{\cal L}(D, G), \quad
\cc'={\bf a}\cdot \cc_{\cal L}(D, G').
\]
We note that $\cc' \cap {\cc'}^{\perp_{\rm E}} =
\bar{\cc}_{\cal L}(D,G'\wedge H')$, which is an MDS code with dimension $k$. The rest of the proof follows from the same reasoning as that in the proof of Theorem \ref{thm:rankcomp}.
\end{proof}

Combining Theorem \ref{thm:rankcomp} and Corollaries \ref{cor:1}-\ref{cor:3}, we now elaborate on how to construct an MDS linear code whose Hermitian hull is \emph{not} necessarily MDS.

\begin{thm}\label{thm:arbitrary-hull} We assume the notation as in Theorem \ref{thm:rankcomp}. An $[n,k+2]_{q^2}$ MDS code, with $0\le k\le \left\lfloor \frac{n-2}{q+1} \right\rfloor$, has a Hermitian hull of dimension $\ell \in \{0,1,\ldots, k\}$ if one of the following conditions holds.
\begin{enumerate}
\item $(n-1)$ divides $(q^2-1)$ for $n \neq q^2$,
\item $n=tq$ for $1\le t\le q-1$,
\item $n=(t+1)n_0+1$ with $n_0$ being a divisor of $(q^{2}-1)$, $n_{2}=\frac{n_0}{\gcd (n_0,q+1)}$, and $1\le t\le \frac{q-1}{n_2}-2$.
\end{enumerate}
\end{thm}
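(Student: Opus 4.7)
The plan is to combine the explicit constructions in Corollaries \ref{cor:1}--\ref{cor:3}, each of which already produces an MDS code whose Hermitian hull is MDS of the maximum available dimension $k$, with the Hermitian propagation rule of Luo \emph{et al.}\ reproduced in this paper as Lemma \ref{lem:more}. The three conditions on $n$ listed in the statement of Theorem \ref{thm:arbitrary-hull} have been arranged precisely so that they match, one-by-one, the hypotheses of the three corollaries.

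First I would handle the extremal case $\ell = k$ by invoking the corresponding one of Corollary \ref{cor:1} (case 1), Corollary \ref{cor:2} (case 2), or Corollary \ref{cor:3} (case 3). Each of these corollaries, under the assumption $0 \le k \le \lfloor (n-2)/(q+1)\rfloor$, produces an $[n,k+2,n-k-1]_{q^2}$ MDS code $\cc$ whose Hermitian hull is MDS of dimension exactly $k$. For $0 \le \ell < k$, I would then apply the propagation rule to $\cc$: starting from a $q^2$-ary linear code whose Hermitian hull has dimension $k$, Lemma \ref{lem:more} yields a monomially equivalent code whose Hermitian hull has dimension $\ell$, for each such $\ell$. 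Since monomial equivalence preserves length, dimension, and minimum distance, the resulting code is again an $[n,k+2,n-k-1]_{q^2}$ MDS code, now with a Hermitian hull of the prescribed dimension $\ell$.

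The main — and essentially only — point to check is that the three cases in the statement really do match the hypotheses of the corresponding corollaries, and that the admissible range of $k$ is consistent with each of them. This is a direct matching of the divisibility and parameter conditions: $(n-1)\mid(q^2-1)$ with $n\neq q^2$ corresponds to $s=n$ in Corollary \ref{cor:1}; $n=tq$ with $1\le t\le q-1$ is the setting of Corollary \ref{cor:2}; and the $(t+1)n_0+1$ case with $1\le t\le \tfrac{q-1}{n_2}-2$ is exactly Corollary \ref{cor:3}. The propagation step itself is a black-box application of Lemma \ref{lem:more}, whose standing assumption $q>2$ in the Hermitian setting is harmless in the regimes where the underlying corollaries produce nontrivial codes.
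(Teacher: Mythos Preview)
Your proposal is correct and follows essentially the same strategy as the paper: both arguments use Corollaries \ref{cor:1}--\ref{cor:3} to supply the extremal case $\ell=k$ and then reduce the Hermitian hull dimension via the diagonal-scaling technique of \cite{Luo2021}; the paper writes this reduction step out explicitly on the generator matrix $\begin{pmatrix}\diag(1,\ldots,1,\alpha,\ldots,\alpha)~A\\ {\bf x}\end{pmatrix}$, while you invoke it as a black box. One small caveat: Lemma \ref{lem:more} as stated in the paper is the \emph{quantum} form of the propagation rule, whereas the classical statement you actually need --- that any $q^2$-ary linear code with Hermitian hull of dimension $k$ is, for $q>2$, monomially equivalent to one with hull dimension $\ell$ for every $\ell\le k$ --- is recorded only in the introduction (from \cite{Luo2021}) rather than as a numbered lemma, so your citation should point there.
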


\begin{proof} Consider the algebraic geometry code
$\cc_{\cal L}(D,G)$ with $G=k O+P$. If any one of the three conditions holds and if $\Res_P(\omega)\in \F_{q}^n$, then $\cc_0=C_{\cal L}(D,G_0)$, with $G_0=k O$, is Hermitian self-orthogonal. Let ${\cal G}_0$ and ${\cal G}$ be generator matrices of $\cc_0$ and $\cc$, respectively, with
${\cal G}=
\begin{pmatrix}
{\cal G}_0\\
{\bf x}
\end{pmatrix}$.
Then, up to equivalence, we can write ${\cal G}$ as
${\cal G}=
\begin{pmatrix}
\diag(1,\ldots,1)~A\\
{\bf x}
\end{pmatrix}$. For $\alpha\in \F_{q^2}$ such that $\alpha^{q+1} \neq -1$, take
\[
{\cal G}^\ell=
\begin{pmatrix}
\diag(1,\cdots,1,\underbrace{\alpha,\cdots,\alpha}\limits_{\ell})~A\\
{\bf x}
\end{pmatrix}.
\]
It follows that, for any $\ell$ such that $0\le \ell \le k$, ${\cal G}$ and ${\cal G}^{(\ell)}$ generate linear codes with the same parameters. Since $\cc_0$ is Hermitian self-orthogonal,
\[
\rank\left({\cal G}^{(\ell)}{{\cal G}^{(\ell)}}^\dag\right) = \ell+1.
\]
Thus, the Hermitian hull dimension of $\cc$ is $(k+1)-(\ell+1)$.
\end{proof}

The following proposition allows us to construct a larger set of evaluation points from a smaller one such that the larger set produces the differential form $\omega$ with properties that satisfy the conditions in Theorems \ref{thm:rankcomp} and \ref{thm:MDSHull}. In other words, for any $1 \leq i \leq n$, we have $v_{P_i}(\omega)=-1$ and $\Res_{P_i}(\omega) = a_i^{q+1}$ for some $a_i\in \Ff_{q^2}^*$.

\begin{prop}\label{prop:recursive-1}
Let $S^{(q+1)} := \{ \gamma^{q+1} : \gamma \in \F_{q^2}\setminus \{0\}\}$. Given a set $U$, let $f(x)=\prod\limits_{\alpha\in U} (x-\alpha)$ be such that $f'(\alpha)\in S^{(q+1)}$ for any $\alpha \in U$. Let there be $\beta_1,\beta_2 \in \F_{q^2}\setminus U$ with the following properties:
\begin{enumerate}[1)]
\item $S^{(q+1)}$ contains $f(\beta_1)(\beta_1-\beta_2)$ and $f(\beta_2)(\beta_2-\beta_1)$.
\item $\{(\alpha-\beta_1)(\alpha-\beta_2) : \alpha\in U \} \subseteq S^{(q+1)}$.
\end{enumerate}
Then there exist $V \subseteq \F_{q^2}$ and a polynomial $g(x)$ of degree $|V|= |U|+2$ such that $g'(\beta)\in S^{(q+1)}$ for any $\beta \in V$.
\end{prop}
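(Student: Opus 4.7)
The proposal is to take the obvious enlargement: set $V := U \cup \{\beta_1,\beta_2\}$ (noting that $\beta_1,\beta_2 \notin U$ and, by hypothesis 1), $\beta_1 \neq \beta_2$, so $|V| = |U|+2$) and define
\[
g(x) := f(x)(x-\beta_1)(x-\beta_2).
\]
Then $g$ is monic of degree $|V|$ and its roots are exactly the elements of $V$. The task reduces to verifying that $g'(\beta) \in S^{(q+1)}$ for every $\beta \in V$.

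The first observation to record is that $S^{(q+1)}$ is a multiplicative subgroup of $\F_{q^2}^{*}$, being the image of the group homomorphism $\gamma \mapsto \gamma^{q+1}$ from $\F_{q^2}^{*}$ to $\F_{q^2}^{*}$ (in fact $S^{(q+1)} = \F_q^{*}$). In particular, $S^{(q+1)}$ is closed under products, which is the algebraic fact that makes all the assembled hypotheses combine multiplicatively.

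Next, apply the product rule to obtain
\[
g'(x) = f'(x)(x-\beta_1)(x-\beta_2) + f(x)(x-\beta_2) + f(x)(x-\beta_1),
\]
and evaluate at each point of $V$ separately. For $\alpha \in U$ we have $f(\alpha)=0$, so the last two terms vanish and
\[
g'(\alpha) = f'(\alpha)\,(\alpha-\beta_1)(\alpha-\beta_2);
\]
both factors lie in $S^{(q+1)}$ by the hypothesis on $f'$ and by condition 2), hence so does their product. At $\beta_1$ the first and the third terms vanish, giving $g'(\beta_1) = f(\beta_1)(\beta_1-\beta_2)$, which is in $S^{(q+1)}$ by condition 1); the computation at $\beta_2$ is identical and also uses condition 1).

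I do not anticipate a genuine obstacle: the result is essentially a bookkeeping lemma whose content is the closure of $S^{(q+1)}$ under multiplication together with the choice of $g$ that forces the cross-terms to vanish at the right places. The only thing worth flagging is that the multiplicative closure of $S^{(q+1)}$ is exactly what lets hypotheses 1) and 2) feed through without extra sign or norm complications; this should be stated explicitly at the start so that every subsequent verification is a one-line product argument.
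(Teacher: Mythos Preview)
Your proposal is correct and follows essentially the same approach as the paper: set $V=U\cup\{\beta_1,\beta_2\}$, take $g(x)=f(x)(x-\beta_1)(x-\beta_2)$, differentiate, and evaluate at each element of $V$. Your write-up is in fact slightly more explicit than the paper's, since you spell out that $S^{(q+1)}$ is multiplicatively closed (the paper uses this silently) and you note that $\beta_1\neq\beta_2$ follows from condition~1).
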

\begin{proof}
Assuming such $\beta_1$ and $\beta_2$ exist, we let $V =U \cup \{\beta_1,\beta_2\}$ and use $g(x)= \prod\limits_{\beta\in V} (x-\beta)$. It is clear that $\deg(g(x))=|V|$. Taking the derivative of $g(x)$ yields
\[
g'(x)=f'(x)(x-\beta_1)(x-\beta_2)+f(x)((x-\beta_1)+(x-\beta_2)).
\]
We can then infer that $S^{(q+1)}$ contains $g'(\beta_1)$ and $g'(\beta_2)$ by Property 1) and that $\{g'(\alpha) : \alpha \in U\} \subseteq S^{(q+1)}$ by Property 2), respectively.
We conclude that $g'(\beta) \in S^{(q+1)}$ for any $\beta \in V$.
\end{proof}

\begin{example}
Let $\theta$ be the usual primitive element of $\F_{q^2}$ given by \texttt{MAGMA} for $q \in \{5,7,9,11\}$.
\begin{itemize}
\item When $q=5$, starting from $f(x) = g_0(x) = \prod\limits_{\beta\in \F_5}(x-\beta)$, we can recursively construct the polynomials $g_i(x)=\prod\limits_{\beta\in U_i}(x-\beta)$ of respective degrees $7$ and $9$, with
$U_1 = \{\underbrace{0, 1, 2, 3, 4}, \underbrace{\theta, \theta^5}  \}$ and
$U_2=\{\underbrace{0, 1, 2, 3, 4}, \underbrace{\theta, \theta^5}, \underbrace{\theta^4, \theta^{20}} \}$. The two sets of evaluation points give rise to two MDS two-point AG codes of lengths $7$ and $9$ over $\F_{25}$.

\item When $q=7$, starting from $f(x) = g_0(x)=\prod\limits_{\beta\in \F_7}(x-\beta)$, we can construct a polynomial $g(x)=\prod\limits_{\beta\in V}(x-\beta)$ of degree $13$ with
$V=\{ \underbrace{0, 1, 2, 3, 4, 5, 6}, \underbrace{\theta, \theta^7}, \underbrace{\theta^{27}, \theta^{45}}, \underbrace{\theta^{34}, \theta^{46}} \}$. The code of length $11$ over $\Ff_{49}$ constructed from $V$ is not covered by the code families in Corollaries \ref{cor:1} to \ref{cor:3}.
\item When $q=9$, we can construct a polynomial $g(x) = \prod\limits_{\beta\in V}(x-\beta)$ of degree $17$, with
\[
V=\{ \underbrace{1, \theta^{10}, \theta^{20}, \theta^{30}, 2, \theta^{50}, \theta^{60}, \theta^{70}, 0}, \underbrace{\theta, \theta^9}, \underbrace{\theta^8, \theta^{72}}, \underbrace{\theta^{23}, \theta^{47}},\underbrace{\theta^{44}, \theta^{76}}  \}.
\]
The code of length $13$ over $\Ff_{81}$, constructed using a subset of $V$, is not covered by the code families in Corollaries \ref{cor:1} to \ref{cor:3}.
\item When $q=11$, we can construct a polynomial $g(x)=\prod\limits_{\beta\in V}(x-\beta)$ of degree $21$, with
\[
V=\{\underbrace{0, 1, 2, 3, 4, 5, 6, 7, 8, 9, 10}, \underbrace{\theta, \theta^{11}}, \underbrace{\theta^4, \theta^{44}}, \underbrace{\theta^{17}, \theta^{67}}, \underbrace{\theta^{26}, \theta^{46}},\underbrace{\theta^{63}, \theta^{93}}\}.
\]
The code of length $17$ over $\Ff_{121}$, constructed from a subset of $V$, is not contained in the code families in Corollaries \ref{cor:1} to \ref{cor:3}.
\end{itemize}
\end{example}

\section{Quantum error correction}\label{sec:4}
	
A $q$ ary \emph{quantum error-correcting code} (QECC), also known as a {\it qudit code}, is a $K$-dimensional subspace of $(\mathbb{C}^q)^{\otimes n}$. We use the respective terms {\it qubit} and {\it qutrit} codes when $q=2$ and $q=3$. The parameters $[[n,\kappa,\delta]]_q$ of a QECC signifies that the code has dimension $q^{\kappa}$ and can correct quantum error operators affecting up to $\lfloor(\delta-1)/2\rfloor$ arbitrary positions in the quantum ensemble. Formalizing the stabilizer framework, first introduced by Gottesman in \cite{Gottesman1997}, Calderbank {\it et al}. in \cite{Calderbank1998} proposed a general approach to construct qubit QECCs. This method was subsequently extended to the nonbinary case in \cite{Ketkar2006}, establishing the correspondence between a Hermitian self-orthogonal classical code and a stabilizer QECC.
	
The stabilizer frameworks requires the ingredient classical additive codes to be trace-Hermitian self-orthogonal. When the codes are linear, the requirement translates to Hermitian self-orthogonality.  One can relax the orthogonality condition while still being able to perform quantum error control in the frameworks of \emph{entanglement-assisted quantum error-correcting codes} (EAQECCs).
	
Brun, Devetak, and Hsieh introduced EAQECCs in \cite{Brun2006}. The sender and the receiver share pairs of error-free maximally entangled states ahead of time. A $q$ ary EAQECC, denoted by $[[n,\kappa,\delta; c]]_q$, encodes $\kappa$ logical qudits into $n$ physical qudits, with the help of $n-\kappa-c$ ancillas and $c$ pairs of maximally entangled qudits. Such a quantum code can correct up to $\lfloor(\delta-1)/2\rfloor$ quantum errors. An EAQECC is a QECC if the code is designed without entanglement assistance, that is, when $c=0$.
	
With maximally entangled states as an additional resource, the pools of feasible classical ingredients in the construction of EAQECCs can include classical codes which are not self-orthogonal. A general construction of qubit EAQECCs was provided in \cite{Brun2006} via any binary or quaternary linear codes. This approach was generalized to the qudit case in \cite{Galindo2019}.
	
\begin{lem}{\rm \cite[Theorem 3]{Galindo2019}}\label{prop:two}
If $\mathcal{C}$ is an $[n,k,d]_{q^2}$ code, then there exists an $[[n,\kappa, \delta; c]]_q$ EAQECC $\mathcal{Q}$ with
\begin{align*}
c = k - \dim_{\mathbb{F}_{q^2}}
\left({\rm Hull}_{\rm H}(\cc)\right), \
\kappa = n-2k+c \mbox{, and}\
\delta = {\rm wt}\left(\mathcal{C}^{\perp_{\rm H}} \setminus {\rm Hull}_{\rm H}(\cc)\right).
\end{align*}
\end{lem}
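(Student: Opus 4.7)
The plan is to follow the standard Hermitian construction of EAQECCs via the symplectic/Hermitian isometry between $\Ff_{q^2}^{n}$ and $\Ff_q^{2n}$. First I would pick a generator matrix $G$ of $\cc$ and form the Hermitian Gram matrix $M:=GG^{\dagger}$, where $G^\dagger$ denotes the conjugate transpose. A vector $\mathbf{u}G\in\cc$ lies in $\cc^{\perp_{\rm H}}$ if and only if $\mathbf{u}G G^\dagger = \mathbf{0}$, so the left kernel of $M$ is in bijection with ${\rm Hull}_{\rm H}(\cc)$. Hence
\[
\rank(M) = k - \dim_{\Ff_{q^2}}\bigl({\rm Hull}_{\rm H}(\cc)\bigr) =: c.
\]
This identifies the integer $c$ in the statement with the ``symplectic defect'' that determines the number of required pre-shared ebits.

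Next I would invoke the Brun--Devetak--Hsieh machinery, in the nonbinary form developed in \cite{Galindo2019}. The idea is to perform a change of basis on the rows of $G$ (this corresponds to an $\Ff_{q^2}$-linear automorphism of $\cc$ that preserves all quantities of interest) so that $M$ takes a canonical block form with a $c\times c$ invertible block and a zero block of size $(k-c)\times(k-c)$. The $k-c$ ``isotropic'' rows generate a Hermitian self-orthogonal subcode of dimension $k-c=\dim{\rm Hull}_{\rm H}(\cc)$ which contributes stabilizer generators in the standard sense, while the remaining $c$ rows can be paired off (via an additional symplectic normalization) to correspond to the $c$ halves of maximally entangled pairs held by the encoder. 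Applying the generalized Knill--Laflamme conditions for EAQECCs, one obtains a quantum code on $n$ physical qudits with $c$ ebits encoding
\[
\kappa = n - (k-c) - c - (k-c) \cdot 0 \cdots = n - 2k + c
\]
logical qudits; the arithmetic is just bookkeeping of the dimension of the centralizer modulo the stabilizer inside the generalized Pauli group.

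For the minimum distance, recall that an error operator is undetectable precisely when its symplectic image lies in the centralizer of the extended stabilizer but not in the stabilizer itself. Under the Hermitian isometry between the generalized Pauli group modulo its center and $\Ff_{q^2}^n$, the centralizer corresponds to $\cc^{\perp_{\rm H}}$ and the stabilizer corresponds to ${\rm Hull}_{\rm H}(\cc)$ (the self-orthogonal part, since the contributions from the ebit halves on the ancilla side always commute with all errors on the $n$ transmitted qudits). Therefore the minimum weight of an undetectable error is exactly
\[
\delta = \mathrm{wt}\bigl(\cc^{\perp_{\rm H}} \setminus {\rm Hull}_{\rm H}(\cc)\bigr),
\]
which matches the claim.

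The routine parts are the rank computation for $c$ and the identification of $\delta$ with weights on $\cc^{\perp_{\rm H}}\setminus {\rm Hull}_{\rm H}(\cc)$; the main obstacle, which is the content of Galindo--Hernando--Ruano's theorem, is justifying the existence of the symplectic basis change that simultaneously produces $k-c$ self-orthogonal generators and $c$ hyperbolic pairs while preserving the code $\cc$, and then verifying rigorously that the resulting quantum code has dimension $q^{n-2k+c}$. This requires a careful symplectic Gram--Schmidt argument over $\Ff_{q^2}$ with respect to the trace-Hermitian form, together with the nonbinary generalization of the EAQECC encoding isometry originally given in \cite{Brun2006}.
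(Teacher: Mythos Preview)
The paper does not prove this lemma at all; it is stated with a citation to \cite[Theorem~3]{Galindo2019} and used as a black box throughout Section~\ref{sec:4}. There is therefore nothing in the paper to compare your proposal against.

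Your sketch is a faithful outline of the argument in the cited reference: the identification $c=\rank(GG^{\dagger})$ via the left kernel of the Hermitian Gram matrix (which the paper itself invokes elsewhere, citing \cite[Proposition~3.2]{Guenda2017}), the symplectic Gram--Schmidt normalization to split the rows of $G$ into $k-c$ isotropic generators and $c$ hyperbolic halves, and the identification of undetectable errors with $\cc^{\perp_{\rm H}}\setminus{\rm Hull}_{\rm H}(\cc)$. Two minor remarks: the displayed arithmetic for $\kappa$ contains a stray ``$-(k-c)\cdot 0\cdots$'' that should simply read $\kappa=n-2(k-c)-c=n-2k+c$; and the authorship of \cite{Galindo2019} is Galindo--Hernando--Matsumoto--Ruano, not Galindo--Hernando--Ruano. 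Neither affects the correctness of the outline.
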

	
If $\cc \subseteq \cc^{\perp_{\rm H}}$ or $\cc^{\perp_{\rm H}} \subseteq \cc$ in Lemma \ref{prop:two}, we obtain a QECC with parameters $[[n,n-2k, \delta_1]]_q$ or $[[n,2k-n, \delta_2]]_q$, where $\delta_1 = {\rm wt}\left(\mathcal{C}^{\perp_{\rm H}} \setminus \cc\right)$ and $\delta_2 = {\rm wt}\left(\mathcal{C} \setminus \cc^{\perp_{\rm H}}\right)$. The code $\mathcal{Q}$ in Lemma \ref{prop:two} is {\it nondegenerate} or {\it pure} if $\delta = d(\cc^{\perp_{\rm H}})$.
	
The Singleton-like bound for any $[[n,\kappa, \delta; c]]_q$ EAQECC in \cite[Corollary 9]{GraHubWin2022} reads
\begin{alignat}{5}
\kappa &\le c+\max\{0,n - 2 \delta + 2\},\label{eq:QMDS_small_distance}\\
\kappa &\le n-\delta+1,\label{eq:QMDS_trivial}\\
\kappa
&\le\frac{(n-\delta+1)(c+2\delta-2-n)}{3\delta-3-n} \mbox{, with }
\delta-1\ge\frac{n}{2}.\label{eq:QMDS_large_distance}
\end{alignat}
We call EAQECCs that achieve equality in \eqref{eq:QMDS_small_distance} whenever $\delta \le \frac{n}{2}$ and in \eqref{eq:QMDS_large_distance} whenever $\delta>\frac{n}{2}$ {\it maximum distance separable} (MDS) EAQECCs. Given a classical $[n,k,n-k+1]_{q^2}$ MDS code with $k<\lfloor n/2\rfloor$, whose Hermitian hull has dimension $\ell$, Lemma \ref{prop:two} produces a quantum MDS $[[n,n-k-\ell,k+1;k-\ell]]_q$ code . If $\ell=k$, then we get an $[[n,n-2k,k+1]]_q$ MDS QECC.
	
We use classical MDS codes whose Hermitian hulls are also MDS to construct quantum MDS codes. We show that these MDS EAQECCs beat the MDS QECCs in terms of minimum distance, for fixed length and dimension.

The following propagation rule for EAQECCs was established in \cite{Luo2021}.
\begin{lem}\label{lem:more}{\rm \cite[Theorem 12]{Luo2021}}
For $q>2$, the existence of a pure $[[n,\kappa,\delta;c]]_q$ code
$\mathcal{Q}$, constructed by Lemma \ref{prop:two}, implies the
existence of an $[[n, \kappa+i, \delta; c+i]]_q$ code
$\mathcal{Q}^{\prime}$ that is pure to distance $\delta$ for each $i
\in\{1,\ldots,\ell\}$, where $\ell$ is the dimension of the Hermitian
hull of the $\mathbb{F}_{q^2}$ linear code $\mathcal{C}$ that
corresponds to $\mathcal{Q}$.
\end{lem}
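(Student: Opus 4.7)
The plan is to reduce the claim to the Hermitian version of the monomial-equivalence result on hull dimensions stated (and attributed to \cite{Luo2021}) in the introduction: for $q>2$, any $q^2$-ary linear code with $\ell$-dimensional Hermitian hull is monomially equivalent to a code (of the same length and dimension) whose Hermitian hull has any prescribed dimension $j\le \ell$. Granting this tool, the remaining work is to pass from the classical equivalence to the quantum propagation by reapplying Lemma \ref{prop:two}.

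First I would invoke the hull-dimension propagation to obtain, for each $i\in\{1,\ldots,\ell\}$, a diagonal monomial transformation $D_i$ with entries in $\Ff_{q^2}^*$ such that $\cc':=D_i\cdot\cc$ has the same length $n$, the same dimension $k$, and a Hermitian hull of dimension exactly $\ell-i$. Because the action of $D_i$ preserves Hamming weights and because $(\cc')^{\perp_{\rm H}}=D_i^{-q}\cdot \cc^{\perp_{\rm H}}$ (which is also monomially equivalent to $\cc^{\perp_{\rm H}}$), we retain $d(\cc')=d(\cc)$ and, crucially, $d((\cc')^{\perp_{\rm H}})=d(\cc^{\perp_{\rm H}})$.

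Next I would apply Lemma \ref{prop:two} to $\cc'$. The resulting EAQECC $\mathcal{Q}'$ has
\[
c'=k-(\ell-i)=c+i,\qquad \kappa'=n-2k+c'=\kappa+i,\qquad \delta'=\operatorname{wt}\bigl((\cc')^{\perp_{\rm H}}\setminus {\rm Hull}_{\rm H}(\cc')\bigr),
\]
which already yields the advertised dimension and entanglement parameters. It remains to check the distance and the purity assertion. The purity of $\mathcal{Q}$ means $\delta=d(\cc^{\perp_{\rm H}})$; by the equivalence, $d((\cc')^{\perp_{\rm H}})=\delta$ as well, so every nonzero vector in $(\cc')^{\perp_{\rm H}}$, and in particular every nonzero vector in the subspace ${\rm Hull}_{\rm H}(\cc')$, has weight at least $\delta$. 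This is exactly the statement that $\mathcal{Q}'$ is pure to distance $\delta$. Moreover, since $(\cc')^{\perp_{\rm H}}\setminus {\rm Hull}_{\rm H}(\cc')$ is a subset of $(\cc')^{\perp_{\rm H}}$, its minimum weight is at least $d((\cc')^{\perp_{\rm H}})=\delta$, so $\delta'\ge \delta$, giving $\mathcal{Q}'$ the claimed parameters.

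The main obstacle is the first step: producing the diagonal matrix $D_i$ that decreases the Hermitian hull dimension by exactly $i$. This requires a careful rank analysis of the Gram-type matrix $GG^\dagger$ associated to a suitably chosen generator matrix $G$ of $\cc$, showing that one can independently tune the rank by scaling individual columns; the constraint $q>2$ is precisely what guarantees enough freedom in $\Ff_{q^2}^*$ to realize every intermediate rank value between $k-\ell$ and $k$. Once the classical propagation is in hand, the quantum step reduces to the bookkeeping above together with the invariance of dual distance under monomial equivalence.
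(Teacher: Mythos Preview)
The paper does not prove this lemma; it is imported verbatim as \cite[Theorem 12]{Luo2021} with no argument supplied here. Your sketch is correct and coincides with the strategy the paper itself attributes to \cite{Luo2021} in Section~\ref{sec:intro}: for $q>2$, invoke the monomial-equivalence result on Hermitian hull dimensions to pass from $\cc$ to a monomially equivalent $\cc'$ with $\dim{\rm Hull}_{\rm H}(\cc')=\ell-i$, then reapply Lemma~\ref{prop:two}; the invariance of $d(\cc^{\perp_{\rm H}})$ under diagonal scaling yields the distance and ``pure to distance $\delta$'' claims exactly as you wrote.
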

	
By Lemma \ref{lem:more}, an MDS EAQECC with large dimension can be constructed from an MDS QECC. In constructing quantum MDS codes, researchers have more recently focused on such quantum codes with large minimum distances. There remains significant open problems in this regard. Using MDS codes whose Hermitian hulls are MDS, we obtain MDS EAQECCs with large minimum distances for fixed length and dimension. By definition, the Hermitian hull of a linear code is a Hermitian self-orthogonal code. Given an $[n,k,n-k+1]_{q^2}$ MDS code whose Hermitian hull is an $[n,k-s,n-k+s+1]_q$ code with $k< \lfloor n/2 \rfloor$ and $1<s\leq \lfloor k/2\rfloor$, Lemma \ref{prop:two} generates an $[[n,n-2k+2s,k-s+1]]_q$ MDS QECC $\mathcal{Q}_1$ and an $[[n,n-2k+s,k+1;s]]_q$ MDS EAQECC $\mathcal{Q}_2$. Applying Lemma \ref{lem:more} on $\mathcal{Q}_2$, we obtain an $[[n,n-2k+2s,k+1;2s]]_q$ MDS EAQECC $\mathcal{Q}^{\prime}_2$. The quantum code $\mathcal{Q}^{\prime}_2$ outperforms $\mathcal{Q}_1$ in terms of minimum distance. It takes $2s$ pre-shared entangled states to improved the distance by $s$.
	
We provide an example based on the linear codes constructed based on Theorems \ref{GRScon1} and \ref{GRScon1-1}.
	
\begin{example}\label{example-QE-EAQE}
Theorem \ref{GRScon1} gives us a $[q^2,q,q^2-q+1]_q$ GRS code whose Hermitian hull is a $[q^2,q-1,q^2-q+2]_q$ code. By Lemma \ref{prop:two}, we obtain a $[[q^2,q^2-2q+2,q]]_q$ MDS QECC $\mathcal{Q}_1$ and a $[[q^2,q^2-2q+1,q+1;1]]_q$ MDS EAQECC $\mathcal{Q}_2$. Using Lemma \ref{lem:more}, the code $\mathcal{Q}_2$ gives rise to a $[[q^2,q^2-2q+2,q+1;2]]_q$ MDS EAQECC $\mathcal{Q}^{\prime}_2$. It is clear that $\mathcal{Q}^{\prime}_2$ and $\mathcal{Q}_1$ have the same length and dimension, whereas the minimum distance of $\mathcal{Q}^{\prime}_2$ is greater than that of $\mathcal{Q}_1$. A $[[q^2,q^2-2q+2,q]]_q$ MDS QECC is known from \cite{GRASSL2004}. What we have here is a code $\mathcal{Q}^{\prime}_2$ that has better error-correction capability with the use of maximally entangled qudits. We know from \cite{Ball2022} that the largest minimum distance of a $q$ ary MDS QECC of length $q^2$ is $q$. Theorem \ref{GRScon1-1} and Lemma \ref{prop:two} produce a $[[q^2,q^2-2(q+u)+1,q+u+1;1]]_q$ MDS EAQECC $\mathcal{Q}_3$ with $u<q-2$. Using some more maximally entangled qudits, Lemma \ref{lem:more} gives rise to a $[[q^2,q^2-2q+2,q+u+1;2u+2]]_q$ MDS EAQECC $\mathcal{Q}_3^{\prime}$.

The Hermitian hull of the GRS code constructed by Theorem \ref{GRScon1-1} is not MDS, but contains an MDS code as subcode. We cannot compare the MDS EAQECCs constructed by Theorem \ref{GRScon1-1} when $z>1$ with the $[[q^2,q^2-2q+2,q]]_q$ QECC, as the dimension of these MDS EAQECCs cannot be increased to $q^2-2q+2$. For instance, a $[[q^2,q^2-4q+2,2q+1;4]]_q$ MDS EAQECC is constructed by Theorem \ref{GRScon1-1} and Lemma \ref{prop:two}. Using Lemma \ref{lem:more}, the maximum dimension of the constructed MDS EAQECC is $q^2-2q-2$. What we highlight here is that not all MDS EAQECCs generated by MDS codes whose Hermitian hulls contain MDS codes are comparable with MDS QECCs in terms of minimum distance for fixed length and dimension.
\end{example}

Keeping the symbols from Example \ref{example-QE-EAQE}, Table \ref{table1} summarizes the parameters of MDS QECCs and EAQECCs that can be constructed via the linear codes proposed in Section \ref{sec:3}.

\begin{table*}[ht]
	\caption{Parameters of MDS QECCs and EAQECCs.}
	\label{table1}
	\renewcommand{\arraystretch}{1.2}
	\centering
	\begin{tabular}{c cc|ccc|c}
		\toprule
		No. & Length & Dimension & \multicolumn{3}{c|}{ Minimum Distance and Number of Pre-shared Pairs $(\delta;c)$ }& Constraints\\
		
		& $n$ & $\kappa$ &QECC $\mathcal{Q}_1$& EAQECC $\mathcal{Q}_2^{\prime}$& EAQECC $\mathcal{Q}_3^{\prime}$ & \\
		\midrule
		$1$ & $q^2$ &  $q^2-2q+2$ & $(q;0)$ & $(q+1;2)$ & $(q+u+1;2u+2)$ & $u<q-2$\\
		\midrule
		$2$ & $q^2-1$ &  $q^2-2k+1$ & $(k;0)$ & $(k+1;2)$ & $(k+u+1;2u+2)$ & $1<k<q$,\\
		& &&&&&$u<q-2$\\
		\midrule
		$3$ & $q^2-s(q+1)$ &  $q^2-s(q+1)$ & $(k;0)$ & $(k+1;2)$ & $(k+u+1;2u+2)$ & $1<k<q$,\\
		& & $ -2k+2$ &&&&$u<q-2$,\\
		& &&&&&$s=\gcd(k-1,q-1)$\\
		\midrule
	$4$ & $(q+1)(q-1-s)$ &  $(q+1)(q-1-s)$ & $(k;0)$ & $(k+1;2)$ & $(k+u+1;2u+2)$ & $1<k<q$,\\
	&& $ -2k+2$ &&&&$u<q-2$,\\
	&	&&&&&$s=\gcd(m-k+1,q-1)$\\
    \midrule
    $5$ & $n$ &  $n-2k+2$& $(k;0)$ & $(k+1;2)$ & - & $(n-1)$ divides $(q^2-1)$, \\
    &&&&&&$n\not=q^2$,\\
&&&&&&$1\leq k \leq \lfloor \frac{n-2}{q+1}\rfloor+2$\\
\midrule
$6$ & $tq$ &  $tq-2k+2$ & $(k;0)$ & $(k+1;2)$ & - & $1\le t\le q-1$,\\
&&&&&& $1\leq k \leq \lfloor \frac{tq-2}{q+1}\rfloor+2$\\
\midrule
$7$ & $(t+1)n_0+1$ &  $(t+1)n_0-2k+3$ & $(k;0)$ & $(k+1;2)$ & - & $1\leq k \leq \lfloor \frac{(t+1)n_0-1}{q+1}\rfloor+2$,\\
& &&&&& $n_0$ divides $(q^2-1)$,\\
& & & & & & $n_2=\frac{n_0}{\gcd(n_0,q+1)}$,\\
&&&&&& $1\leq t\leq \frac{q-1}{n_2}-2$ \\
\bottomrule
\end{tabular}
\end{table*}

Since Table \ref{table1} lists the parameters of all EAQECCs which can be compared with the corresponding MDS QECCs, we provide the parameters of the other EAQECCs constructed from linear codes in Section \ref{sec:3} in Table \ref{table2}. The pre-shared pairs do not come for free. The cost of sharing and purifying these pairs means that EAQECCs do not automatically outperform standard quantum codes in all circumstances. The number of pre-shared pairs for our MDS EAQECCs can be made flexible by using Lemma \ref{lem:more}.

Table \ref{table2} shows that our $q$ ary MDS EAQECCs have minimum distances greater than $q$. Most known $q$ ary MDS EAQECCs constructed from GRS codes have minimum distances which are less than $q$. Prior to our present work, there has been only one known construction of $q$ ary MDS QECCs with minimum distance greater than $q$. This class of $[[q^2+1,q^2-2q+1,q+1]]_q$ MDS QECCs was constructed in \cite{Guardia2011} from constacyclic codes, making their lengths rigid. The codes in Tables \ref{table1} and \ref{table2} can produce many MDS EAQECCs with new parameters. 

There has been accelerating effort to implement quantum information processing on qudit systems with $q>3$, beyond qubit or qutrit. Based on recent results reported in \cite{Ringbauer2022}, implementations related to some real application have been attempted for $q$-ary quantum codes when $q \leq 7$.

To control a single qudit, for $q>2$, is significantly more challenging than controlling a qubit. The former is much harder to calibrate. We do hope to see more of the challenges in building a qudit processor being tackled and overcome. For now, it appears that scaling the Hilbert space for qudit with large $q$ is not worth the explosion of complexity in contrast to keeping $q=2$ while exploring the vista for $3 \leq q \leq 7$. Table \ref{table3} lists the parameters, with minimum distance greater than $7$, of previously known MDS EAQECCs and the new ones over $\Ff_7$.

\begin{table*}[ht]
	\caption{Parameters of MDS EAQECCs with Minimum Distance $\delta = k+1$ and $q>2$.}
	\label{table2}
	\renewcommand{\arraystretch}{1.2}
	\centering
	\begin{tabular}{ccccc}
		\toprule
		No. & Length $n$ & Dimension $\kappa$ & Pre-shared Pairs $c$ & Constraints\\
		\midrule
		$1$ & $q^2$ &  $q^2-2k+z^2$ & $z^2$ & $1\leq z<\lfloor q/2\rfloor$,\\
		& &&& $zq\leq k<(z+1)q-z-1$\\
	\midrule
		$2$ & $q^2-1$ &  $q^2-1-2k+z^2$ & $z^2$ & $1\leq z<\lfloor \frac{q^2-1}{2q}\rfloor$,\\
		&&&& $z+f+1<q$,\\
		&&&& $zq\leq k<(z+1)q-z-f-1$\\
	\midrule
		$3$ & $q^2-s(q+1)$ &  $q^2-s(q+1)$ & $2z^2$ & $1\leq z<\lfloor \frac{q^2-s(q+1)}{2q}\rfloor$,\\
		&&$ -2k+2z^2$&& $z+f+1<q$, $f\geq z$,\\
		&&&& $s=\gcd(q-f-1,q-1)$,\\
		&&&& $zq\leq k<(z+1)q-z-f-1$\\
	\midrule
		$4$ & $q^2-s(q+1)$ &  $q^2-s(q+1)$ & $z^2+zf$ & $1\leq z<\lfloor \frac{q^2-s(q+1)}{2q}\rfloor$,\\
		&&$ -2k+z^2+zf$&& $z+f+1<q$, $1\leq f<z$,\\
		&&&& $s=\gcd(q-f-1,q-1)$,\\
		&&&& $zq\leq k<(z+1)q-z-f-1$\\
\midrule
		$5$ & $(q+1)(q-1-s)$ &  $(q+1)(q-1-s)$ & $2z^2$ & $1\leq z<\lfloor \frac{(q+1)(q-1-s)}{2q}\rfloor$,\\
		&&$ -2k+2z^2$&& $z+f+1<q$, $f\geq z$,\\
		&&&& $q-f-1<m<q-1$\\
		&&&& $s=\gcd(m-q+f+1,q-1)$,\\
		&&&& $zq\leq k<(z+1)q-z-f-1$\\
	\midrule
		$6$ & $(q+1)(q-1-s)$ &  $(q+1)(q-1-s)$ & $z^2+zf$ & $1\leq z<\lfloor \frac{(q+1)(q-1-s)}{2q}\rfloor$,\\
		&&$ -2k+z^2+zf$&& $z+f+1<q$, $1\leq f<z$,\\
		&&&& $q-f-1<m<q-1$\\
		&&&& $s=\gcd(m-q+f+1,q-1)$,\\
		&&&& $zq\leq k<(z+1)q-z-f-1$\\
\midrule
$7$ &$n$ &  $n-2k+c$ & $2\le c\le k$ & $(n-1)$ divides ${q^2-1}, \quad n \not=q^2$,\\
&&&& $1\leq k \leq \lfloor \frac{n-2}{q+1}\rfloor+2$\\
\midrule
$8$ &$tq$ &  $tq-2k+c$ & $2\le c\le k$ & $1\le t\le q-1, \quad 1\leq k \leq \lfloor \frac{tq-2}{q+1}\rfloor+2$,\\
\midrule
$9$ & $(t+1)n_0+1$ &  $(t+1)n_0+1-2k+c$ & $2\le c\le k$ & $1\leq k \leq \lfloor \frac{(t+1)n_0-1}{q+1}\rfloor+2$, $n_0$ divides $(q^2-1)$,\\
&& & & $n_2=\frac{n_0}{\gcd(n_0,q+1)}, \quad 1\leq t \leq\frac{q-1}{n_2}-2$ \\
\bottomrule
\end{tabular}
\end{table*}

\begin{table*}[ht]
	\caption{Parameters of MDS EAQECCs over $\Ff_7$ with minimum distance $\delta > 7$.}
	\label{table3}
	\renewcommand{\arraystretch}{1.2}
	\centering
	\begin{tabular}{lc | lc | lc | lc }
		\toprule
		$[[n,\kappa,\delta;c]]_7$ & Reference & $[[n,\kappa,\delta;c]]_7$ & Reference&
		$[[n,\kappa,\delta;c]]_7$ & Reference&
		$[[n,\kappa,\delta;c]]_7$ & Reference\\
		\midrule
		$[[50,36,8;0]]_7$ & \cite{Guardia2011} &
		
		$[[50,42,9;8]]_7$ &\cite{sari2021} &
		
		$[[50,41,10;9]]_7$ &\cite{sari2021} &
		
		$[[50,40,11;10]]_7$ &\cite{sari2021} \\
		
		$[[50,39,12;11]]_7$ &\cite{sari2021} &
		
		$[[50,38,13;12]]_7$ &\cite{sari2021} &
		
		$[[50,37,14;13]]_7$ &\cite{sari2021} &
		
		$[[50,36,15;14]]_7$ &\cite{sari2021} \\
		
		$[[50,35,16;15]]_7$ &\cite{sari2021} &
		
		$[[50,34,17;16]]_7$ &\cite{sari2021} &
		
		$[[50,33,18;17]]_7$ &\cite{sari2021} &
		
		$[[50,32,19;18]]_7$ &\cite{sari2021} \\
		
		$[[50,31,20;19]]_7$ &\cite{sari2021} &
		
		$[[50,30,21;20]]_7$ &\cite{sari2021} &
		
		$[[50,29,22;21]]_7$ &\cite{sari2021} &
		
		$[[50,28,23;22]]_7$ &\cite{sari2021} \\
		
		$[[50,27,24;23]]_7$ &\cite{sari2021} &
		
		$[[50,26,25;24]]_7$ &\cite{sari2021} &
		
		$[[49,36,8;1]]_7$ &\cite{Fan2016} &
		
		$[[49,34,9;1]]_7$ &\cite{Fan2016} \\
		
		$[[49,32,10;1]]_7$ &\cite{Fan2016} &
		
		$[[49,30,11;1]]_7$ &\cite{Fan2016} &
		
		$[[49,28,12;1]]_7$ &\cite{Fan2016} &
		
		$[[49,26,13;1]]_7$ &\cite{Fan2016} \\
		
		$[[25,18,8;7]]_7$ &\cite{Fan2016} &
		
		$[[25,17,9;8]]_7$ &\cite{Fan2016} &
		
		$[[25,16,10;9]]_7$ &\cite{Fan2016} &
		
		$[[25,15,11;10]]_7$ &\cite{Fan2016} \\
		
		$[[25,14,12;11]]_7$ &\cite{Fan2016} &
		
		$[[25,13,13;12]]_7$ &\cite{Fan2016} &
		
		$[[25,13,9;4]]_7$ &\cite{Wang2020} &
		
		$[[25,9,11;4]]_7$ &\cite{Wang2020} \\
		
		$[[25,5,13;4]]_7$ &\cite{Wang2020} &
		
		$[[24,12,8;2]]_7$ &\cite{Fan2016} &
		
		$[[24,10,9;2]]_7$ &\cite{Fan2016} &
		
		$[[24,8,10;2]]_7$ &\cite{Fan2016} \\
		
		$[[24,6,12;4]]_7$ &\cite{Chen2017} &
		
		$[[24,4,13;4]]_7$ &\cite{Chen2017} &
		
		$[[49,25,15;4]]_7$ & New &
		
		$[[49,23,16;4]]_7$ & New \\
		
		$[[49,21,17;4]]_7$ & New &
		
		$[[49,19,18;4]]_7$ & New &
		
		$[[49,16,22;9]]_7$ & New &
		
		$[[49,14,23;9]]_7$ & New \\
		
		$[[49,12,24;9]]_7$ & New &
		
		$[[41,29,8;2]]_7$ & New &
		
		$[[41,27,9;2]]_7$ & New &
		
		$[[41,25,10;2]]_7$ & New \\
		
		$[[41,23,11;2]]_7$ & New &
		
		$[[41,19,15;6]]_7$ & New &
		
		$[[41,17,16;6]]_7$ & New &
		
		$[[41,15,17;6]]_7$ & New \\
		
		$[[33,21,8;2]]_7$ & New &
		
		$[[33,19,9;2]]_7$ & New &
		
		$[[33,17,10;2]]_7$ & New &
		
		$[[33,10,16;8]]_7$ & New \\
		
		$[[25,13,8;2]]_7$ & New &
		
		$[[25,11,9;2]]_7$ & New &
		
		&  &
		
		&  \\
		
		\bottomrule
	\end{tabular}
\end{table*}

\section{Concluding remarks}\label{sec:5}

In this paper, we have studied Hermitian hulls of linear codes from cyclic codes and related GRS codes as well as from two-point rational AG codes. We have explored linear codes whose Hermitian hulls are MDS and construct a number of families of linear codes with such Hermitian hulls. We identify the following directions as interesting and worthy of further studies.
\begin{itemize}
    \item Construct, from cyclic codes and related GRS codes, new families of linear codes whose Hermitian hulls are MDS. The focus can be on codes with new parameters as their lengths and dimensions vary or on codes of the same length but having larger dimensions.
    \item Study the Hermitian hull of an AG code in a more general setting. One can, for instance, look for ways to enlarge the dimensions of two-point rational AG codes.
    \item Discover new families of two-point rational AG codes with different lengths and dimensions or with the same length but larger dimension.
    \item Construct AG codes, whose Hermitian hulls can be fully characterized, from more general algebraic curves.
    \item Codes under the Galois inner product have been studied in \cite{Cao2021, Islam2023a}. Codes whose hulls are MDS under the Galois inner product, beyond the Euclidean and Hermitian cases, may have some potential application in coding theory, including in quantum error-control. The claim stated as \cite[Corollary 2.5]{Liu2019}, however, must be firmly established before one can proceed to link the parameters of Galois hulls directly to those of EAQECCs. The classical codes must first be shown to correspond to a set of quantum error operators that the resulting quantum codes can handle properly.
\end{itemize}

%\bibliographystyle{IEEEtran}
%\bibliography{Hull}
% Generated by IEEEtran.bst, version: 1.14 (2015/08/26)

\end{document}